\newcolumntype{L}[1]{>{\raggedright\let\newline\\\arraybackslash\hspace{0pt}}m{#1}}
\newcolumntype{C}[1]{>{\centering\let\newline\\\arraybackslash\hspace{0pt}}m{#1}}
\newcolumntype{R}[1]{>{\raggedleft\let\newline\\\arraybackslash\hspace{0pt}}m{#1}}
\newtheorem{theo}{Theorem}[section]
\newtheorem{lemma}[theo]{Lemma}
\newtheorem{cor}[theo]{Corollary}
\theoremstyle{definition}
\newtheorem{defn}[theo]{Definition}
\theoremstyle{plain}
\newtheorem{fact}{Fact}
\newtheorem{conj}{Conjecture}[section]
\renewcommand{\d}{\mathrm{d}}
\newcommand{\BQP}{\mathsf{BQP}}
\newcommand{\BPP}{\mathsf{BPP}}
\newcommand{\PH}{\mathsf{PH}}
\newcommand{\PP}{\mathsf{PP}}
\newcommand{\NP}{\mathsf{NP}}
\newcommand{\PostBQP}{\mathsf{PostBQP}}
\newcommand{\eps}{\epsilon}
\newcommand{\DQC}{\mathsf{DQC}}
\newcommand{\IQP}{\mathsf{IQP}}
\newcommand{\Gap}{\mathsf{Gap}}
\renewcommand{\epsilon}{\varepsilon}
\def\eq#1{Eq. \!\eqref{#1}}
\def\braket#1#2{\langle #1 | #2 \rangle}
\def\set#1{\{#1\}}
\def\mat#1{\left(\begin{matrix} #1 \end{matrix}\right)}
\def\PWEAK{\mathsf{PWEAK}}
\def\cP{\mathcal P}
\def\cQ{\mathcal Q}
\def\bbC{\mathbb C}
\def\bbR{\mathbb R}
\def\bbZ{\mathbb Z}
\setlist[enumerate]{itemsep=-1mm}
\def\ShowAuthNotes{1}
\newcommand{\authnote}[2]{\ \\ \textcolor{red}{\parbox{0.9\linewidth}{[{\footnotesize {\bf #1:} { {#2}}}]}}\newline}
\newcommand{\authnote}[2]{}
\let\svfootnoterule\footnoterule
\renewcommand\footnoterule{\vfill\svfootnoterule}
\newcommand{\bra}[1]{{\left\langle{#1}\right\vert}}
\newcommand{\ket}[1]{{\left\vert{#1}\right\rangle}}
\newcommand{\qw}[1][-1]{\ar @{-} [0,#1]}
\newcommand{\qwx}[1][-1]{\ar @{-} [#1,0]}
\newcommand{\gate}[1]{*+<.6em>{#1} \POS ="i","i"+UR;"i"+UL **\dir{-};"i"+DL **\dir{-};"i"+DR **\dir{-};"i"+UR **\dir{-},"i" \qw}
\newcommand{\meter}{*=<1.8em,1.4em>{\xy ="j","j"-<.778em,.322em>;{"j"+<.778em,-.322em> \ellipse ur,_{}},"j"-<0em,.4em>;p+<.5em,.9em> **\dir{-},"j"+<2.2em,2.2em>*{},"j"-<2.2em,2.2em>*{} \endxy} \POS ="i","i"+UR;"i"+UL **\dir{-};"i"+DL **\dir{-};"i"+DR **\dir{-};"i"+UR **\dir{-},"i" \qw}
\newcommand{\control}{*!<0em,.025em>-=-<.2em>{\bullet}}
\newcommand{\ctrl}[1]{\control \qwx[#1] \qw}
\newcommand{\multigate}[2]{*+<1em,.9em>{\hphantom{#2}} \POS [0,0]="i",[0,0].[#1,0]="e",!C *{#2},"e"+UR;"e"+UL **\dir{-};"e"+DL **\dir{-};"e"+DR **\dir{-};"e"+UR **\dir{-},"i" \qw}
\newcommand{\ghost}[1]{*+<1em,.9em>{\hphantom{#1}} \qw}
\newcommand{\rstick}[1]{*!L!<-.5em,0em>=<0em>{#1}}
\newcommand{\lstick}[1]{*!R!<.5em,0em>=<0em>{#1}}
\newcommand{\Qcircuit}{\xymatrix @*=<0em>}
\begin{document}
%%%%%%%%%%%%%%%%%%%%%%%%%%%%%%%%%%%%%%%%%%%%%%%%%%%%%%%%%%%%% 

\title{Complexity Classification of Conjugated Clifford Circuits}

\author[1]{Adam Bouland\thanks{adam@csail.mit.edu}}
\author[2,3]{Joseph F. Fitzsimons\thanks{joe.fitzsimons@nus.edu.sg}}
\author[4]{Dax Enshan Koh\thanks{daxkoh@mit.edu}}
\affil[1]{\small Department of Electrical Engineering and Computer Sciences, UC Berkeley, Berkeley, CA, USA}
\affil[2]{\small Singapore University of Technology and Design, 8 Somapah Road, Singapore 487372}
\affil[3]{\small Centre for Quantum Technologies, National University of Singapore, 3 Science Drive 2, Singapore 117543}
\affil[4]{\small Department of Mathematics, Massachusetts Institute of Technology, Cambridge, Massachusetts 02139, USA}

\date{}

%%%%%%%%%%%%%%%%%%%%%%%%%%%%%%%%%%%%%%%%%%%%%%%%%%%%%%%%%%%%% 

\clearpage\maketitle

\begin{abstract}
Clifford circuits --  i.e.\ circuits composed of only CNOT, Hadamard, and $\pi/4$ phase gates -- play a central role in the study of quantum computation.
However, their computational power is limited: a well-known result of Gottesman and Knill \cite{gottesmanknill} states that Clifford circuits are efficiently classically simulable. 
We show that in contrast, ``conjugated Clifford circuits" (CCCs) -- where one additionally conjugates every qubit by the same one-qubit gate $U$ -- can perform hard sampling tasks. 
In particular, we fully classify the computational power of CCCs by showing that essentially any non-Clifford conjugating unitary $U$  can give rise to sampling tasks which cannot be efficiently classically simulated to constant multiplicative error, unless the polynomial hierarchy collapses. 
Furthermore, by standard techniques, this hardness result can be extended to allow for the more realistic model of constant additive error, under a plausible complexity-theoretic conjecture.
This work can be seen as progress towards classifying the computational power of all restricted quantum gate sets.

\end{abstract}

\section{Introduction}

Quantum computers hold the promise of efficiently solving certain problems, such as factoring integers \cite{shor1999polynomial}, which are believed to be intractable for classical computers.
However, experimentally implementing many of these quantum algorithms is very difficult.
For instance, the largest number factored to date using Shor's algorithm is 21 \cite{martin2012experimental}. These considerations have led to an intense interest in algorithms which can be more easily implemented with near-term quantum devices, as well as a corresponding interest in the difficulty of these computational tasks for classical computers. Some prominent examples of such models are constant-depth quantum circuits \cite{terhaldivincenzo,bravyi2017quantum} and non-adaptive linear optics \cite{aaronsonboson}.

In many of these constructions, one can show that these ``weak" quantum devices can perform \emph{sampling} tasks which cannot be efficiently simulated classically, even though they may not be known to be capable of performing difficult \emph{decision} tasks. Such arguments were first put forth by Bremner, Jozsa, and Shepherd \cite{BJS2010} and Aaronson and Arkhipov \cite{aaronsonboson}, who showed that exactly simulating the sampling tasks performed by weak devices is impossible assuming the polynomial hierarchy is infinite. 
The proofs of these results use the fact that the output probabilities of quantum circuits can be very difficult to compute -- in fact they can be $\Gap\mathsf{P}$-hard and therefore $\#\mathsf{P}$-hard to approximate. In contrast the output probabilities of classical samplers can be approximated in $\BPP^\NP$ by Stockmeyer's approximate counting theorem \cite{stockmeyer}, and therefore lie in the polynomial hierarchy.
Hence a classical simulation of these circuits would collapse the polynomial hierarchy to the third level by Toda's Theorem \cite{Toda91}.
Similar hardness results have been shown for many other models of quantum computation \cite{terhaldivincenzo,morimae2014hardness,feffermanfourier,boulandccc2016,farhi2016quantum,boixo2016characterizing,bremner2016achieving,bremner2017upcoming}.

A curious feature of many of these ``weak" models of quantum computation is that they can be implemented using non-universal gate sets.
That is, despite being able to perform sampling problems which appear to be outside of $\BPP$, these models are not themselves known to be capable of universal quantum computation. 
In short these models of quantum computation seem to be ``quantum-intermediate" between $\BPP$ and $\BQP$, analogous to the $\NP$-intermediate problems which are guaranteed to exist by Ladner's theorem \cite{Ladner}. From the standpoint of computational complexity, it is therefore natural to study this intermediate space between $\BPP$ and $\BQP$, and to classify its properties. 

One natural way to explore the space between $\BPP$ and $\BQP$ is to classify the power of all possible quantum gate sets over qubits.
The Solovay-Kitaev Theorem states that all universal quantum gate sets, i.e.\ those which densely generate the full unitary group, have equivalent computational power \cite{dawson2006solovay}. Therefore the interesting gates sets to classify are those which are non-universal.
However just because a gate set is non-universal does not imply it is weaker than $\BQP$ -- in fact some non-universal gates are known to be capable of universality in an ``encoded" sense, and therefore have the same computational power as $\BQP$ \cite{jozsa2008matchgates}.
Other non-universal gate sets are efficiently classically simulable \cite{gottesmanknill}, while others seem to lie ``between $\BPP$ and $\BQP$" in that they are believed to be neither universal for $\BQP$ nor efficiently classically simulable \cite{BJS2010}.
It is a natural open problem to fully classify all restricted gate sets into these categories according to their computational complexity.

This is a challenging problem, and to date there has only been partial progress towards this classification.
One immediate difficulty in approaching this problem is that there is not a known classification of all possible non-universal gate sets. In particular this would require classifying the discrete subgroups of $SU(2^n)$ for all $n\in\mathbb{N}$, which to date has only been solved for $n\leq 2$ \cite{Hanany2001}.
Therefore existing results have characterized the power of modifications of known intermediate gate sets, such as commuting circuits and linear optical elements \cite{bouland2014generation,boulandccc2016,oszmaniec2017universal}.
Others works have classified the classical subsets of gates \cite{reversibleclassification,cliffordclassification}, or else given sufficient criteria for universality so as to rule out the existence of certain intermediate families \cite{sawicki2016criteria}.
A complete classification of this space ``between $\BPP$ and $\BQP$" would require a major improvement in our understanding of universality as well as the types of computational hardness possible between $\BPP$ and $\BQP$.

One well-known example of a non-universal family of quantum gates is the Clifford group. Clifford circuits -- i.e. circuits composed of merely CNOT, Hadamard and Phase gates -- are a discrete subgroup of quantum gates which play an important role in quantum error correction \cite{gottesman1997thesis,bravyi2005universal}, measurement-based quantum computing \cite{raussendorf2001one,raussendorf2003measurement,briegel2009measurement}, and randomized benchmarking \cite{magesan2011scalable}. 
However a well-known result of Gottesman and Knill states that circuits composed of Clifford elements are efficiently classically simulable \cite{gottesmanknill,aaronson2004clifford}. 
That is, suppose one begins in the state $\ket{0}^{\otimes n}$, applies polynomially many gates from the set $\mathrm{CNOT}, \mathrm{H}, \mathrm{S}$, then measures in the computational basis. Then the Gottesman-Knill theorem states that one can compute the probability a string $y$ is output by such a circuit in classical polynomial time. One can also sample from the same probability distribution on strings as this circuit as well.
A key part of the proof of this result is that the quantum state at intermediate stages of the circuit is always a ``stabilizer state" -- i.e. the state is uniquely described by its set of stabilizers in the Pauli group -- and therefore has a compact representation.
Therefore the Clifford group is incapable of universal quantum computation (assuming $\BPP\neq \BQP$).

In this work, we will study the power of a related family of non-universal gates, known as \emph{Conjugated Clifford gates}, which we introduce below. These gates are non-universal by construction, but not known to be efficiently classically simulable either. Our main result will be to \emph{fully classify} the computational power of this family of intermediate gate sets.

\subsection{Our results}

This paper considers a new ``weak" model of quantum computation which we call ``conjugated Clifford circuits" (CCCs). 
In this model, we consider the power of quantum circuits which begin in the state $\ket{0}^{\otimes n}$, and then apply gates from the set $(U^\dagger \otimes U^\dagger)(\mathrm{CNOT})(U \otimes U), U^\dag HU,U^\dag SU$ where $U$ is a fixed one-qubit gate. In other words, we consider the power of Clifford circuits which are conjugated by an identical one-qubit gate $U$ on each qubit. These gates manifestly perform a discrete subset of unitaries so this gate set is clearly not universal.

Although this transformation preserves the non-universality of the Clifford group, it is unclear if it preserves its computational power.
The presence of generic conjugating unitaries (even the same $U$ on each qubit, as in this model) breaks the Gottesman-Knill simulation algorithm \cite{gottesmanknill}, as the inputs and outputs of the circuit are not stabilizer states/measurements. Hence the intermediate states of the circuit are no longer efficiently representable by the stabilizer formalism. This, combined with prior results showing hardness for other modified versions of Clifford circuits \cite{jozsa2014clifford,koh2015further}, leads one to suspect that CCCs may not be efficiently classically simulable.
However prior to this work no hardness results were known for this model.

In this work, we confirm this intuition and provide two results in this direction.
First, we provide a \emph{complete classification} of the power of CCCs according to the choice of $U$.
We do this by showing that \emph{any} $U$ which is not efficiently classically simulable by the Gottesman-Knill theorem suffices to perform hard sampling problems with CCCs\footnote{More precisely, we show that any $U$ that cannot be written as a Clifford times a $Z$-rotation suffices to perform hard sampling problems with CCCs. See Theorem \ref{cor:classificationWeak} for the exact statement.}. That is, for generic $U$, CCCs cannot be efficiently classically simulated to constant multiplicative error by a classical computer unless the polynomial hierarchy collapses. 
This result can be seen as progress towards classifying the computational complexity of restricted gate sets. 
Indeed, given a non-universal gate set $G$, a natural question is to classify the power of $G$ when conjugated by the same one-qubit unitariy $U$ on each qubit, as this transformation preserves non-universality.
Our work resolves this question for one of the most prominent examples of non-universal gate sets, namely the Clifford group.
As few examples of non-universal gate sets are known\footnote{The only examples to our knowledge are matchgates, Clifford gates, diagonal gates, and subsets thereof.}, this closes one of the major gaps in our understanding of intermediate gate sets. Of course this does not complete the complexity classification of all gate sets, as there is no known classification of all possible non-universal gate sets. However it does make progress towards this goal.

Second, we show that under an additional complexity-theoretic conjecture, classical computers cannot efficiently simulate CCCs to constant error in total variation distance. This is a more experimentally achievable model of error for noisy quantum computations.
The proof of this result uses standard techniques introduced by Aaronson and Arkhipov \cite{aaronsonboson}, which have also been used in other models \cite{feffermanfourier,bremner2016average,boixo2016characterizing,bremner2016achieving,morimae2017hardness,bremner2017upcoming}. 

This second result is interesting for two reasons.
First, it means our results may have relevance to the empirical demonstration of quantum advantage (sometimes referred to as ``quantum supremacy") \cite{preskill2012quantum,boixo2016characterizing,aaronson2016complexity}, as our results are robust to noise.
Second, from the perspective of computational complexity, it gives yet another conjecture upon which one can base the supremacy of noisy quantum devices.
As is the case with other quantum supremacy proposals \cite{aaronsonboson,feffermanfourier, bremner2016average,morimae2017hardness,bremner2017upcoming}, in order to show that simulation of CCCs to additive error still collapses the polynomial hierarchy, we need an additional conjecture stating that the output probabilities of these circuits are hard to approximate on average.
Our conjecture essentially states that for most Clifford circuits $V$ and most one-qubit unitaries $U$, it is $\#\mathsf{P}$-hard to approximate a constant fraction of the output probabilities of the CCC $U^{\otimes n} V (U^\dagger)^{\otimes n}$ to constant multiplicative error.
We prove that this conjecture is true in the worst case -- in fact, for all non-Clifford $U$, there exists a $V$ such that some outputs are $\#\mathsf{P}$-hard to compute to multiplicative error.
However, it remains open to extend this hardness result to the average case, as is the case with other supremacy proposals as well \cite{aaronsonboson,feffermanfourier, bremner2016average,morimae2017hardness,bremner2017upcoming}.
To the best of our knowledge our conjecture is independent of the conjectures used to establish other quantum advantage results such as boson sampling \cite{aaronsonboson}, Fourier sampling \cite{feffermanfourier} or IQP \cite{bremner2016average,bremner2016achieving}.
Therefore our results can be seen as establishing an alternative basis for belief in the advantage of noisy quantum devices over classical computation.

One final motivation for this work is that CCCs might admit a simpler fault-tolerant implementation than universal quantum computing, which we conjecture to be the case. It is well-known that many stabilizer error-correcting codes, such as the 5-qubit and 7-qubit codes \cite{laflamme1996perfect,5qubitcode,steane1996multiple}, admit transversal Clifford operations \cite{gottesman1997thesis}. That is, performing fault-tolerant Clifford operations on the encoded logical qubits can be done in a very simple manner -- by simply performing the corresponding Clifford operation on the physical qubits. This is manifestly fault-tolerant, in that an error on one physical qubit does not ``spread" to more than 1 qubit when applying the gate. 
In contrast, performing non-Clifford operations fault-tolerantly on such codes requires substantially larger (and non-transversal) circuits -- and therefore the non-transversal operations are often the most resource intensive. 
The challenge in fault-tolerantly implementing CCCs therefore lies in performing the initial state preparation and measurement.
Initial preparation of non-stabilizer states in these codes is equivalent to the challenge of producing magic states, which are already known to boost Clifford circuits to universality using adaptive Clifford circuits \cite{bravyi2005universal, bravyi2012magic} (in contrast our construction would only need non-adaptive Clifford circuits with magic states). 
Likewise, measuring in a non-Clifford basis would require performing non-Clifford one-qubit gates prior to fault-tolerant measurement in the computational basis.
Therefore the state preparation/measurement would be the challenging part of fault-tolerantly implementing CCCs in codes with transversal Cliffords.
It remains open if there exists a code with transversal conjugated Cliffords\footnote{Of course one can always ``rotate" a code with transversal Clifford operations to obtain a code with transversal conjugated Cliffords. If the code previously had logical states $\ket{0}_L,\ket{1}_L$, then by setting the states $\ket{0}'_L = U^\dagger_L \ket{0}_L$ and $\ket{1}'_L = U^\dagger_L \ket{1}_L$, one obtains a code in which the conjugated Clifford gates (conjugated by $U$) are transversal. However having the ability to efficiently fault-tolerantly prepare $\ket{0}_L$ in the old code does not imply the same ability to prepare $\ket{0}'_L$ in the new code.} and easy preparation and measurement in the required basis.
Such a code would not be ruled out by the Eastin-Knill Theorem  \cite{eastin2009restrictions}, which states that the set of transversal gates must be discrete for all codes which correct arbitrary one qubit errors.
Of course this is not the main motivation for exploring the power of this model -- which is primarily to classify the space between $\BPP$ and $\BQP$ -- but an easier fault-tolerant implementation could be an unexpected bonus of our results.

\subsection{Proof Techniques}

To prove these results, we use several different techniques.

\subsubsection{Proof Techniques: classification of exact sampling hardness}
To prove exact (or multiplicative) sampling hardness for CCCs for essentially all non-Clifford $U$, we use the notion of postselection introduced by Aaronson \cite{aaronson2005quantum}. Postselection is the (non-physical) ability to discard all runs of the computation which do not achieve some particular outcomes.
Our proof works by showing that postselecting such circuits allows them to perform universal quantum computation. Hardness then follows from known techniques \cite{aaronson2005quantum, BJS2010,aaronsonboson}. 

One technical subtlety that we face in this proof, which is not present in other results, is that our postselected gadgets perform operations which are not closed under inversion. This means one cannot use the Solovay-Kitaev theorem to change quantum gate sets \cite{dawson2006solovay}.
This is a necessary step in the proof that $\PostBQP=\PP$ \cite{aaronson2005quantum}, which is a key part of the hardness proof (see \cite{boulandccc2016}). Fortunately, it turns out that we can get away without inverses due to a recent inverse-free Solovay-Kitaev theorem of Sardharwalla \emph{et al.} \cite{sardharwalla2016universal}, which removes the needs for inverses if the gate set contains the Paulis. Our result would have been much more difficult to obtain without this prior result.
To our knowledge this is the first application of their result to structural complexity. 

A further difficulty in the classification proof is that the postselection gadgets we derive do not work for all non-Clifford $U$. In general, most postselection gadgets give rise to non-unitary operations, and for technical reasons we need to work with unitary postselection gadgets to apply the results of \cite{sardharwalla2016universal}. Therefore, we instead use several different gadgets which cover different portions of the parameter space of $U$'s. Our initial proof of this fact used a total of seven postselection gadgets found by hand. We later simplified this to two postselection gadgets by conducting a brute-force search for suitable gadgets using Christopher Granade and Ben Criger's QuaEC package \cite{cgranade}. We include this simplified proof in this writeup.

A final difficulty that one often faces with postselected universality proofs is that one must show that the postselection gadgets boost the original gate set to universality. In general this is a nontrivial task; there is no simple test of whether a gate set is universal, though some sufficient (but not necessary) criteria are known \cite{sawicki2016criteria}. 
Prior gate set classification theorems have solved this universality problem using representation theory \cite{bouland2014generation,sawicki2016criteria} or Lie theory \cite{boulandccc2016,oszmaniec2017universal}.
However, in our work we are able to make use of a powerful fact: namely that the Clifford group plus \emph{any} non-Clifford unitary is universal.
This follows from results of Nebe, Rains and Sloane \cite{nebe2001invariants,nebe2006self,cliffordstackexchange} classifying the invariants of the Clifford group\footnote{However we note that in our proofs we will only use the fact that the Clifford group plus any non-Clifford element is universal on a qubit. This version of the theorem admits a direct proof using the representation theory of $SU(2)$.}.
As a result our postselected universality proofs are much simpler than in other gate set classification theorems.

\subsubsection{Proof techniques: additive error}

To prove hardness of simulation to additive error, we follow the techniques of \cite{aaronsonboson,bremner2016average,feffermanfourier,morimae2017hardness}. 
In these works, to show hardness of sampling from some probability distribution with additive error, one combines three different ingredients. 
The first is anti-concentration -- showing that for these circuits, the output probabilities in some large set $T$ are somewhat large. 
Second, one uses Markov's inequality to argue that, since the simulation error sums to $\epsilon$, on some other large set of output probabilities $S$, the error must be below a constant multiple of the average. 
If $S$ and $T$ are both large, they must have some intersection -- and on this intersection $S\cap T$, the imagined classical simulation is not only a simulation to additive error, but also to multiplicative error as well (since the output probability in question is above some minimum). 
Therefore a simulation to some amount $\epsilon$ of additive error implies a multiplicative simulation to the output probabilities on a constant fraction of the outputs. 
The impossibility of such a simulation is then obtained by assuming that computing these output probabilities is multiplicatively hard on average. In particular, one assumes that it is a $\#\mathsf{P}$-hard task to compute the output probability on $|S\cap T|/2^n$ -fraction of the outputs. This leads to a collapse of the polynomial hierarchy by known techniques \cite{aaronsonboson, BJS2010}.

We follow this technique to show hardness of sampling with additive error. 
In our case, the anticoncentration theorem follows from the fact that the Clifford group is a ``2-design" \cite{webbclifford,zhuclifford} -- i.e. a random Clifford circuit behaves equivalently to a random unitary up to its second moment -- and therefore must anticoncentrate, as a random unitary does (the fact that unitary designs anticoncentrate was also shown independently by several groups \cite{hangleiter2017anti, bremner2017upcoming,saeed2018upcoming}). 
This is similar to the hardness results for $\IQP$ \cite{bremner2016average} and $\DQC 1$ \cite{morimae2017hardness}, in which the authors also prove their corresponding anticoncentration theorems. In contrast it is open to prove the anticoncentration theorem used for Boson Sampling and Fourier Sampling \cite{aaronsonboson,feffermanfourier}, though these models have other complexity-theoretic advantages\footnote{For instance, for these models it is known to be $\#\mathsf{P}$-hard to \emph{exactly} compute most output probabilities of their corresponding circuit. This is a necessary but not sufficient condition for the supremacy conjectures to be true, which require it to be $\#\mathsf{P}$-hard to \emph{approximately} compute most output probabilities of their corresponding circuit.}. 
Therefore the only assumption needed is the hardness-on-average assumption. 
We also show that our hardness assumption is true for worst-case inputs. This result follows from combining known facts about $\BQP$ with the classification theorem for exact sampling hardness.

\subsection{Relation to other works on modified Clifford circuits}

While we previously discussed the relation of our results to prior work on gate set classification and sampling problems, here we compare our results to prior work on Clifford circuits. 
We are not the first to consider the power of modified Clifford circuits. 
Jozsa and van den Nest \cite{jozsa2014clifford} and Koh \cite{koh2015further}, categorized the computational power of a number of modified versions of Clifford circuits. 
The closest related result is the statement in \cite{jozsa2014clifford} that if the input state to a Clifford circuit is allowed to be an arbitrary tensor product of one-qubit states, then such circuits cannot be efficiently classically simulated unless the polynomial hierarchy collapses. 
Their hardness result uses states of the form $\ket{0}^{\otimes n/2} \ket{\alpha}^{\otimes n/2}$, where $\ket{\alpha} = \cos(\pi/8) \ket{0}+i\sin(\pi/8)\ket{1}$ is a magic state. 
They achieve postselected hardness via the use of magic states to perform T gates, using a well-known construction (see e.g. \cite{bravyi2005universal}). 
So in the \cite{jozsa2014clifford} construction there are different input states on different qubits. 
In contrast, our result requires the same input state on every qubit -- as well as measurement in that basis at the end of the circuit. 
This ensures our modified circuit can be interpreted as the action of a discrete gate set, and therefore our result has relevance for the classification of the power of non-universal gate sets.

\section{Preliminaries}

We denote the single-qubit \textit{Pauli matrices} by $X = \sigma_x = \mat{0 & 1 \\ 1 & 0}$, $Y=\sigma_y = \mat{0 & -i \\ i & 0}$, $Z= \sigma_z = \mat{1 & 0 \\ 0 & -1}$, and $I = \mat{1 & 0 \\ 0 & 1}$. The $\pm 1$-eigenstates of $Z$ are denoted by $\ket 0$ and $\ket 1$ respectively.
The \textit{rotation operator} about an axis $t \in \{x,y,z\}$ with an angle $\theta \in [0,2\pi)$ is
\begin{equation}
R_t (\theta) = e^{-i \theta \sigma_t/2} = \cos(\theta/2) I - i \sin(\theta/2) \sigma_t.
\end{equation}
We will use the fact that any single-qubit unitary operator $U$ can be written as
\begin{equation} \label{eq:Udecomposition}
U = e^{i \alpha} R_z(\phi) R_x(\theta) R_z(\lambda),
\end{equation}
where $\alpha, \phi, \theta, \lambda \in [0,2\pi)$ \cite{nielsenchuang}.

For linear operators $A$ and $B$, we write $A \propto B$ to mean that there exists $\alpha \in \bbC\backslash\{0\}$ such that $A = \alpha B$. For linear operators, vectors or complex numbers $a$ and $b$, we write $a \sim b$ to mean that $a$ and $b$ differ only by a global phase, i.e. there exists $\theta \in [0,2\pi)$ such that $a = e^{i\theta} b$. For any subset $S\subseteq \bbR$ and $k \in \bbR$, we write $kS$ to refer to the set $\{kn:n\in S \}$. For example, $k\bbZ = \{kn:n\in \bbZ\}$. We denote the set of odd integers by $\bbZ_{odd}$. We denote the complement of a set $S$ by $S^c$.

\subsection{Clifford circuits and conjugated Clifford circuits}

The $n$-qubit \textit{Pauli group} $\mathcal P_n$ is the set of all operators of the form $i^k P_1 \otimes \ldots \otimes P_n$, where $k \in \{0,1,2,3\}$ and each $P_j$ is a Pauli matrix. The $n$-qubit \textit{Clifford group} is the normalizer of $\mathcal P_n$ in the $n$-qubit unitary group  $\mathcal U_n$, i.e. $\mathcal C_n = \{U \in \mathcal U_n: U \mathcal P_n U^\dag = \mathcal P_n\}$. 

The elements of the Clifford group, called \textit{Clifford operations}, have an alternative characterization: an operation is a Clifford operation if and only if it can be written as a circuit comprising the following gates, called \textit{basic Clifford gates}: \textit{Hadamard}, $\pi/4$ \textit{phase}, and \textit{controlled-NOT} gates, whose matrix representations in the computational basis are  
$$H=\frac{1}{\sqrt 2}\mat{1 & 1 \\ 1 & -1},\quad  S=\mat{1 & 0 \\ 0 & i},\quad \mbox{and}\quad  \mathrm{CNOT} = \mat{1 & 0 & 0 & 0 \\  0 & 1 & 0 & 0 \\ 0 & 0 & 0 & 1 \\ 0 & 0 & 1 & 0}$$ respectively. An example of a non-Clifford gate is the $T$ gate, whose matrix representation is given by $T=\mat{1 & 0 \\ 0 & e^{i \pi/4}}$. We denote the group generated by the single-qubit Clifford gates by $\langle S, H \rangle$.

We will make use of the following fact about Clifford operations.
\begin{fact} \label{fact:RzClifford}
$R_z(\phi)$ is a Clifford operation if and only if $\phi \in \tfrac \pi 2 \bbZ$.
\end{fact}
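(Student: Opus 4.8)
The plan is to prove both directions by examining the action of $R_z(\phi)$ by conjugation on the single-qubit Pauli group $\mathcal{P}_1$, which up to phases is $\{I, X, Y, Z\}$. First I would record the basic conjugation identities: since $R_z(\phi) = e^{-i\phi/2}\,\mathrm{diag}(1, e^{i\phi})$, a direct computation gives $R_z(\phi)\, Z\, R_z(\phi)^\dagger = Z$, while $R_z(\phi)\, X\, R_z(\phi)^\dagger = \cos\phi\, X + \sin\phi\, Y$ and $R_z(\phi)\, Y\, R_z(\phi)^\dagger = -\sin\phi\, X + \cos\phi\, Y$ (a rotation by angle $\phi$ in the $X$-$Y$ plane of the Bloch sphere). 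These three identities are the only facts needed.

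For the ``if'' direction, suppose $\phi \in \tfrac{\pi}{2}\bbZ$. Then $\cos\phi, \sin\phi \in \{0, 1, -1\}$, so the images $R_z(\phi) X R_z(\phi)^\dagger$ and $R_z(\phi) Y R_z(\phi)^\dagger$ are each $\pm X$ or $\pm Y$, hence lie in $\mathcal{P}_1$; together with $Z \mapsto Z$ and $I \mapsto I$ this shows $R_z(\phi)\mathcal{P}_1 R_z(\phi)^\dagger = \mathcal{P}_1$, so $R_z(\phi) \in \mathcal{C}_1$. (Alternatively, one observes $R_z(\pi/2) \sim S$ and invokes that $S$ is a basic Clifford gate, but the direct argument is cleaner and self-contained.)

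For the ``only if'' direction, suppose $R_z(\phi)$ is a Clifford operation. Then $R_z(\phi) X R_z(\phi)^\dagger = \cos\phi\, X + \sin\phi\, Y$ must be an element of $\mathcal{P}_1$, i.e.\ of the form $i^k P$ for a Pauli matrix $P$. Comparing coefficients in the basis $\{I, X, Y, Z\}$: the $I$- and $Z$-components vanish, so $P \in \{X, Y\}$ (it cannot be $I$ since the operator is traceless and nonzero). If $P = X$ we need $\cos\phi \in \{1,-1\}$ and $\sin\phi = 0$, giving $\phi \in \pi\bbZ$; if $P = Y$ we need $\cos\phi = 0$ and $\sin\phi \in \{1,-1\}$, giving $\phi \in \tfrac{\pi}{2} + \pi\bbZ$. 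The union of these two cases is exactly $\phi \in \tfrac{\pi}{2}\bbZ$, as claimed.

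I do not anticipate a genuine obstacle here — this is essentially a bookkeeping lemma. The only point requiring a modicum of care is making sure ``$R_z(\phi)$ normalizes $\mathcal{P}_1$'' is checked on a generating set of $\mathcal{P}_1$ (it suffices to check $X$ and $Z$, since these generate the group up to phase and $Z$ is manifestly fixed), and being careful about the global phase $e^{-i\phi/2}$ in $R_z(\phi)$, which cancels in any conjugation and therefore never affects the argument.
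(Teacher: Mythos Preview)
Your proof is correct. Note that the paper states this as a \emph{Fact} without proof, treating it as a well-known property of the Clifford group; your argument via the conjugation action $R_z(\phi)\,X\,R_z(\phi)^\dagger = \cos\phi\,X + \sin\phi\,Y$ is the standard way to verify it and fills in exactly what the paper omits.
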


A \textit{Clifford circuit} is a circuit that consists of computational basis states being acted on by the basic Clifford gates, before being measured in the computational basis. Without loss of generality, we may assume that the input to the Clifford circuit is the all-zero state $\ket 0^{\otimes n}$. We define conjugated Clifford circuits (CCCs) similarly to Clifford circuits, except that each basic Clifford gate $G$ is replaced by a conjugated basic Clifford gate $(U^{\otimes k})^ \dag g U^{\otimes k}$, where $k =1$ when $g = H, S$ and $k=2$ when $g=\mathrm{CNOT}$. In other words,
\begin{defn} Let $U$ be a single-qubit unitary gate. A $U$-conjugated Clifford circuit ($U$-CCC) on $n$ qubits is defined to be a quantum circuit with the following structure:
\begin{enumerate}
\item Start with $\ket 0^{\otimes n}$.
\item Apply gates from the set $\{U^\dag HU, U^\dag SU,  (U^\dag \otimes U^\dag)\mathrm{CNOT}(U \otimes U) \}$.
\item Measure each qubit in the computational basis.
\end{enumerate}
\end{defn}
\noindent Because the intermediate $U$ and $U^\dag$ gates cancel, we may equivalently describe a $U$-CCC as follows:
\begin{enumerate}
\item Start with $\ket 0^{\otimes n}$.
\item Apply $U^{\otimes n}$.
\item Apply gates from the set $\{H, S, \mathrm{CNOT} \}$.
\item Apply $(U^\dag)^{\otimes n}$.
\item Measure each qubit in the computational basis.
\end{enumerate}

\subsection{Notions of classical simulation of quantum computation}

Let $\cP = \{p_z\}_z$ and $\cQ = \set{q_z}_z$ be (discrete) probability distributions, and let $\epsilon \geq 0$. We say that $\cQ$ is a \textit{multiplicative} $\epsilon$-\textit{approximation} of $\cP$ if for all $z$,
\begin{equation} \label{eq:multiplicativeapprox}
|p_z - q_z| \leq \epsilon p_z.
\end{equation}
We say that $\cQ$ is an \textit{additive} $\epsilon$-\textit{approximation} of $\cP$ if
\begin{equation} \label{eq:additiveapprox}
\frac{1}{2}\sum_z|p_z - q_z| \leq \epsilon.
\end{equation}

Note that any multiplicative $\epsilon$-approximation is also an additive $\epsilon/2$-approximation, since summing \eq{eq:multiplicativeapprox} over all $z$ produces \eq{eq:additiveapprox}. Here the factor of 1/2 is present so that $\epsilon$ is the total variation distance between the probability distributions.

A \textit{weak simulation with multiplicative (additive) error} $\epsilon >0$ of a family of quantum circuits is a classical randomized algorithm that samples from a distribution that is a multiplicative (additive) $\epsilon$-approximation of the output distribution of the circuit.
Note that from an experimental perspective, additive error is the more appropriate choice, since the fault-tolerance theorem merely guarantees additive closeness between the ideal and realized output distributions \cite{aharonov1997fault}.

There are of course other notions of simulability of quantum circuits -- such as strong simulation where one can compute individual output probabilities. We discuss these further in Section \ref{sec:simtypes}.

\subsection{Postselection gadgets}

Our results involve the use of postselection gadgets to simulate unitary operations. In this section, we introduce some terminology to describe these gadgets.

\begin{defn} \label{def:UCCCPostselectionGadget}
Let $U$ be a single-qubit operation. Let $k,l \in \bbZ^+$ with $k>l$. A $k$-\textit{to}-$l$ $U$-\textit{CCC postselection gadget} $G$ is a postselected circuit fragment that performs the following procedure on an $l$-qubit system:
\begin{enumerate}
\item Introduce a set $T$ of $(k-l)$ ancilla registers in the state $\ket{a_1\ldots a_{k-l}}$, where $a_1\ldots a_{k-l} \in \{0,1\}^{k-l}$.
\item Apply $U^{\otimes (k-l)}$ to the set $T$ of registers.
\item Apply a $k$-qubit Clifford operation $\Gamma$ to both the system and ancilla.
\item Choose a subset $S$ of $(k-l)$ registers and apply $(U^{\dag})^{\otimes (k-l)}$ to $S$.
\item Postselect on the subset $S$ of qubits being in the state $\ket{b_1 \ldots b_{k-l}}$, where $b_1\ldots b_{k-l} \in \{0,1\}^{k-l}$.
\end{enumerate}

An example of a $4$-to-$1$ $U$-CCC postselection gadget is the circuit fragment described by the following diagram:
\begin{eqnarray*}
\Qcircuit @C=1em @R=1em {
& \qw &  \qw &  \qw   &\multigate{3}{ \ \Gamma  \ }  & \gate{ U^\dag}   &  \rstick{\bra {b_1}}\qw \\ 
& &\lstick{\ket{a_1}}   & \gate U    &\ghost{ \ \Gamma \ }   & \gate{ U^\dag}   &  \rstick{\bra {b_2}}\qw  \\
&  & \lstick{\ket{a_2}}   & \gate U   &\ghost{ \ \Gamma \ } &      \gate{ U^\dag}   &  \rstick{\bra {b_3}}\qw  &    \\
& &\lstick{\ket{a_3}} & \gate U &  \ghost{ \ C \ }    &  \qw & \qw & \qw & \qw \\  \\
} 
\end{eqnarray*}

Let $G$ be a $U$-CCC postselection gadget as described in Definition \ref{def:UCCCPostselectionGadget}. The \textit{action} $A(G)$ (also denoted $A_G$) of $G$ is defined to be the linear operation that it performs, i.e.
\begin{equation}
A(G) = A_G = \bra{b_1\ldots b_l}_S \left( \prod_{i \in S} U_i^\dag \right) \Gamma \left( \prod_{i \in T} U_i \right) \ket{a_1\ldots a_l}_T ,
\end{equation}
and the \textit{normalized action} of $G$, when it exists, is 
\begin{equation} \label{eq:normalizedAction}
\tilde A_G = \frac{A_G}{(\det A_G)^{2^{-l}}}.
\end{equation}
\end{defn}

\noindent Note that the above normalization is chosen so that $\det \tilde A_G = 1$.

We say that a $U$-CCC postselection gadget $G$ is \textit{unitary} if there exists $\alpha \in \bbC\backslash \{0\}$ and a unitary operator $U$ such that $A_G = \alpha U$. It is straightforward to check that the following are equivalent conditions for gadget unitarity.
\begin{lemma} \label{lem:unitaryCCCgadget}
A $U$-CCC postselection gadget $G$ is unitary if and only if either one of the following holds:
\begin{enumerate}
\item There exists $\gamma >0$ such that $A_G^\dag A_G = \gamma I$,
\item $\tilde A_G^\dag \tilde A_G = I$, i.e. $\tilde A_G $ is unitary.
\end{enumerate}
\end{lemma}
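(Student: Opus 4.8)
The statement is an elementary linear-algebra fact, so the plan is simply to verify the four implications directly, using the one structural feature that matters: by construction $A_G$ is a square $2^l\times 2^l$ operator on the $l$-qubit output register, so I may freely use that a square matrix $M$ with $M^\dagger M = \gamma I$ for some $\gamma>0$ is $\sqrt\gamma$ times a unitary.

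Suppose first that $G$ is unitary, say $A_G = \alpha W$ with $\alpha\in\mathbb{C}\setminus\{0\}$ and $W$ unitary. Then $A_G^\dagger A_G = |\alpha|^2\, W^\dagger W = |\alpha|^2 I$, which gives condition (1) with $\gamma = |\alpha|^2>0$. Moreover $\det A_G = \alpha^{2^l}\det W \neq 0$, so $\tilde A_G$ exists; and since any $2^l$-th root $c$ of $\det A_G$ has $|c| = |\det A_G|^{2^{-l}} = |\alpha|$, we get $\tilde A_G = (\alpha/c)\,W$ with $|\alpha/c| = 1$. A unit-modulus scalar times a unitary is unitary, so condition (2) holds.

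Conversely, if condition (1) holds then $W := A_G/\sqrt\gamma$ is a square matrix with $W^\dagger W = I$, hence unitary, so $A_G = \sqrt\gamma\, W$ exhibits $G$ as unitary (with $\sqrt\gamma\neq 0$). If condition (2) holds --- which presupposes that $\tilde A_G$ exists, i.e.\ that $\det A_G\neq 0$ --- then setting $\alpha := (\det A_G)^{2^{-l}}\neq 0$ we have $A_G = \alpha\,\tilde A_G$ with $\tilde A_G$ unitary, so again $G$ is unitary. This proves the lemma; in particular conditions (1) and (2) are each equivalent to unitarity, and hence to each other.

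I do not expect a genuine obstacle --- each step is a one-line check. The two points that deserve an explicit remark in the writeup are (i) the branch ambiguity of the $2^l$-th root in \eqref{eq:normalizedAction}, handled by observing that every branch has the same modulus $|\det A_G|^{2^{-l}}$ (and that $\det\tilde A_G = 1$ for any branch, since $\bigl((\det A_G)^{2^{-l}}\bigr)^{2^l} = \det A_G$), and (ii) the implicit hypothesis in part (2) that $\tilde A_G$ exists, which is automatic whenever a decomposition $A_G = \alpha W$ is available.
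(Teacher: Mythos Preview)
Your proof is correct; the paper itself omits any argument, simply stating that the equivalences are ``straightforward to check,'' and your direct verification of the four implications is exactly the intended check. Your explicit remarks on the branch ambiguity of the $2^l$-th root and on the implicit existence of $\tilde A_G$ in part (2) are appropriate and make the write-up cleaner than what the paper provides.
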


Similarly, we say that a $U$-CCC postselection gadget $G$ is \textit{Clifford} if there exists $\alpha \in \bbC\backslash \{0\}$ and a Clifford operator $U$ such that $A_G = \alpha U$. %We say that a gadget is \textit{unitary non-Clifford} if it is a unitary gadget that is not Clifford. 
The following lemma gives a necessary condition for a gadget to be Clifford.

\begin{lemma}  \label{lem:AGXAG}
If $G$ is a Clifford $U$-CCC postselection gadget, then
\begin{equation} \label{eq:AGXAG}
A_G X A_G^\dag \propto X \mbox{ or } A_G X A_G^\dag \propto Y \mbox{  or  } A_G X A_G^\dag \propto Z ,
\end{equation}
and
\begin{equation}  \label{eq:AGZAG}
A_G Z A_G^\dag \propto X \mbox{ or } A_G Z A_G^\dag \propto Y \mbox{  or  } A_G Z A_G^\dag \propto Z.
\end{equation}
\end{lemma}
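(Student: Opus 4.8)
The plan is to use the defining property of Clifford operations, namely that they normalize the Pauli group. Suppose $G$ is a Clifford $U$-CCC postselection gadget, so $A_G = \alpha C$ for some nonzero $\alpha \in \bbC$ and some single-qubit Clifford operator $C \in \cC_1$. Since $C$ normalizes $\cP_1$, we have $C P C^\dag \in \cP_1$ for every Pauli $P$; in particular $C X C^\dag$ and $C Z C^\dag$ are (up to a phase in $\{1,i,-1,-i\}$) one of the single-qubit Paulis $X$, $Y$, $Z$ (it cannot be proportional to $I$, since conjugation by a unitary preserves tracelessness, and $X,Z$ are traceless while $I$ is not). Now observe that
\begin{equation}
A_G X A_G^\dag = \alpha C X (\alpha C)^\dag = |\alpha|^2\, C X C^\dag,
\end{equation}
so $A_G X A_G^\dag \propto C X C^\dag$, which is proportional to one of $X$, $Y$, $Z$; this gives \eqref{eq:AGXAG}. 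The identical argument with $X$ replaced by $Z$ gives \eqref{eq:AGZAG}.

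The one point that requires a word of care is that $A_G$ need not itself be unitary a priori — it is merely asserted to be $\alpha$ times a Clifford unitary $C$ — but this is exactly the hypothesis ``$G$ is Clifford'', so $A_G = \alpha C$ is available by definition, and the computation above only uses $A_G A_G^\dag$-type products in which the scalar $\alpha$ contributes the harmless positive factor $|\alpha|^2$. I would also note explicitly that $|\alpha|^2 > 0$ so the proportionality constant is genuinely nonzero, matching the meaning of $\propto$ fixed in the Preliminaries.

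I do not anticipate any real obstacle here: the statement is essentially an unpacking of the definition of the Clifford group together with the elementary fact that Pauli conjugates of $X$ and $Z$ under a single-qubit Clifford are $\pm$-phased single-qubit Paulis and cannot be (a multiple of) the identity. The only thing to be careful about is bookkeeping of scalars — keeping track that $A_G = \alpha C$ implies $A_G P A_G^\dag = |\alpha|^2 C P C^\dag$ rather than $\alpha^2 C P C^\dag$ — and confirming that $C X C^\dag, C Z C^\dag \notin \bbC I$, which follows from tracelessness of $X$ and $Z$ being preserved under conjugation.
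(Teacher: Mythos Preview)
Your proof is correct and follows essentially the same route as the paper: write $A_G = \alpha\,\Gamma$ with $\Gamma$ Clifford, use that $\Gamma$ normalizes the Pauli group so $\Gamma X \Gamma^\dag$ is a Pauli, rule out the identity case, and absorb the scalar $|\alpha|^2$ into the proportionality. The only cosmetic difference is that the paper excludes $\Gamma X \Gamma^\dag \sim I$ by observing this would force $X \sim I$, whereas you invoke preservation of tracelessness; both arguments are equally valid.
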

\begin{proof}
If $G$ is a Clifford $U$-CCC postselection gadget, then there exists $\alpha \in \bbC\backslash\{0\}$ and a Clifford operation $\Gamma$ such that $A_G = \alpha \Gamma$. Since $\Gamma$ is Clifford, $\Gamma X\Gamma^\dag$ is a Pauli operator. But $\Gamma X\Gamma^\dag \not\sim I$, otherwise, $X \sim I$, which is a contradiction. Hence,  $\Gamma X\Gamma^\dag \sim X$ or $Y$ or $Z$, which implies \eq{eq:AGXAG}. The proof of  \eq{eq:AGZAG} is similar, with $X$ replaced with $Z$.
\end{proof}

\section{Weak simulation of CCCs with multiplicative error}

\subsection{Classification results}

In this section, we classify the hardness of weakly simulating $U$-CCCs as we vary $U$. As we shall see, it turns out that the classical simulation complexities of the $U$-CCCs associated with this notion of simulation are all of the following two types: the $U$-CCCs are either efficiently simulable, or are hard to simulate to constant multiplicative error unless the polynomial hierarchy collapses. To facilitate exposition, we will introduce the following terminology to describe these two cases: Let $\mathcal C$ be a class of quantum circuits. Following the terminology in \cite{koh2015further}, we say that $\mathcal C$ is in $\PWEAK$ if it is efficiently simulable in the weak sense by a classical computer. We say that $\mathcal C$ is $\PH$-\textit{supreme} (or that it exhibits $\PH$-\textit{supremacy}) if it satisfies the property that if $\mathcal C$ is efficiently simulable in the weak sense by a classical computer to constant multiplicative error, then the polynomial hierarchy ($\PH$) collapses.

The approach we take to classifying the $U$-CCCs is to decompose each $U$ into the form given by \eq{eq:Udecomposition}, 
\begin{equation}
U = e^{i \alpha} R_z(\phi) R_x(\theta) R_z(\lambda),
\end{equation}
and study how the classical simulation complexity changes as we vary $\alpha, \phi, \theta$ and $\lambda$. Two simplifications can immediately be made. First, the outcome probabilities of the $U$-CCC are independent of $\alpha$, since $\alpha$ appears only in a global phase. Second, the probabilities are also independent of $\lambda$. To see this, note that the outcome probabilities are all of the form: 
\begin{equation} \label{eq:probexpression}
|\bra b R_z(-\lambda)^{\otimes n} V R_z(\lambda)^{\otimes n} \ket 0|^2 = |\bra b V \ket 0|^2 ,
\end{equation}
which is independent of $\lambda$. In the above expression, $b \in \{0,1\}^n$ and $$V = R_x(-\theta)^{\otimes n} R_z(-\phi)^{\otimes n} \Gamma R_z(\phi)^{\otimes n} R_x(\theta)^{\otimes n}$$ for some Clifford circuit $\Gamma$. The equality follows from the fact that the computational basis states are eigenstates of $R_z(\lambda)^{\otimes n}$ with unit-magnitude eigenvalues.

Hence, to complete the classification, it suffices to just restrict our attention to the two-parameter family $\set{R_z(\phi) R_x(\theta)}_{\phi,\theta}$ of unitaries. We first prove the following lemma (see Table \ref{tab:classification} for a summary):
\begin{lemma} \label{thm:multHardness}
Let $U = R_z(\phi) R_x(\theta)$, where $\phi, \theta \in [0,2\pi)$. Then
\begin{itemize}
\item $U$-CCCs are in $\PWEAK$, if
\begin{enumerate}
\itemsep1em 
\item[(i)] $\phi \in [0,2\pi)$ \ and  \ $\theta \in \pi \bbZ$, or
\item[(ii)] $\phi \in \frac \pi 2 \bbZ$ \ and \ $\theta \in \frac\pi 2 \bbZ$.
\end{enumerate}
\item $U$-CCCs are $\PH$-supreme, if
\begin{enumerate}
\itemsep1em 
\item[(iii)] $\phi \notin \frac \pi 2 \bbZ$ \ and \  $\theta \in \frac \pi 2 \bbZ_{odd}$, or
\item[(iv)] $\theta \notin \frac \pi 2 \bbZ$.
\end{enumerate}
\end{itemize}
\end{lemma}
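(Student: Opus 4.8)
My plan is to handle the easy ($\PWEAK$) cases by reduction to the Gottesman--Knill theorem, and the hard ($\PH$-supreme) cases by exhibiting \emph{unitary} $U$-CCC postselection gadgets that, together with the Clifford group, generate a universal gate set — and then invoking the standard postselection-based hardness argument. First, for case (i), $\theta \in \pi\bbZ$ means $R_x(\theta) \in \{I, -iX\} = \langle S, H\rangle$ up to phase, so $U = R_z(\phi) R_x(\theta)$ is a $Z$-rotation times a Clifford; by \eq{eq:probexpression} the output probabilities of the $U$-CCC equal those of a Clifford circuit (the $R_z(\phi)$'s cancel just as $R_z(\lambda)$ did), hence Gottesman--Knill applies and we are in $\PWEAK$. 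For case (ii), $\phi, \theta \in \tfrac\pi2\bbZ$ makes both $R_z(\phi)$ and $R_x(\theta)$ Clifford (using Fact~\ref{fact:RzClifford} and its $X$-analogue, $R_x(\tfrac\pi2 k) \in \langle S,H\rangle$), so $U$ itself is Clifford, the whole $U$-CCC is a Clifford circuit, and again Gottesman--Knill gives $\PWEAK$.

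For the hard cases (iii) and (iv), the strategy is: (a) build a unitary $k$-to-$1$ $U$-CCC postselection gadget $G$ whose normalized action $\tilde A_G$ is a single-qubit non-Clifford unitary; (b) conclude, via the Nebe--Rains--Sloane fact quoted in the introduction (Clifford group plus any non-Clifford single-qubit unitary is universal), that postselected $U$-CCCs can simulate a universal gate set; (c) apply the inverse-free Solovay--Kitaev theorem of Sardharwalla et al.\ \cite{sardharwalla2016universal} (valid here since our gate set contains the Paulis, which are Clifford) to compile arbitrary $\BQP$ circuits, so that $\PostBQP \subseteq \PostBPP$ would follow from an efficient classical simulation; (d) use $\PostBQP = \PP$ \cite{aaronson2005quantum} together with Toda's theorem to collapse $\PH$, exactly as in \cite{BJS2010,aaronsonboson,boulandccc2016}. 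The remaining content is step (a): constructing the gadgets. I would follow the paper's own remark that two gadgets suffice (found via brute-force search over small Clifford $\Gamma$). Concretely, I would search for small $\Gamma \in \mathcal C_k$, ancilla inputs $\ket{a}$ and postselection outcomes $\ket{b}$, such that $A_G = \bra b_S (U^\dag)^{\otimes(k-l)} \Gamma\, U^{\otimes(k-l)} \ket a_T$ satisfies $A_G^\dag A_G = \gamma I$ (Lemma~\ref{lem:unitaryCCCgadget}) while violating the Clifford necessary condition of Lemma~\ref{lem:AGXAG} — i.e.\ $\tilde A_G X \tilde A_G^\dag$ is not proportional to any Pauli. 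One gadget should cover the regime $\theta \notin \tfrac\pi2\bbZ$ (case (iv)), where the ``tilt'' away from the $Z$-axis produces a genuinely non-stabilizer interference pattern; a second gadget, using $R_x(\theta)$ with $\theta \in \tfrac\pi2\bbZ_{odd}$ (so $R_x(\theta)$ is like an $X$-type half-integer rotation) combined with the non-Clifford $R_z(\phi)$, $\phi \notin \tfrac\pi2\bbZ$, covers case (iii).

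The main obstacle is step (a): verifying that a candidate gadget is simultaneously \emph{unitary} for \emph{all} $(\phi,\theta)$ in the relevant (continuum) parameter regime and \emph{non-Clifford} there. Unitarity is a set of polynomial/trigonometric identities in $\cos\theta,\sin\theta,\cos\phi,\sin\phi$ that must hold identically; one must either choose $\Gamma$, $\ket a$, $\ket b$ cleverly so these identities are forced, or argue they can fail only on a measure-zero set which is then absorbed into the easy cases. Non-Cliffordness must be checked on the complement of the lattice points already handled — so one has to confirm the candidate $\tilde A_G$ is Clifford \emph{exactly} on the excluded loci and nowhere else in the regime, matching the boundary of the classification precisely. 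I would expect to organize case (iv) into subcases according to whether $\phi \in \tfrac\pi2\bbZ$ or not (a second gadget or a conjugation trick may be needed when $\phi$ is ``nice'' but $\theta$ is not), and to verify in each subcase, by direct computation of $A_G X A_G^\dag$ and $A_G Z A_G^\dag$, that Lemma~\ref{lem:AGXAG} is violated; this bookkeeping over the parameter space, rather than any single hard idea, is where the real work lies.
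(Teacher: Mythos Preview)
Your proposal is correct and follows essentially the same route as the paper: Gottesman--Knill for (i)--(ii), and for (iii)--(iv) a unitary non-Clifford $U$-CCC postselection gadget feeding into Nebe--Rains--Sloane universality, inverse-free Solovay--Kitaev, and the standard $\PostBQP=\PP$ collapse argument (this is precisely the paper's Lemma~\ref{lem:suffPHsup}). The only place you anticipate more work than is actually needed is case~(iv): the paper's $2$-to-$1$ gadget $J(\phi,\theta)$ (ancilla $\ket 0$, Clifford part $S$ then controlled-$Z$, postselect $\bra 0$ on the ancilla) has normalized action $\sim S^\dag R_z\!\bigl(2\tan^{-1}(\cos\theta)\bigr)$, which is unitary for \emph{all} $(\phi,\theta)$ and is non-Clifford iff $\theta\notin\tfrac\pi2\bbZ$ by Fact~\ref{fact:RzClifford} --- so no $\phi$-subcases arise, and the ``bookkeeping over the parameter space'' you worry about reduces to a one-line check.
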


\begin{table}
\begin{center}
\begin{tabular}{|C{1.2cm}||C{2cm}|C{2cm}|C{2.1cm}|}
\hline
\backslashbox[16.1mm]{\raisebox{0.4mm}{$\phi$}}{\raisebox{-1.4mm}{$\theta$}} & $\pi \bbZ$ & $\tfrac \pi 2 \bbZ_{odd}$ & $\left(\tfrac \pi 2 \bbZ\right)^c$ \\ \hline \hline
$\tfrac \pi 2 \bbZ$ & \small $\PWEAK$ \newline \footnotesize (i, ii) & \small $\PWEAK$  \newline \footnotesize (ii) & \small $\PH$-supreme \newline \footnotesize (iv) \\ \hline
$\left(\tfrac \pi 2 \bbZ\right)^c$ & \small $\PWEAK$ \newline \footnotesize (i) & \small  $\PH$-supreme \newline \footnotesize (iii) & \small $\PH$-supreme \newline \footnotesize (iv) \\ \hline
\end{tabular}
\end{center}
\caption{\small Complete complexity classification of $U$-CCCs (where $U= R_z(\phi) R_x(\theta)$) with respect to weak simulation, as we vary $\phi$ and $\theta$. The roman numerals in parentheses indicate the parts of Lemma \ref{thm:multHardness} that are relevant to the corresponding box. All $U$-CCCs are either in $\PWEAK$ (i.e. can be efficiently simulated in the weak sense) or $\PH$-supreme (i.e. cannot be simulated efficiently in the weak sense, unless the polynomial hierarchy collapses.)
\newline\newline }
\label{tab:classification}
\end{table}

We defer the proof of Lemma \ref{thm:multHardness} to Sections \ref{sec:efficientClassicalSim} and \ref{sec:ProofOfMultHardness}. Lemma \ref{thm:multHardness} allows us to prove our main theorem:
\begin{theo} \label{cor:classificationWeak}
Let $U$ be a single-qubit unitary operator. Consider the following two statements:
\begin{enumerate}
\item[(A)] $U$-CCC is in $\PWEAK$.
\item[(B)] There exists a single-qubit Clifford operator $\Gamma \in \langle S, H \rangle$ and $\lambda \in [0,2\pi)$ such that\footnote{or alternatively, we could restrict the range of $\lambda$ to be in $[0,\pi]$, since any factor of $R_z(\pi/2) \sim S$ can be absorbed into the Clifford operator $\Gamma$.} 
\begin{equation} \label{eq:CRzlambda}
U \sim \Gamma R_z(\lambda) .
\end{equation}
\end{enumerate}
Then, 
\begin{enumerate}
\item (B) implies (A).
\item If the polynomial hierarchy is infinite, then (A) implies (B).
\end{enumerate}

In other words, if we assume that the polynomial hierarchy is infinite, then $U$-CCCs are $\PH$-supreme if and only if they cannot be written in the form $U \sim \Gamma R_z(\lambda)$, where $\Gamma$ is a Clifford circuit and $R_z(\lambda)$ is a $Z$-rotation.
\end{theo}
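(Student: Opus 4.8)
The plan is to reduce the statement to the two-parameter family already treated in Lemma~\ref{thm:multHardness} and then translate that lemma through an explicit description of which single-qubit unitaries have the form (B). First I would apply \eqref{eq:Udecomposition} to write $U = e^{i\alpha} R_z(\phi) R_x(\theta) R_z(\lambda_0)$ and observe, exactly as in the paragraph preceding Lemma~\ref{thm:multHardness} (see \eqref{eq:probexpression}), that the output distribution of a $U$-CCC depends only on $\phi$ and $\theta$; hence (A) for $U$ is equivalent to (A) for $R_z(\phi)R_x(\theta)$. Statement (B) also descends, since $U \sim \Gamma R_z(\lambda)$ for some $\lambda \in [0,2\pi)$ holds exactly when $R_z(\phi)R_x(\theta) \sim \Gamma R_z(\lambda - \lambda_0)$, and $\lambda$ ranges over all of $[0,2\pi)$. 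So it suffices to prove both implications for $U = R_z(\phi) R_x(\theta)$, where now (A) and (B) depend only on $(\phi,\theta)$.

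The crux is the following characterization of (B): \emph{$U \sim \Gamma R_z(\lambda)$ for some single-qubit Clifford $\Gamma \in \langle S, H\rangle$ and some $\lambda$ if and only if $U Z U^\dagger \propto P$ for some $P \in \{X,Y,Z\}$.} The forward direction is immediate because $\Gamma R_z(\lambda) Z R_z(\lambda)^\dagger \Gamma^\dagger = \Gamma Z \Gamma^\dagger$ is a signed Pauli. Conversely, $U Z U^\dagger$ is always Hermitian, unitary and traceless, so if it is proportional to $P\in\{X,Y,Z\}$ it equals $\pm P$; choosing a single-qubit Clifford $\Gamma_0$ with $\Gamma_0 Z \Gamma_0^\dagger = \pm P$ (the single-qubit Clifford group acts transitively by conjugation on $\{\pm X, \pm Y, \pm Z\}$, and every single-qubit Clifford lies in $\langle S, H\rangle$ up to global phase), we get that $\Gamma_0^\dagger U$ commutes with $Z$, hence is diagonal, hence $\sim R_z(\lambda)$, so $U \sim \Gamma_0 R_z(\lambda)$. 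Next I would carry out the short Bloch-sphere computation
$$ U Z U^\dagger = \sin\theta\sin\phi\, X - \sin\theta\cos\phi\, Y + \cos\theta\, Z , $$
and note this is proportional to a Pauli exactly when $\theta \in \pi\bbZ$ (giving $\pm Z$) or when $\theta \in \tfrac{\pi}{2}\bbZ_{odd}$ and $\phi \in \tfrac{\pi}{2}\bbZ$ (giving $\pm X$ or $\pm Y$). Organizing by the trichotomy $\theta \in \pi\bbZ$, $\theta \in \tfrac{\pi}{2}\bbZ_{odd}$, $\theta \in (\tfrac{\pi}{2}\bbZ)^c$, a short bookkeeping argument shows this parameter set is precisely the union of cases (i) and (ii) of Lemma~\ref{thm:multHardness}, while its complement in the $(\phi,\theta)$-plane is precisely the union of cases (iii) and (iv).

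With this dictionary the two implications follow from Lemma~\ref{thm:multHardness}. For (B) $\Rightarrow$ (A): (B) places us in case (i) or (ii), which Lemma~\ref{thm:multHardness} declares to be in $\PWEAK$. (Alternatively, one can argue directly without the lemma: if $U \sim \Gamma R_z(\lambda)$ then the $U$-CCC applied to $\ket 0^{\otimes n}$ produces exactly the output distribution of the Clifford circuit $(\Gamma^\dagger)^{\otimes n} C \Gamma^{\otimes n}$, where $C$ is the underlying Clifford circuit, because the flanking $R_z(\pm\lambda)^{\otimes n}$ contribute only global phases against $\ket 0^{\otimes n}$ and against computational-basis bras; Gottesman--Knill then gives an efficient weak simulation.) For (A) $\Rightarrow$ (B) under the assumption that $\PH$ is infinite, I would prove the contrapositive: if (B) fails then we are in case (iii) or (iv), so $U$-CCCs are $\PH$-supreme by Lemma~\ref{thm:multHardness}; since $\PH$ does not collapse, they cannot be weakly simulated even to constant multiplicative error, so in particular they are not in $\PWEAK$, i.e.\ (A) fails.

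I expect the only genuine work to be the (B)-characterization together with the verification that cases (i)--(iv) of Lemma~\ref{thm:multHardness} partition the $(\phi,\theta)$-square into complementary $\PWEAK$ and $\PH$-supreme regions (this is suggested by Table~\ref{tab:classification} but should be checked by hand via the trichotomy above). The remaining ingredients --- the reduction to $R_z(\phi)R_x(\theta)$, transitivity of the single-qubit Clifford group on signed Paulis, and the global-phase manipulations --- are routine.
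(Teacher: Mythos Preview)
Your proposal is correct and follows the same high-level strategy as the paper: reduce to the two-parameter family $R_z(\phi)R_x(\theta)$ via \eqref{eq:Udecomposition} and \eqref{eq:probexpression}, then read both implications off Lemma~\ref{thm:multHardness}. The direct argument you give for (B)$\Rightarrow$(A) is exactly the paper's.

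The one genuine difference is how you link condition (B) to the parameter regions. The paper handles this by direct manipulation: in each of the cases (i) and (ii) it writes out $U$ explicitly (using $R_x(\pi)\sim X$ and $R_z(\pi\bbZ/2)\sim S^j$) and exhibits the factorization $\Gamma R_z(\lambda)$. You instead prove the intrinsic characterization ``(B) holds iff $UZU^\dagger$ is a signed Pauli,'' compute $UZU^\dagger$ on the Bloch sphere, and match the resulting parameter set against cases (i)--(iv). Your route is a bit more conceptual and makes the dichotomy in Table~\ref{tab:classification} transparent without case-splitting on the Clifford side; the paper's route is shorter because it avoids proving the characterization. Either works, and your Bloch-sphere computation and the transitivity-of-Cliffords argument are both standard and correct.
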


\begin{proof} \hfill
\begin{enumerate}
\item Since $R_z(\lambda) \ket 0 \sim \ket 0$, it follows that for any $\Gamma$, $\Gamma R_z(\lambda)$-CCCs have the same outcome probabilities as $\Gamma$-CCCs. But $C$-CCCs are efficiently simulable, by the Gottesman-Knill Theorem, since $\Gamma \in \langle S, H \rangle$. Hence, $U$-CCCs are in $\PWEAK$.
\item Let $U$ be such that $U$-CCCs are in $\PWEAK$. Using the decomposition in \eq{eq:Udecomposition}, write $U = e^{i \alpha} R_z(\phi) R_x(\theta) R_z(\lambda)$. Since we assumed that the polynomial hierarchy is infinite, Lemma \ref{thm:multHardness} implies that 
\begin{enumerate}
\item $\theta \in \pi \bbZ$, or
\item $\theta \in \tfrac \pi 2 \bbZ$ and  $\phi \in \tfrac \pi 2 \bbZ$.
\end{enumerate} 

In Case (a), $\theta \in 2\pi \bbZ$ or $\pi \bbZ_{odd}$. If $\theta \in 2\pi \bbZ$, then $$U \sim R_z(\phi) R_x(2\pi \bbZ) R_z(\gamma) = I. R_z(\phi+\gamma),$$ which is of the form given by \eq{eq:CRzlambda}. If $\pi \bbZ_{odd}$, then $$U \sim R_z(\phi) R_x(\pi \bbZ_{odd}) R_z(\gamma) \sim R_z(\phi) X R_z(\gamma) = X R_z(\gamma - \phi),$$ which is again of the form given by \eq{eq:CRzlambda}. 

In Case (b), 
\begin{eqnarray}
U &\in & e^{i\alpha} R_z(\pi \bbZ/2 ) R_x(\pi \bbZ/2 ) R_z(\gamma) \nonumber\\
&=& e^{i\alpha} R_z(\pi \bbZ/2 ) H R_z(\pi \bbZ/2 ) H R_z(\gamma) .
\end{eqnarray}
But the elements of $R_z(\pi \bbZ/2 )$ are of the form $S^j$, for $j \in \bbZ$, up to a global phase. Therefore, $R_z(\pi \bbZ/2 ) H R_z(\pi \bbZ/2 ) H $ is Clifford, and $U$ is of the form \eq{eq:CRzlambda}. 
\end{enumerate}
\end{proof}

Hence, Theorem \ref{cor:classificationWeak} tells us that under the assumption that the polynomial hierarchy is infinite, $U$-CCCs can be simulated efficiently (in the weak sense) if and only if $U \sim \Gamma R_z(\lambda)$ for some single qubit Clifford operator $\Gamma$, i.e. if $U$ is a Clifford operation times a $Z$-rotation.

\subsection{Proofs of efficient classical simulation} \label{sec:efficientClassicalSim}

In this section, we prove Cases (i) and (ii) of Lemma \ref{thm:multHardness}.

\subsubsection{Proof of Case (i): $\phi \in [0,2\pi)$ \ and  \ $\theta \in \pi \bbZ$} \label{sec:Case1}

\begin{theo}
Let $U = R_z(\phi) R_x(\theta)$. If $\phi \in [0,2\pi)$ and $\theta \in \pi \bbZ$, then $U$-CCCs are in $\PWEAK$.
\end{theo}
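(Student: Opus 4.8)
The plan is to reduce a $U$-CCC with $\theta \in \pi\bbZ$ to an ordinary Clifford circuit acting on a computational-basis input, and then invoke the Gottesman--Knill theorem. The key observation is that when $\theta \in \pi\bbZ$ the rotation $R_x(\theta) = \cos(\theta/2) I - i\sin(\theta/2) X$ collapses to a Pauli: if $\theta \in 2\pi\bbZ$ then $R_x(\theta) = \pm I$, and if $\theta \in \pi\bbZ_{odd}$ then $R_x(\theta) = \mp i X$. So in either case $R_x(\theta) \sim P$ for some $P \in \{I, X\}$, and hence $U = R_z(\phi) R_x(\theta) \sim R_z(\phi) P$. (The first thing I would do is verify this Pauli collapse carefully for both parities of $\theta/\pi$.)

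Next I would compute the outcome probabilities using the equivalent description of a $U$-CCC: the probability of outcome $b \in \{0,1\}^n$ for the underlying Clifford circuit $\Gamma$ is $\big|\bra b (U^\dag)^{\otimes n} \Gamma\, U^{\otimes n} \ket 0^{\otimes n}\big|^2$. Substituting $U \sim R_z(\phi) P$ and using that $P$ is Hermitian, this equals $\big|\bra b P^{\otimes n} R_z(-\phi)^{\otimes n} \Gamma\, R_z(\phi)^{\otimes n} P^{\otimes n} \ket 0^{\otimes n}\big|^2$ up to a global phase, which drops out of the modulus. Now $P^{\otimes n}\ket 0^{\otimes n}$ is a computational basis state $\ket{a'}$ and $\bra b P^{\otimes n}$ is a computational basis state $\bra{b'}$, where $b \mapsto b'$ is either the identity map or the bitwise-complement map depending on whether $P = I$ or $P = X$. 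Since the computational basis states are eigenstates of $R_z(\phi)$ with unit-modulus eigenvalues (exactly as in \eq{eq:probexpression}), the surviving $R_z$ factors contribute only phases, and the probability of $b$ equals $\big|\bra{b'} \Gamma \ket{a'}\big|^2$.

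Finally I would conclude: this is precisely the probability that the Clifford circuit $\Gamma$, run on the computational-basis input $\ket{a'}$ and measured in the computational basis, outputs $b'$. By the Gottesman--Knill theorem this distribution over $b'$ can be sampled in classical polynomial time, and since $b \mapsto b'$ is an efficiently computable bijection, so can the distribution over $b$; hence $U$-CCCs are in $\PWEAK$. This is essentially a self-contained special case of part (1) of the proof of Theorem \ref{cor:classificationWeak}, specialized to the single-qubit Clifford $\Gamma_0 \in \{I, X\}$ (i.e.\ $U \sim \Gamma_0 R_z(\phi)$). There is no genuine obstacle here; the only points needing a little care are the Pauli collapse of $R_x(\theta)$ for both parities of $\theta/\pi$, and the (standard) remark that pre-composing a Clifford circuit with the single-qubit Clifford $P^{\otimes n}$ and relabelling the outcome keeps everything within the scope of Gottesman--Knill.
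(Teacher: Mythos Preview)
Your proposal is correct and follows essentially the same approach as the paper's proof: reduce $R_x(\theta)$ to a Pauli $P\in\{I,X\}$, absorb $P^{\otimes n}$ into the computational-basis input/output, drop the $R_z(\phi)$ phases, and invoke Gottesman--Knill. The only cosmetic difference is that the paper treats the two parities of $\theta/\pi$ as separate cases, whereas you handle them uniformly via $P\in\{I,X\}$.
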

\begin{proof}
First, we consider the case where $\theta \in 2\pi \bbZ$. In this case, $U = R_z(\phi)$, and the amplitudes of the $U$-CCC can be written as
\begin{equation}
\bra y R_z(-\phi)^{\otimes n} \Gamma R_z(\phi)^{\otimes n} \ket x \sim \bra y \Gamma \ket x
\end{equation}
for some Clifford operation $\Gamma $ and computational basis states $\ket x$ and $\ket y$.
By the Gottesman-Knill Theorem, these $U$-CCCs can be efficiently weakly simulated.

Next, we consider the case where $\theta \in \pi \bbZ_{odd}$. In this case, $U = R_z(\phi)R_x(\pi) \sim R_z(\phi) X$, and the amplitudes of the $U$-CCC can be written as
\begin{equation}
\bra y X^{\otimes n} R_z(-\phi)^{\otimes n} \Gamma R_z(\phi)^{\otimes n} X^{\otimes n} \ket x \sim \bra {\bar y} \Gamma \ket {\bar x}
\end{equation}
for some Clifford operation $\Gamma$ and computational basis states $\ket x$ and $\ket y$, where $\bar z$ is the bitwise negation of $z$.
By the Gottesman-Knill Theorem, these $U$-CCCs can be efficiently weakly simulated.

Putting the above results together, we get that $U$-CCCs are in $\PWEAK$.
\end{proof}

\subsubsection{Proof of Case (ii): $\phi \in \tfrac \pi 2 \bbZ$ \ and \ $\theta \in \tfrac\pi 2 \bbZ$} \label{sec:Case2}
\begin{theo}
Let $U = R_z(\phi) R_x(\theta)$. If $\phi \in \tfrac \pi 2 \bbZ$ and $\theta \in \tfrac\pi 2 \bbZ$, then $U$-CCCs are in $\PWEAK$.
\end{theo}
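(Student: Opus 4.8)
The plan is to show that under the hypotheses $\phi \in \tfrac\pi2\bbZ$ and $\theta \in \tfrac\pi2\bbZ$, the conjugating unitary $U$ is itself a single-qubit Clifford operation (up to a global phase). Once that is established, the entire $U$-CCC collapses to an ordinary Clifford circuit, and membership in $\PWEAK$ follows immediately from the Gottesman--Knill theorem in its weak-simulation form.

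The first step is the identity $R_x(\theta) = H R_z(\theta) H$, which holds because $HZH = X$ and hence conjugating $e^{-i\theta Z/2}$ by $H$ produces $e^{-i\theta X/2} = R_x(\theta)$. Combining this with Fact~\ref{fact:RzClifford} — which says $R_z(\psi)$ is a Clifford operation exactly when $\psi \in \tfrac\pi2\bbZ$ — we get that $\theta \in \tfrac\pi2\bbZ$ implies $R_x(\theta) = H R_z(\theta) H \in \langle S, H\rangle$, and $\phi \in \tfrac\pi2\bbZ$ implies $R_z(\phi) \in \langle S, H\rangle$. Therefore $U = R_z(\phi) R_x(\theta)$ is a single-qubit Clifford operation up to a global phase; the scalar can be tracked explicitly if desired (for instance $R_z(\pi/2) = e^{-i\pi/4} S$), but global phases affect neither the output distribution nor simulability.

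The second step uses the equivalent description of a $U$-CCC given in the preliminaries: start in $\ket 0^{\otimes n}$, apply $U^{\otimes n}$, then a sequence of gates from $\{H, S, \mathrm{CNOT}\}$, then $(U^\dagger)^{\otimes n}$, and measure in the computational basis. Since the Clifford group is closed under products, tensor products, and inverses, and since $U$ (hence $U^\dagger$) is Clifford by the previous paragraph, every gate in this circuit is Clifford, so the circuit is an ordinary Clifford circuit on $n$ qubits. By the Gottesman--Knill theorem one can efficiently sample classically from its output distribution, so $U$-CCCs are in $\PWEAK$.

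There is no real obstacle in this argument; the only points requiring a small amount of care are the global-phase bookkeeping in $R_z(\pi/2)\sim S$ and the closure properties of the Clifford group that guarantee pre- and post-composing a Clifford circuit with $U^{\otimes n}$ and $(U^\dagger)^{\otimes n}$ keeps the whole circuit Clifford. Everything else is an immediate invocation of Fact~\ref{fact:RzClifford} and Gottesman--Knill.
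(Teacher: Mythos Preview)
Your proof is correct and essentially identical to the paper's: both use $R_x(\theta)=HR_z(\theta)H$ together with the fact that $R_z(\tfrac\pi2\bbZ)$ consists of powers of $S$ (up to phase) to conclude that $U$ is a single-qubit Clifford operation, and then invoke Gottesman--Knill. Your write-up is somewhat more explicit about the closure properties of the Clifford group and the global-phase bookkeeping, but the argument is the same.
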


\begin{proof}

The elements of $R_z(\tfrac \pi 2 \bbZ )$ are of the form $S^j$, where $j \in \bbZ$, up to a global phase. Therefore, $U = R_z(\phi) R_x(\theta) = R_z(\phi) H R_z(\theta) H $ is a Clifford operation, and so, the $U$-CCCs consist of only Clifford gates. By the Gottesman-Knill Theorem, these $U$-CCCs can be be efficiently (weakly) simulated.
\end{proof}

\subsection{Proofs of hardness} \label{sec:ProofOfMultHardness}

In this section, we prove Cases (iii) and (iv) of Lemma \ref{thm:multHardness}. Our proof uses postselection gadgets, similar to the techniques used in \cite{BJS2010,boulandccc2016}. One can also prove hardness using techniques from measurement-based-quantum computing, at least for certain $U$. We give such a proof in Appendix \ref{app:oldproof} for the interested reader; we believe this proof may be more intuitive for those who are familiar with measurement-based quantum computing.

We start by proving a lemma that will be useful for the proofs of hardness.

\begin{lemma} \label{lem:suffPHsup}
(Sufficient condition for $\PH$-supremacy) Let $U$ be a single-qubit gate. If there exists a unitary non-Clifford $U$-CCC postselection gadget $G$, then $U$-CCCs are $\PH$-supreme.
\end{lemma}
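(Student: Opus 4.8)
The plan is to show that \emph{postselected} $U$-CCCs can carry out postselected universal quantum computation, and then to apply the now-standard argument that a multiplicative-error classical simulation of a postselection-universal circuit family collapses the polynomial hierarchy \cite{BJS2010,aaronsonboson,boulandccc2016}. Concretely, the core claim will be $\PostBQP \subseteq \{\text{languages decided by postselected }U\text{-CCCs}\}$, where a postselected $U$-CCC is a $U$-CCC in which, before reading off the answer, one postselects on a chosen subset of the measured qubits. Granting this inclusion, we recall Aaronson's theorem $\PostBQP = \PP$ \cite{aaronson2005quantum}; and if $U$-CCCs could be weakly simulated classically to constant multiplicative error, then postselecting the classical sampler would place every postselected $U$-CCC in $\BPPpath$, whence $\PP \subseteq \BPPpath \subseteq \PH$, forcing $\PH$ to collapse by Toda's theorem \cite{Toda91}. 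So all the work is in the inclusion $\PostBQP \subseteq \{\text{postselected }U\text{-CCCs}\}$.

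To prove that inclusion I would proceed as follows. By Lemma~\ref{lem:unitaryCCCgadget}, since $G$ is unitary its normalized action $\tilde A_G$ is a genuine unitary, and since $G$ is non-Clifford, $\tilde A_G$ is a non-Clifford unitary (on a single qubit in our applications, where $l=1$; the general $l$ case is identical, embedding $\tilde A_G$ in $N\ge l$ qubits). Splicing copies of $G$ into a $U$-CCC yields a postselected $U$-CCC, and each inserted copy contributes one application of $\tilde A_G$, up to an overall nonzero scalar that washes out upon renormalizing the postselected output distribution. Hence a postselected $U$-CCC can realize, modulo the fixed conjugating layers $U^{\otimes N}$ and $(U^\dagger)^{\otimes N}$ that bracket every $U$-CCC, any circuit built from $\{H,S,\mathrm{CNOT}\}$ together with $\tilde A_G$. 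Now I invoke the key structural fact, from the classification of the invariants of the Clifford group \cite{nebe2001invariants,nebe2006self,cliffordstackexchange} (or, for one qubit, directly from the representation theory of $SU(2)$), that the Clifford group together with \emph{any} non-Clifford unitary densely generates the unitary group up to phase. Therefore the ``middle'' of such a circuit can densely approximate an arbitrary $W\in SU(2^N)$; choosing $W \approx U^{\otimes N} C (U^\dagger)^{\otimes N}$ for a target $\PostBQP$ circuit $C$ makes the whole $U$-CCC implement $(U^\dagger)^{\otimes N} W U^{\otimes N} \approx C$, so the boundary rotations are harmlessly absorbed, the input $\ket 0^{\otimes N}$ and computational-basis readout are untouched, and the gadget postselections simply accumulate with $C$'s postselection. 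Since $C=g_1\cdots g_m$ with $m=\poly(N)$ and each $g_i$ on $O(1)$ qubits, $U^{\otimes N}CU^{\dagger\otimes N}$ is a product of $m$ gates each on $O(1)$ qubits, so it suffices to compile each to inverse-exponential precision. The one subtlety is that $\{H,S,\mathrm{CNOT},\tilde A_G\}$ need not be closed under inversion, so ordinary Solovay--Kitaev does not apply; here I would use the inverse-free Solovay--Kitaev theorem of Sardharwalla \emph{et al.}~\cite{sardharwalla2016universal}, which applies to any universal gate set containing the Paulis, and the Paulis are Clifford and hence already available --- so each $O(1)$-qubit gate is compiled to precision $\delta$ with $\polylog(1/\delta)$ gates and the whole construction stays polynomial size.

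I expect the main obstacle to be the bookkeeping rather than any single deep step: one must verify carefully that inserting the gadgets into a $U$-CCC genuinely yields an object of the allowed form (tracking where each $U$ and $U^\dagger$ sits, and checking the gadget postselections can be deferred to the end), and that the fixed boundary conjugation $U^{\otimes N}$ can always be absorbed into the dense middle layer without disturbing either the all-zeros input or the computational-basis measurement. The two genuinely nontrivial ingredients --- ``Clifford plus any non-Clifford gate is universal'' and the inverse-free Solovay--Kitaev theorem --- are imported as black boxes, so what remains is organizational; the conclusion $\PP\subseteq\BPPpath\subseteq\PH$ and hence the collapse is then immediate from \cite{aaronson2005quantum,BJS2010,aaronsonboson,Toda91}.
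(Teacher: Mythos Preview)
Your proposal is correct and follows essentially the same route as the paper: both use the gadget $G$ to obtain a non-Clifford unitary, invoke Nebe--Rains--Sloane to conclude universality of $\{H,S,\mathrm{CNOT},\tilde A_G\}$, apply the inverse-free Solovay--Kitaev theorem of Sardharwalla \emph{et al.}\ (since the gadget's action need not have an available inverse) to compile Aaronson's $\PostBQP=\PP$ circuits to inverse-exponential accuracy, and then conclude the $\PH$ collapse via the standard \cite{BJS2010,aaronsonboson} argument. The paper handles the boundary $U^{\otimes n}$ and $(U^\dagger)^{\otimes n}$ exactly as you suggest (compiling $(U^\dagger)^{\otimes n}$, then the target circuit, then $U^{\otimes n}$), and your deferral of the gadget postselections is the intended bookkeeping; the only cosmetic difference is that the paper phrases the final step via Stockmeyer counting ($\PP\subseteq\BPP^{\NP}$) rather than via $\BPPpath$, but both are standard and equivalent here.
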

\begin{proof}

Suppose such a gadget $G$ exists. Then, since the Clifford group plus any non-Clifford gate is universal \cite{nebe2001invariants,nebe2006self,cliffordstackexchange}, the Clifford group plus $G$ must be universal on a single qubit.
 Then, by the inverse-free Solovay-Kitaev Theorem of Sardharwalla \emph{et al.} \cite{sardharwalla2016universal}, using polynomially many gates from the set $G,H,S$ one can compile any desired one-qubit unitary $V$ to inverse exponential accuracy (since in particular $\langle H,S\rangle$ contains the Paulis). In particular, since any three-qubit unitary can be expressed as a product of a constant number of  CNOTs and one-qubit unitaries, one can compile any gate in the set $\{$CCZ, Controlled-H, all one-qubit gates $\}$ to inverse exponential accuracy with polynomial overheard.

In his proof that $\PostBQP=\PP$, Aaronson showed that postselected poly-sized circuits of the above gates can compute any language in $\PP$  \cite{aaronson2005quantum}. Furthermore, as his postselection succeeds with inverse exponential probability, compiling these gates to inverse exponential accuracy is sufficient for performing arbitrary $\PP$ computations. 

Hence, by using polynomially many gadgets for $G$, CNOT, $H$ and $S$, one can compile Aaronson's circuits\footnote{More specifically, we compile the circuit given by $(U^\dagger)^{\otimes n}$, then Aaronson's circuit, then $U^{\otimes n}$, as we need to cancel the $U$'s at the beginning and the $U^\dagger$s at the end in order to perform Aaronson's circuit which starts and measures in the computational basis. However as the $U,U^\dagger$ are one-qubit gates, one can cancel them to inverse exponential accuracy using our gates, and hence this construction suffices.} for computing $\PP$ to inverse exponential accuracy, and hence these circuits can compute $\PP$-hard problems. $\PH$-supremacy then follows from the techniques of \cite{BJS2010, aaronsonboson}. Namely, a weak simulation of such circuits with constant multiplicative error would place $\PP\subseteq \BPP^\NP \subseteq \Delta_3$ by Stockmeyer counting, and hence by Toda's theorem this would result in the collapse of $\PH$ to the third level.  In fact, by the arguments of Fujii \emph{et al.} \cite{fujii2014impossibility}, one can collapse $\PH$ to the second level as well, by placing $\mathsf{coC}_=\mathsf{P}$ in $\mathsf{SBP}$, and we refer the interested reader to their work for the complete argument. 

\end{proof}

\subsubsection{Proof of Case (iii): $\phi \notin \tfrac \pi 2 \bbZ$ \ and \  $\theta \in \tfrac \pi 2 \bbZ_{odd}$} \label{sec:Case3}

Let $U= R_z(\phi) R_x(\theta)$. Consider the following $U$-CCC postselection gadget:
\begin{equation}\label{eq:gadgetI} 
I(\phi,\theta) = \qquad 
\Qcircuit @C=1em @R=1em {
& \qw & \qw &  \ctrl{1}  & \gate{U^\dag}  & \qw & \bra{0}   \\
& \lstick{\ket{0}} & \gate{U} &  \control \qw  & \qw & \qw & \qw
}
\end{equation}

We now prove some properties about $I(\phi, \theta)$.

\begin{theo} \label{thm:propI} \hfill
\begin{enumerate} 
\item  The action of $I(\phi,\theta)$ is
\begin{equation} \label{eq:actionI}
A_{I(\phi,\theta)} = \mat{\cos^2 \tfrac \theta 2 & \tfrac i2 \sin\theta \ e^{-i \phi} \\ -\tfrac i2 \sin\theta \ e^{i \phi} & -\sin^2 \tfrac \theta 2}.
\end{equation}
\item $I(\phi,\theta)$ is a unitary gadget if and only if $\theta \in \tfrac \pi 2 \bbZ_{odd}$. When $I(\phi,\theta)$ is unitary,
\begin{equation} \label{eq:AtidleI}
\tilde A_{I(\phi,\theta)} = \frac i{\sqrt 2} \mat{1 & i (-1)^k e^{-i\phi} \\ -i(-1)^k e^{i\phi} & -1} ,
\end{equation}
where $k = \tfrac \theta \pi - \tfrac 12$.
\item $I(\phi,\theta)$ is a Clifford gadget if and only if $\phi \in \tfrac \pi 2 \bbZ$ and $\theta \in \tfrac \pi 2 \bbZ_{odd}$.
\item $I(\phi,\theta)$ is a unitary non-Clifford gadget if and only if $\phi \notin \tfrac \pi 2 \bbZ$ and $\theta \in \tfrac \pi 2 \bbZ_{odd}$.
\end{enumerate}
\end{theo}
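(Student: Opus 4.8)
The plan is to establish the four claims in order, each building on the previous. For part 1, I would write out the action $A_{I(\phi,\theta)} = \bra{0}_1 U_1^\dagger \,\mathrm{CNOT}\, U_2 \ket{0}_2$ explicitly. Here the control qubit is the top (system) qubit and the target is the bottom (ancilla); after applying $U$ to the ancilla and $\mathrm{CNOT}$, one applies $U^\dagger$ to the top qubit and postselects it on $\bra 0$. Expanding $\mathrm{CNOT} = \ket 0 \bra 0 \otimes I + \ket 1 \bra 1 \otimes X$ and using $U = R_z(\phi)R_x(\theta)$, the computation reduces to evaluating $\bra 0 U^\dagger \ket 0$, $\bra 0 U^\dagger \ket 1$, $\bra 0 U \ket 0$ and $\bra 0 X U \ket 0$ type matrix elements — a short $2\times 2$ calculation that yields \eqref{eq:actionI}. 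I would double-check the entries against the determinant: $\det A = -\cos^2\tfrac\theta2\sin^2\tfrac\theta2 + \tfrac14\sin^2\theta = -\tfrac14\sin^2\theta + \tfrac14\sin^2\theta$... wait, that gives $0$; more carefully $\det A = -\cos^2\tfrac\theta2\sin^2\tfrac\theta2 - (-\tfrac i2\sin\theta e^{i\phi})(\tfrac i2\sin\theta e^{-i\phi}) = -\tfrac14\sin^2\theta + \tfrac14\sin^2\theta \cdot(-1)(-1)$, so I will be careful with signs when verifying, since this sanity check feeds directly into part 2.

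For part 2, I would invoke Lemma \ref{lem:unitaryCCCgadget}: $G$ is unitary iff $A_G^\dagger A_G = \gamma I$ for some $\gamma > 0$. Computing $A_G^\dagger A_G$ from \eqref{eq:actionI}, the off-diagonal entry is proportional to $\cos^2\tfrac\theta2 \cdot (\tfrac i2\sin\theta e^{-i\phi}) + (\tfrac i2\sin\theta e^{-i\phi})\cdot(-\sin^2\tfrac\theta2) = \tfrac i2\sin\theta e^{-i\phi}(\cos^2\tfrac\theta2 - \sin^2\tfrac\theta2) = \tfrac i2\sin\theta\cos\theta\, e^{-i\phi}$, which vanishes iff $\sin\theta\cos\theta = 0$, i.e. $\theta \in \tfrac\pi2\bbZ$. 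The diagonal entries are $\cos^4\tfrac\theta2 + \tfrac14\sin^2\theta$ and $\tfrac14\sin^2\theta + \sin^4\tfrac\theta2$; for these to be equal we additionally need $\cos^4\tfrac\theta2 = \sin^4\tfrac\theta2$, i.e. $\cos^2\tfrac\theta2 = \sin^2\tfrac\theta2$ (both nonnegative), i.e. $\cos\theta = 0$, i.e. $\theta \in \tfrac\pi2\bbZ_{odd}$ — which is the stronger condition and subsumes $\theta \in \tfrac\pi2\bbZ$. So $I(\phi,\theta)$ is unitary iff $\theta \in \tfrac\pi2\bbZ_{odd}$. For such $\theta$, $\sin\theta = (-1)^k$ with $k = \tfrac\theta\pi - \tfrac12 \in \bbZ$, $\cos^2\tfrac\theta2 = \sin^2\tfrac\theta2 = \tfrac12$, so $A = \tfrac12\begin{pmatrix}1 & i(-1)^k e^{-i\phi} \\ -i(-1)^k e^{i\phi} & -1\end{pmatrix}$; one checks $\det A = \tfrac14(-1 + i(-1)^k e^{-i\phi}\cdot i(-1)^k e^{i\phi}) = \tfrac14(-1 - 1) = -\tfrac12$, so $(\det A)^{1/2} = \tfrac{i}{\sqrt2}$ (choosing a branch), and dividing $A$ by this gives \eqref{eq:AtidleI}. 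I would note the global-phase ambiguity in the square root is harmless since $\tilde A_G$ is only defined up to such a phase in the relevant sense, or pin down the branch to match the stated formula.

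For part 3, I would use Lemma \ref{lem:AGXAG}: a Clifford gadget must satisfy $A_G X A_G^\dagger \propto X, Y,$ or $Z$ and likewise for $Z$. Conjugating $X$ (or $Z$) by the unitary $\tilde A_{I(\phi,\theta)}$ from \eqref{eq:AtidleI} and expanding in the Pauli basis, the resulting Pauli components will be trigonometric expressions in $\phi$; requiring that only one Pauli survives forces $\phi \in \tfrac\pi2\bbZ$ (e.g. $e^{i\phi}$ must be real or purely imaginary). Conversely, when $\phi \in \tfrac\pi2\bbZ$ and $\theta \in \tfrac\pi2\bbZ_{odd}$, I would exhibit $\tilde A_{I(\phi,\theta)}$ directly as a Clifford operator (it is a product of $S$-powers, $H$, and Paulis, up to phase) — checking the four residue classes of $\phi$ mod $\tfrac\pi2\bbZ$, noting $\tilde A$ is a fixed matrix like $\tfrac{i}{\sqrt2}\begin{pmatrix}1 & \pm i \\ \mp i & -1\end{pmatrix}$ which is manifestly in the single-qubit Clifford group. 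Part 4 is then immediate: combining parts 2 and 3, $I(\phi,\theta)$ is unitary and non-Clifford iff $\theta \in \tfrac\pi2\bbZ_{odd}$ (unitarity) and \emph{not} ($\phi \in \tfrac\pi2\bbZ$ and $\theta \in \tfrac\pi2\bbZ_{odd}$) (non-Clifford), which simplifies to $\phi \notin \tfrac\pi2\bbZ$ and $\theta \in \tfrac\pi2\bbZ_{odd}$.

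The main obstacle I anticipate is purely bookkeeping: getting every sign, phase, and branch-of-square-root correct in parts 1 and 2, since an error there propagates through everything. In particular the converse direction of part 3 — verifying Cliffordness for each of the four cases $\phi \in \{0,\tfrac\pi2,\pi,\tfrac{3\pi}2\}$ — requires care, though each case is a one-line identification once the matrix is in hand. There is no conceptual difficulty; the substance is that the brute-force gadget $I(\phi,\theta)$ has exactly the right parameter dependence to be a \emph{unitary non-Clifford} gadget precisely on the region (iii), which by Lemma \ref{lem:suffPHsup} then yields $\PH$-supremacy there.
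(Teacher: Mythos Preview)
Your proposal is correct and follows essentially the same route as the paper: direct matrix computation for part~1, the criterion $A^\dagger A \propto I$ (Lemma~\ref{lem:unitaryCCCgadget}) for part~2, and the Pauli-conjugation test (Lemma~\ref{lem:AGXAG}) for part~3, with part~4 immediate. One slip to fix: the two-qubit gate in the gadget $I(\phi,\theta)$ is a \emph{controlled-$Z$}, not a CNOT (both dots are controls in the circuit diagram), so your expansion $\ket 0\bra 0\otimes I + \ket 1\bra 1\otimes X$ should be $\ket 0\bra 0\otimes I + \ket 1\bra 1\otimes Z$; with that correction the direct calculation indeed gives \eqref{eq:actionI}. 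A minor bonus of your write-up is that in part~2 you check equality of the diagonal entries of $A^\dagger A$ separately (forcing $\cos\theta=0$), which cleanly handles the case $\theta\in\pi\bbZ$ where the off-diagonal already vanishes; the paper's converse argument glosses over this.
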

\begin{proof} \hfill
\begin{enumerate}
\item By direct calculation.
\item By \eq{eq:actionI}, 
\begin{equation}
A_{I(\phi,\theta)}^\dag A_{I(\phi,\theta)} = \mat{\cos^2 \tfrac \theta 2 & \tfrac i4 \sin(2\theta) e^{-i \phi} \\ -\tfrac i4 \sin(2\theta) e^{i \phi} & \sin^2 \tfrac \theta 2} .
\end{equation}
If $\theta \in \tfrac \pi 2 \bbZ_{odd}$, then $A_{I(\phi,\theta)}^\dag A_{I(\phi,\theta)} = \tfrac 12 I$, which implies that $I(\phi,\theta)$ is a unitary gadget, by Lemma \ref{lem:unitaryCCCgadget}.
Conversely, assume that $I(\phi,\theta)$ is a unitary gadget. Suppose that $\theta \notin \tfrac \pi 2 \bbZ_{odd}$. Then $\sin(2\theta) \neq 0$, which implies that $A_{I(\phi,\theta)}^\dag A_{I(\phi,\theta)}  \not\propto I$, which is a contradiction. Hence, $\theta \in \tfrac \pi 2 \bbZ_{odd}$.

Next, $k = \tfrac \theta \pi - \tfrac 12$ implies that $\theta = \tfrac\pi 2(2k+1)$. Since $\theta \in \tfrac \pi 2 \bbZ_{odd}$, it follows that $k \in \bbZ$. Then $\sin \theta = (-1)^k$, $\cos^2 \tfrac \theta 2 = \tfrac 12$ and $\sin^2 \tfrac \theta 2 = \tfrac 12$. 
Hence,
\begin{equation} \label{eq:AIphitheta}
A_{I(\phi,\theta)} = \mat{\tfrac 12 & \tfrac i2 (-1)^k e^{-i\phi} \\ -\tfrac i2 (-1)^k e^{i \phi} & -\tfrac 12} .
\end{equation}
Hence, $\det A_{I(\phi,\theta)} = -\tfrac 12$.  Plugging this and \eq{eq:AIphitheta} into \eq{eq:normalizedAction} gives \eq{eq:AtidleI}.
\item $(\Leftarrow)$ Let $\phi \in \tfrac \pi 2 \bbZ$ and $\theta \in \tfrac \pi 2 \bbZ_{odd}$. Write $\phi = \tfrac \pi 2 l$ and $\theta = \tfrac \pi 2(2k+1)$. Then, by \eq{eq:AtidleI},
\begin{equation}
\tilde A_{I(\phi,\theta)} = \frac i{\sqrt 2} \mat{1 & i^{1+2k+3l} \\ i^{3+2k+l} & -1}.
\end{equation}
Now, it is straightforward to check that for all $k,l \in \bbZ$,  $\tilde A_{I(\phi,\theta)} X \tilde A_{I(\phi,\theta)}^\dag \in \{-X,Z,-Z\}$ and $\tilde A_{I(\phi,\theta)} Z \tilde A_{I(\phi,\theta)}^\dag \in \{-Y,X,Y,-X\}$. This shows that $\tilde A_{I(\phi,\theta)}$ maps the Pauli group to itself, under conjugation, which implies that $\tilde A_{I(\phi,\theta)}$ is Clifford.

$(\Rightarrow)$ Assume that $I(\phi,\theta)$ is a Clifford gadget. Suppose that $\phi \notin \tfrac \pi 2 \bbZ$ or $\theta \notin \tfrac \pi 2 \bbZ_{odd}$. But $I(\phi,\theta)$ is unitary, and hence, $\theta \in \tfrac \pi 2 \bbZ_{odd}$. So $\phi \notin \tfrac \pi 2 \bbZ$. By Lemma \ref{lem:AGXAG}, $\tilde A_{I(\phi,\theta)} X \tilde A_{I(\phi,\theta)}^\dag \sim X$ or $Y$ or $Z$. But, as we compute,
\begin{equation}
\tilde A_{I(\phi,\theta)} X \tilde A_{I(\phi,\theta)}^\dag = \mat{(-1)^k \sin \phi & -e^{-i\phi} \cos \phi \\ -e^{i\phi} \cos\phi & -(-1)^k \sin \phi} .
\end{equation}

If $\tilde A_{I(\phi,\theta)} X \tilde A_{I(\phi,\theta)}^\dag \sim X$ or $Y$, then $\sin \phi =0$, which is a contradiction, since $\phi \notin \tfrac \pi 2 \bbZ$. Hence, $\tilde A_{I(\phi,\theta)} X \tilde A_{I(\phi,\theta)}^\dag \sim Z$, which implies that $\cos\phi = 0$. But this also contradicts $\phi \notin \tfrac \pi 2 \bbZ$. Hence, $\phi \in \tfrac \pi 2 \bbZ$ and $\theta \in \tfrac \pi 2 \bbZ_{odd}$.

\item Follows from Parts 2 and 3 of Theorem \ref{thm:propI}.
\end{enumerate}
\end{proof}

\begin{theo}
Let $U = R_z(\phi) R_x(\theta)$. If $\phi \notin \tfrac \pi 2 \bbZ$ and $\theta \in \tfrac \pi 2 \bbZ_{odd}$, then $U$-CCCs are $\PH$-supreme.
\end{theo}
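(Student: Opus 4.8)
The plan is to reduce this case to the sufficient condition for $\PH$-supremacy established in Lemma \ref{lem:suffPHsup}, namely the existence of a unitary non-Clifford $U$-CCC postselection gadget. Since all of the required properties of the candidate gadget $I(\phi,\theta)$ have already been worked out in Theorem \ref{thm:propI}, the proof is essentially a one-line invocation of these facts.

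\begin{proof}
By Part 4 of Theorem \ref{thm:propI}, since $\phi \notin \tfrac \pi 2 \bbZ$ and $\theta \in \tfrac \pi 2 \bbZ_{odd}$, the gadget $I(\phi,\theta)$ is a unitary non-Clifford $U$-CCC postselection gadget. Hence, by Lemma \ref{lem:suffPHsup}, $U$-CCCs are $\PH$-supreme.
\end{proof}

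The only thing worth double-checking is that the hypotheses line up exactly: Lemma \ref{lem:suffPHsup} requires merely that \emph{some} unitary non-Clifford gadget exists, and Theorem \ref{thm:propI}(4) supplies one precisely in the regime $\phi \notin \tfrac\pi2\bbZ$, $\theta \in \tfrac\pi2\bbZ_{odd}$ that defines Case (iii). So there is no real obstacle here; all the work has been front-loaded into constructing $I(\phi,\theta)$, computing its action via \eqref{eq:actionI}, verifying unitarity through $A_G^\dag A_G \propto I$, and checking the non-Clifford condition by exhibiting that $\tilde A_{I(\phi,\theta)} X \tilde A_{I(\phi,\theta)}^\dag$ fails to be proportional to a Pauli whenever $\phi \notin \tfrac\pi2\bbZ$. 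If anything, the subtle point lies upstream in Lemma \ref{lem:suffPHsup} — that one may invoke the inverse-free Solovay--Kitaev theorem of Sardharwalla \emph{et al.} because $\langle H, S\rangle$ contains the Paulis, and that Aaronson's $\PostBQP = \PP$ construction tolerates inverse-exponential compilation error — but that machinery is already in place and need not be revisited for Case (iii).
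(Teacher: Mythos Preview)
Your proof is correct and essentially identical to the paper's own proof, which likewise invokes Theorem \ref{thm:propI} (specifically Part 4) to obtain a unitary non-Clifford gadget and then applies Lemma \ref{lem:suffPHsup}. There is nothing to add; your commentary about where the real work lies is also accurate.
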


\begin{proof}
By Theorem \ref{thm:propI}, when $\phi \notin \tfrac \pi 2 \bbZ$ and $\theta \in \tfrac \pi 2 \bbZ_{odd}$, then $I(\phi,\theta)$ is a unitary non-Clifford $U$-CCC postselection gadget. Hence, by Lemma \ref{lem:suffPHsup}, $U$-CCCs are $\PH$-supreme.
\end{proof}

\subsubsection{Proof of Case (iv): $\theta \notin \tfrac \pi 2 \bbZ$} \label{sec:Case4}

Let $U= R_z(\phi) R_x(\theta)$. Consider the following $U$-CCC postselection gadget:
\begin{equation}\label{eq:gadgetJ} 
J(\phi,\theta) = \qquad 
\Qcircuit @C=1em @R=1em {
& \qw & \qw & \qw & \ctrl{1}  & \qw  & \qw & \qw   \\
& \lstick{\ket{0}} & \gate{U} & \gate S &  \control \qw  & \gate{U^\dag}  & \qw & \bra{0}
}
\end{equation}

We now prove some properties about $J(\phi, \theta)$.

\begin{theo} \label{thm:propJ} \hfill
\begin{enumerate} 
\item  The action of $J(\phi,\theta)$ is
\begin{eqnarray} \label{eq:actionJ}
A_{J(\phi,\theta)} &=&
\frac 1{\sqrt 2} e^{-i \frac \pi 4} \mat{i + \cos \theta & 0 \\ 0 & 1+ i \cos\theta} \nonumber\\
&=& \frac i{\sqrt 2} e^{-i \tfrac \pi 4} \sqrt{1+\cos^2 \theta} \ S^\dag R_z(2 \tan^{-1}( \cos\theta )).
\end{eqnarray}
\item $J(\phi,\theta)$ is a unitary gadget for all $\theta, \phi \in [0,2\pi)$. The normalized action is 
\begin{equation}
\tilde A_{J(\phi,\theta)} \sim S^\dag R_z(2 \tan^{-1}( \cos\theta )).
\end{equation} 
\item $J(\phi,\theta)$ is a Clifford gadget if and only if $\theta \in \tfrac \pi 2 \bbZ$.
\item $J(\phi,\theta)$ is a unitary non-Clifford gadget if and only if $\theta \notin \tfrac \pi 2 \bbZ$.
\end{enumerate}
\end{theo}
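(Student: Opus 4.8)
The plan is to establish the four parts in sequence, putting all the real work into Part~1 and deriving Parts~2--4 as formal consequences. For Part~1, I would propagate the ancilla (bottom) wire of \eqref{eq:gadgetJ} through the circuit. Starting from $\ket0$, after $U=R_z(\phi)R_x(\theta)$ and then $S$ the ancilla is in $e^{-i\phi/2}\cos\tfrac\theta2\ket0+e^{i\phi/2}\sin\tfrac\theta2\ket1$, the $-i$ that $R_x(\theta)$ puts on the $\ket1$ branch being cancelled by $S$. The two-bullet gate in \eqref{eq:gadgetJ} is a controlled-$Z$, so conditioning on the ancilla applies $I$ to the system in the $\ket0$ branch and $Z$ in the $\ket1$ branch; finally $U^\dag$ on the ancilla followed by postselection on $\bra0$ contracts the two branches against $\bra0 U^\dag\ket0=e^{i\phi/2}\cos\tfrac\theta2$ and $\bra0 U^\dag\ket1=i\,e^{-i\phi/2}\sin\tfrac\theta2$. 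Multiplying out, every $e^{\pm i\phi/2}$ cancels and one is left with $A_{J(\phi,\theta)}=\cos^2\tfrac\theta2\,I+i\sin^2\tfrac\theta2\,Z$; inserting the half-angle identities $\cos^2\tfrac\theta2=\tfrac12(1+\cos\theta)$, $\sin^2\tfrac\theta2=\tfrac12(1-\cos\theta)$ gives the diagonal matrix of \eqref{eq:actionJ}. For the second line of \eqref{eq:actionJ}, I would factor out $\tfrac{i}{\sqrt2}e^{-i\pi/4}\sqrt{1+\cos^2\theta}$ and check that the residual diagonal unitary is $S^\dag R_z(2\tan^{-1}(\cos\theta))$, using $\cos(\tan^{-1}c)=(1+c^2)^{-1/2}$, $\sin(\tan^{-1}c)=c(1+c^2)^{-1/2}$ together with $i(1-i\cos\theta)=\cos\theta+i$ and $i(\cos\theta-i)=1+i\cos\theta$ to line the two diagonal entries up against the single common scalar.

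With Part~1 in hand, Part~2 is immediate: $A_{J(\phi,\theta)}$ is a nonzero scalar multiple of the unitary $S^\dag R_z(2\tan^{-1}(\cos\theta))$, so $A_{J(\phi,\theta)}^\dag A_{J(\phi,\theta)}=\tfrac{1+\cos^2\theta}{2}\,I$ with $\tfrac{1+\cos^2\theta}{2}>0$, and Lemma~\ref{lem:unitaryCCCgadget} yields unitarity of the gadget for all $\phi,\theta$. Since $\tilde A_{J(\phi,\theta)}$ is by definition the determinant-one rescaling of $A_{J(\phi,\theta)}$, it coincides with $S^\dag R_z(2\tan^{-1}(\cos\theta))$ up to a global phase (a square root of $1/\det(S^\dag R_z)=i$), which is the asserted relation $\tilde A_{J(\phi,\theta)}\sim S^\dag R_z(2\tan^{-1}(\cos\theta))$.

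For Part~3, I would argue directly from this closed form rather than through Lemma~\ref{lem:AGXAG}. If $\theta\in\tfrac\pi2\bbZ$ then $\cos\theta\in\{-1,0,1\}$, hence $2\tan^{-1}(\cos\theta)\in\{-\tfrac\pi2,0,\tfrac\pi2\}\subseteq\tfrac\pi2\bbZ$, so by Fact~\ref{fact:RzClifford} the operator $R_z(2\tan^{-1}(\cos\theta))$ --- hence $S^\dag R_z(2\tan^{-1}(\cos\theta))$, hence $\tilde A_{J(\phi,\theta)}$ --- is Clifford, and the gadget is Clifford. Conversely, if the gadget is Clifford then $S^\dag R_z(2\tan^{-1}(\cos\theta))$ is Clifford up to phase, so (as $S$ is Clifford) $R_z(2\tan^{-1}(\cos\theta))$ is Clifford, and Fact~\ref{fact:RzClifford} forces $2\tan^{-1}(\cos\theta)\in\tfrac\pi2\bbZ$, i.e.\ $\tan^{-1}(\cos\theta)\in\tfrac\pi4\bbZ$; since $\cos\theta\in[-1,1]$ we have $\tan^{-1}(\cos\theta)\in[-\tfrac\pi4,\tfrac\pi4]$, whose only elements of $\tfrac\pi4\bbZ$ are $-\tfrac\pi4,0,\tfrac\pi4$, giving $\cos\theta\in\{-1,0,1\}$ and therefore $\theta\in\tfrac\pi2\bbZ$. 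Finally Part~4 is just the conjunction of Parts~2 and~3: the gadget is always unitary, so it is a unitary non-Clifford gadget exactly when it is non-Clifford, namely when $\theta\notin\tfrac\pi2\bbZ$. The only points I would take care over are reading off from the diagram in \eqref{eq:gadgetJ} that the two-dot gate is a controlled-$Z$ and not $\mathrm{CNOT}$, and the last algebraic step of Part~1 identifying the diagonal unitary with $S^\dag R_z(2\tan^{-1}(\cos\theta))$.
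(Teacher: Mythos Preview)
Your proposal is correct and follows essentially the same approach as the paper: Part~1 is dispatched in the paper as ``by direct calculation,'' and you have simply written out that calculation; Parts~2--4 are argued exactly as the paper does (unitarity via the positive scalar in front of a unitary, the Clifford characterization via Fact~\ref{fact:RzClifford} applied to $R_z(2\tan^{-1}(\cos\theta))$, and Part~4 as the conjunction of Parts~2 and~3). Your caveat about reading the two-bullet gate in \eqref{eq:gadgetJ} as a controlled-$Z$ is correct and consistent with the paper's conventions (cf.\ the appendix, gadget \eqref{gadget:orig}).
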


\begin{proof} \hfill
\begin{enumerate}
\item By direct calculation.
\item The determinant of $A_{J(\phi,\theta)}$ is
\begin{equation}
\det A_{J(\phi,\theta)} = \tfrac 12 (1+ \cos^2 \theta) \neq 0
\end{equation}
for all $\theta$ and $\phi$. Hence, $A_{J(\phi, \theta)} \propto S^\dag R_z(2 \tan^{-1}( \cos\theta ))$ for all $\theta$ and $\phi$, which implies that $J(\phi, \theta)$ is a unitary gadget for all $\theta$ and $\phi$. 

Hence, $$\tilde A_{J(\phi,\theta)} = \frac{A_{J(\phi,\theta)}}{\sqrt{\det A_{J(\phi,\theta)}}} = i e^{-i \frac \pi 4} S^\dag R_z(2 \tan^{-1}( \cos\theta )). $$
\item 
\begin{eqnarray} 
J(\phi,\theta) \mbox{ is a Clifford gadget} 
& \Leftrightarrow & S^\dag R_z(2 \tan^{-1}( \cos\theta )) \mbox{ is Clifford} \nonumber\\
& \Leftrightarrow & R_z(2 \tan^{-1}( \cos\theta )) \mbox{ is Clifford} \nonumber\\
& \Leftrightarrow & 2 \tan^{-1}( \cos\theta ) \in \tfrac \pi 2 \bbZ  \quad \mbox{ by Fact \ref{fact:RzClifford}} \nonumber\\
& \Leftrightarrow & \cos \theta \in \{0,1,-1\} \nonumber\\
& \Leftrightarrow & \theta \in \tfrac \pi 2 \bbZ .
\end{eqnarray}
\item  Follows from Parts 2 and 3 of Theorem \ref{thm:propJ}.
\end{enumerate}

\end{proof}

\begin{theo}
Let $U = R_z(\phi) R_x(\theta)$. If $\theta \notin \tfrac \pi 2 \bbZ$, then $U$-CCCs are $\PH$-supreme.
\end{theo}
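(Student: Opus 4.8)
The plan is to reduce immediately to Lemma \ref{lem:suffPHsup}, which states that exhibiting a single unitary non-Clifford $U$-CCC postselection gadget already forces $\PH$-supremacy. All the ingredients needed for such a gadget in this regime have been packaged into Theorem \ref{thm:propJ}, so the argument should be very short.

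Concretely, I would proceed as follows. First, observe that $J(\phi,\theta)$ of \eqref{eq:gadgetJ} is a bona fide $2$-to-$1$ $U$-CCC postselection gadget: it adjoins one ancilla in $\ket0$, applies $U$, then the Clifford operation $\mathrm{CNOT}$ preceded by an $S$ on the ancilla wire, then $U^\dagger$ on the ancilla, and postselects the ancilla on $\ket0$. Second, invoke Part 1 of Theorem \ref{thm:propJ}, which gives $A_{J(\phi,\theta)} \propto S^\dagger R_z(2\tan^{-1}(\cos\theta))$ with $\det A_{J(\phi,\theta)} = \tfrac12(1+\cos^2\theta)\neq 0$; by Lemma \ref{lem:unitaryCCCgadget} this makes $J(\phi,\theta)$ unitary for every $\phi,\theta$ (Part 2). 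Third, apply Parts 3--4 of Theorem \ref{thm:propJ}: the normalized action is Clifford exactly when $R_z(2\tan^{-1}(\cos\theta))$ is Clifford, which by Fact \ref{fact:RzClifford} happens iff $\cos\theta\in\{0,1,-1\}$, i.e. iff $\theta\in\tfrac\pi2\bbZ$. Hence under the hypothesis $\theta\notin\tfrac\pi2\bbZ$ the gadget $J(\phi,\theta)$ is unitary but not Clifford. Finally, feed $G=J(\phi,\theta)$ into Lemma \ref{lem:suffPHsup} to conclude that $U$-CCCs are $\PH$-supreme.

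I do not expect a genuine obstacle in this last step: the substantive work has all been done earlier — computing $A_{J(\phi,\theta)}$ and classifying when it is unitary/Clifford (Theorem \ref{thm:propJ}), and bootstrapping one unitary non-Clifford gadget to postselected universality and then $\PP$-hardness inside Lemma \ref{lem:suffPHsup} (via the fact that the Clifford group plus any non-Clifford gate is universal, the inverse-free Solovay--Kitaev theorem, Aaronson's $\PostBQP=\PP$, and the Toda/Stockmeyer collapse). The only point worth flagging is that $J(\phi,\theta)$ works uniformly in $\phi$ — unlike the gadget $I(\phi,\theta)$ of Case (iii) — so Case (iv) disposes of the entire region $\theta\notin\tfrac\pi2\bbZ$ in one stroke, and together with Case (iii) this covers every $\PH$-supreme cell of Table \ref{tab:classification}.
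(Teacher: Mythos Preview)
Your proposal is correct and follows exactly the paper's approach: invoke Theorem~\ref{thm:propJ} (Part 4) to conclude that $J(\phi,\theta)$ is a unitary non-Clifford $U$-CCC postselection gadget whenever $\theta\notin\tfrac\pi2\bbZ$, then feed this gadget into Lemma~\ref{lem:suffPHsup}. (One tiny slip: the two-qubit Clifford in $J(\phi,\theta)$ is a controlled-$Z$, not a $\mathrm{CNOT}$, but since you immediately defer to the precomputed action in Theorem~\ref{thm:propJ} this has no bearing on the argument.)
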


\begin{proof}
By Theorem \ref{thm:propJ}, when $\theta \notin \tfrac \pi 2 \bbZ$, then $I(\phi,\theta)$ is a unitary non-Clifford $U$-CCC postselection gadget. Hence, by Lemma \ref{lem:suffPHsup}, $U$-CCCs are $\PH$-supreme.
\end{proof}

\section{Weak simulation of CCCs with additive error}
\label{sec:additive}

Here we show how to achieve additive hardness of simulating conjugated Clifford circuits, under additional hardness assumptions. Specifically, we will show that under these assumptions, there is no classical randomized algorithm which given a one-qubit unitary $U$ and a Clifford circuit $V$, samples the output distribution of $V$ conjugated by $U$'s up to constant $\ell_1$ error.

In the following, let $V$ be a Clifford circuit on $n$ qubits, $U$ be a one-qubit unitary which is not a $Z$-rotation times a Clifford, and $y\in\{0,1\}^n$ be an $n$-bit string. Define  
\[
p_{y,U,V}=\left|\bra{y} (U^\dagger)^{\otimes n} V U^{\otimes n} \ket{0^n}\right|^2.
\]
In other words $p_{y,U,V}$ is the probability of outputting the string $y$ when applying the circuit $V$ conjugated by $U$'s to the all $0$'s state, and then measuring in the computational basis. Let the corresponding probability distribution on $y$'s given $U$ and $V$ be denoted $D(U,V)$.

\begin{theo} Assuming that PH is infinite and Conjecture \ref{conj:specificavghardness}, then there is no classical algorithm which given a one-qubit unitary $U$ and an $n$-qubit Clifford circuit $V$, outputs a probability distribution which is $1/100$ close to $D(U,V)$ in total variation distance.
\end{theo}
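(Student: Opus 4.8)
The plan is to argue by contradiction, using the three‑ingredient recipe of Aaronson and Arkhipov \cite{aaronsonboson}: anti‑concentration, Markov's inequality, and Stockmeyer's approximate counting \cite{stockmeyer}. Suppose there were a polynomial‑time classical randomized algorithm $\mathcal{A}$ that, given a ($\poly(n)$‑bit description of a) one‑qubit unitary $U$ and an $n$‑qubit Clifford circuit $V$, samples from a distribution $\mathcal{Q}_{U,V}=\{q_{y,U,V}\}_y$ with $\dtv(\mathcal{Q}_{U,V},D(U,V))\le 1/100$ for \emph{every} such $U,V$. Since $\mathcal{A}$ runs in polynomial time using $m=\poly(n)$ random bits, applying Stockmeyer's theorem to the poly‑time‑decidable set $\{r\in\{0,1\}^m:\mathcal{A}(U,V;r)=y\}$ yields an $\mathsf{FBPP}^{\NP}$ procedure $\mathcal{B}$ which on input $(U,V,y)$ outputs $\tilde q_{y,U,V}$ with $|\tilde q_{y,U,V}-q_{y,U,V}|\le q_{y,U,V}/\poly(n)$, an excellent multiplicative approximation of $q_{y,U,V}$. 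The remainder of the argument shows that for typical $(U,V)$ this $\tilde q_{y,U,V}$ is in fact a \emph{constant}‑multiplicative approximation of the true probability $p_{y,U,V}$ for a constant fraction of the strings $y$ --- exactly the task asserted to be $\#\mathsf{P}$‑hard by Conjecture \ref{conj:specificavghardness} --- so the existence of such a $\mathcal{B}$, exactly as in the proof of Lemma \ref{lem:suffPHsup}, would collapse $\PH$ and contradict the hypothesis.

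The first ingredient is an anti‑concentration statement for the distributions $D(U,V)$. Writing $p_{y,U,V}=|\bra{y}(U^\dagger)^{\otimes n}VU^{\otimes n}\ket{0^n}|^2$ and using that the $n$‑qubit Clifford group is a unitary $2$‑design \cite{webbclifford,zhuclifford}, for each fixed $U$ and $y$ one computes $\Ex_V[p_{y,U,V}]=2^{-n}$ and $\Ex_V[p_{y,U,V}^2]=2/(2^n(2^n+1))$. The Paley--Zygmund inequality then gives $\Pr_V[p_{y,U,V}\ge\alpha\,2^{-n}]\ge(1-\alpha)^2/2$ for every $\alpha\in(0,1)$; averaging over uniform $y$, together with a Markov bound on the collision probability $\sum_y p_{y,U,V}^2$ (whose expectation over $V$ is at most $2\cdot 2^{-n}$), shows that for a constant fraction of $(U,V)$ the ``heavy'' set $T_{U,V}=\{y:p_{y,U,V}\ge\alpha\,2^{-n}\}$ contains a constant fraction of all $2^n$ strings.

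The second ingredient combines this with Markov's inequality applied to the simulation error. For every $(U,V)$ the hypothesis gives $\sum_y|p_{y,U,V}-q_{y,U,V}|\le 1/50$, hence $\Ex_y[|p_{y,U,V}-q_{y,U,V}|]\le 2^{-n}/50$, so $S_{U,V}=\{y:|p_{y,U,V}-q_{y,U,V}|\le\beta\,2^{-n}\}$ misses at most a $1/(50\beta)$ fraction of strings. Choosing the constants with $\beta<\alpha$ and with $1/(50\beta)$ below the density of $T_{U,V}$, the set $S_{U,V}\cap T_{U,V}$ still has constant density, and for $y$ in it one has $|p_{y,U,V}-q_{y,U,V}|\le\beta\,2^{-n}\le(\beta/\alpha)\,p_{y,U,V}$, so $q_{y,U,V}$ --- and hence $\tilde q_{y,U,V}$, up to the negligible Stockmeyer slack --- is a multiplicative approximation of $p_{y,U,V}$ to a fixed constant error $<1$. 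Thus $\mathcal{B}\in\mathsf{FBPP}^{\NP}$ performs, for typical $(U,V)$, the approximation task that Conjecture \ref{conj:specificavghardness} asserts to be $\#\mathsf{P}$‑hard; as in Lemma \ref{lem:suffPHsup} this puts $\PP\subseteq\BPP^{\NP}$, so by Toda's theorem $\PH$ collapses to the third level, contradicting the assumption that $\PH$ is infinite.

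The step I expect to be the main obstacle is making all the constants mutually consistent: one needs $\alpha$ small enough that $T_{U,V}$ is a large fraction of the outputs, $\beta<\alpha$ so that the additive error on $S_{U,V}\cap T_{U,V}$ is genuinely a multiplicative approximation with error bounded away from $1$, and the Markov failure fraction $1/(50\beta)$ and the fraction of atypical $V$ both small enough that $S_{U,V}\cap T_{U,V}$ remains a constant fraction whose size and whose quantification over $(U,V)$ match those appearing in Conjecture \ref{conj:specificavghardness}; the fixed tolerance $1/100$ on the total‑variation error is exactly the slack that makes this balancing possible, and if cleaner constants (or a ``$1-o(1)$ fraction of $(U,V)$'' statement) are wanted one can invoke a sharper anti‑concentration bound, e.g.\ exploiting that the Clifford group is even a $3$‑design. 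A minor but necessary technical point is to fix the finite‑precision encoding of the continuous parameter $U$ so that Stockmeyer's theorem applies to $\mathcal{A}$ and so that the reduction and Conjecture \ref{conj:specificavghardness} refer to the same distribution over instances.
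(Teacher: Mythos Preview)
Your proposal is correct and follows essentially the same Aaronson--Arkhipov recipe as the paper: Stockmeyer approximate counting, a Paley--Zygmund anticoncentration bound obtained from the Clifford $2$-design property, and a Markov bound on the simulation error, combined so that the putative sampler yields an $\mathsf{FBPP}^{\NP}$ multiplicative approximator contradicting Conjecture~\ref{conj:specificavghardness}.

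Two small points of comparison. First, the paper avoids your collision-probability detour by working directly with the set $T\subset\{0,1\}^n\times\mathcal{C}$ of \emph{pairs} $(y,V)$: since Paley--Zygmund gives $\Pr_V[p_{y,U,V}\ge a\cdot 2^{-n}]\ge(1-a)^2/2$ for every fixed $y$, averaging over $y$ immediately gives that $T$ has measure $\ge(1-a)^2/2$ among pairs, which is exactly the quantification in Conjecture~\ref{conj:specificavghardness} (``over the choice of $y,V$''); there is no need to first pass to a per-$V$ heavy set $T_{U,V}$. Second, the paper dispenses with the constant-balancing you flag as the main obstacle by simply plugging $a=c=1/5$, $\epsilon=1/100$ into a general statement (Theorem~\ref{l1hardness}), giving precisely the $6/50$ fraction and $1/2+o(1)$ multiplicative error appearing in Conjecture~\ref{conj:specificavghardness}. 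Also note that $U$ is not random in the conjecture (it is universally quantified over non-trivially-conjugating $U$), so your phrasing ``for typical $(U,V)$'' should be tightened to ``for any such $U$, for a constant fraction of $(y,V)$''.
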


\begin{conj}
For any $U$ which is not equal to a Z-rotation times a Clifford, it is $\#\mathsf{P}$-hard to approximate a $6/50$ fraction of the $p_{y,U,V}$ over the choice of $y,V$ to within multiplicative error $1/2+o(1)$.
\label{conj:specificavghardness}
\end{conj}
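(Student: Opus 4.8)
Since this is phrased as a conjecture I would not expect the full average-case claim; the realistic and provable target is its worst-case form: for every single-qubit $U$ that is not $\Gamma R_z(\lambda)$ with $\Gamma$ Clifford, there are a polynomial-size $n$-qubit Clifford circuit $V$ and a string $y$ for which multiplicatively approximating $p_{y,U,V}$ to error $\tfrac12+o(1)$ is $\#\mathsf{P}$-hard. The plan marries the exact-sampling classification of Section~3 with the classical ``$\mathrm{GapP}$-squared'' encoding of amplitudes.

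First I would reduce, exactly as in Section~3, to $U=R_z(\phi)R_x(\theta)$ with $U\not\sim\Gamma R_z(\lambda)$; Theorems~\ref{thm:propI} and~\ref{thm:propJ} then give a \emph{unitary non-Clifford} $U$-CCC postselection gadget $G$ (one of $I(\phi,\theta)$, $J(\phi,\theta)$). Feeding $G$ into the machinery in the proof of Lemma~\ref{lem:suffPHsup} --- Clifford-plus-any-non-Clifford universality plus the inverse-free Solovay--Kitaev theorem --- polynomially many copies of $G$ together with $H,S,\mathrm{CNOT}$ compile, \emph{as a postselected $U$-CCC}, any polynomial-size circuit over $\{\mathrm{CCZ},\ \text{all one-qubit gates}\}$ to an accuracy $\epsilon_{\mathrm{SK}}$ that may be taken inverse-exponential with any polynomial exponent. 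The one extra observation, which I would record as a short lemma, is that since $G$ is \emph{unitary}, Lemma~\ref{lem:unitaryCCCgadget} forces its postselection to succeed with a probability $c=c(U)>0$ that does not depend on the state it acts on; as $G$ is inserted $\poly(|x|)$ times, the overall postselection probability $r$ of the compiled circuit obeys $2^{-\poly}\le r\le 1$, uniformly over the intermediate states.

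Next I would plant a $\#\mathsf{P}$-hard quantity in one output probability. Choose a polynomial-time predicate $M$ so that zero-testing the $\mathrm{GapP}$ function $g(x)=\sum_{w\in\{0,1\}^m}(-1)^{M(x,w)}$ encodes a $\mathsf{C}_{=}\mathsf{P}$-complete problem --- hence a problem that is $\#\mathsf{P}$-hard under polynomial-time Turing reductions, since $\PTIME^{\mathsf{C}_{=}\mathsf{P}}=\PTIME^{\PP}=\PTIME^{\#\mathsf{P}}$. Let $D_x$ be the Hadamard--(reversibly compute $M$)--Hadamard circuit on $m+O(1)$ qubits over $\mathrm{CCZ}$ and one-qubit gates, so that $\langle 0|D_x|0\rangle=g(x)/2^m$, and --- as in the footnote to Lemma~\ref{lem:suffPHsup} --- compile $U^{\otimes m}D_x (U^\dagger)^{\otimes m}$ as a postselected $U$-CCC. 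This produces a Clifford $V_x$ on $N=\poly(|x|)$ qubits and a string $y_x$ (the gadget postselection pattern on the ancilla registers, then $0^{m+O(1)}$ on the logical register) with
\[
p_{y_x,U,V_x}=r\cdot\bigl|\langle 0|D_x^{\mathrm{comp}}|0\rangle\bigr|^2,\qquad \bigl|\,\langle 0|D_x^{\mathrm{comp}}|0\rangle-g(x)/2^m\,\bigr|\le\epsilon_{\mathrm{SK}} .
\]
Picking $\epsilon_{\mathrm{SK}}$ inverse-exponential with a larger exponent than the one in $r\ge 2^{-\poly}$, we get $p_{y_x,U,V_x}\ge 2^{-\poly}$ when $g(x)\ne 0$ and $p_{y_x,U,V_x}\le \epsilon_{\mathrm{SK}}^2$ when $g(x)=0$, with these two thresholds multiplicatively at least (say) a factor $10$ apart. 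Because $\tfrac12+o(1)<1$, any multiplicative $(\tfrac12+o(1))$-approximation of $p_{y_x,U,V_x}$ lands strictly inside one of the two disjoint intervals and so decides whether $g(x)=0$; hence producing such approximations is $\#\mathsf{P}$-hard, which is the worst-case form of the conjecture.

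\textbf{Main obstacle.} The genuinely hard step --- the one separating the above from the full conjecture --- is upgrading ``there exist $(y,V)$'' to ``a $\tfrac{6}{50}$ fraction of the pairs $(y,V)$'', i.e.\ a worst-case-to-average-case reduction for \emph{approximate} computation of these probabilities. This is exactly the gap left open for boson sampling, $\IQP$, Fourier sampling, and $\DQC 1$: Lipton-style polynomial interpolation handles \emph{exact} evaluation of the permanent over many inputs but yields nothing for approximation, and there is no known way to scramble a worst-case Clifford $V$ (or the target $y$) into a near-uniform instance while tracking $p_{y,U,V}$ along an interpolating curve. One would like to leverage the fact that the Cliffords form a unitary $2$-design --- the same property that drives the anticoncentration argument in Section~\ref{sec:additive} --- to spread worst-case hardness over a constant fraction of instances, but no rigorous argument of this kind is known, which is precisely why the statement remains a conjecture.
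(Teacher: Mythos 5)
You have correctly recognized that the statement is a conjecture which the paper itself does not prove: the paper only establishes its worst-case analogue (Theorem \ref{thm:worstcasehardness}), and your proposal targets exactly that, so your calibration matches the paper's. Your worst-case argument has the same skeleton as the paper's proof: take the unitary non-Clifford postselection gadget guaranteed by the classification, compile a hard amplitude through Cliffords plus gadgets using the inverse-free Solovay--Kitaev theorem to inverse-exponential accuracy, use the fact that a \emph{unitary} gadget succeeds with a state-independent, exactly computable probability (the paper's quantity $\alpha$, your $r$), and then run a zero-versus-nonzero gap argument showing that a constant multiplicative approximation of one CCC output probability decides a $\#\mathsf{P}$-hard question. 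Where you genuinely differ is the source of base hardness: the paper imports it, citing Bremner--Montanaro--Shepherd for the $\#\mathsf{P}$-hardness of $1/2$-multiplicatively approximating $|\bra{0^n}C_x\ket{0^n}|^2$ together with Kuperberg's bound that nonzero output probabilities of algebraic-gate-set circuits are at least inverse-exponential, whereas you plant a $\mathrm{GapP}$ function $g(x)$ directly (Hadamard--phase--Hadamard), use integrality $|g(x)|\ge 1$ in place of the algebraic lower bound, and conclude $\#\mathsf{P}$-hardness from $\mathsf{C}_{=}\mathsf{P}$ zero-testing via $\PTIME^{\mathsf{C}_{=}\mathsf{P}}=\PTIME^{\PP}=\PTIME^{\#\mathsf{P}}$ (a true but citation-worthy fact, due to Green). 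Your route is more self-contained and in fact tolerates any constant multiplicative error below $1$, so it naturally yields the conjecture's parameter $1/2+o(1)$, whereas the paper's theorem is stated with error $1/2-o(1)$; the paper's route is shorter given its citations. Your closing diagnosis of the real gap --- upgrading ``there exist $(y,V)$'' to a constant fraction, with no known worst-to-average reduction for approximate evaluation --- is precisely the paper's own assessment (the paper additionally notes the Aaronson--Chen relativization barrier).

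Two small points to tighten. First, the accuracy $\epsilon_{\mathrm{SK}}$ should be chosen small relative to the minimum nonzero amplitude $2^{-m}$ (the factor $r$ multiplies both cases and cancels in their ratio), not relative to $r$ as you phrased it. Second, for the compiled object to be a legitimate $U$-CCC you should note that the gadgets $I(\phi,\theta)$ and $J(\phi,\theta)$ have all ancillas initialized to $\ket{0}$, a point the paper makes explicitly in its Fact 3.
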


In order to prove this we'll actually prove a more general theorem described below; the result will then follow from simply setting $a=c=1/5$, $\epsilon=1/100$. One can in general plug in any values they like subject to the constraints; for instance one can strengthen the hardness assumption by assuming computing a smaller fraction of the $p_{y,U,V}$ is still $\#\mathsf{P}$-hard to obtain larger allowable error in the simulation. These parameters are similar to those appearing in other hardness conjectures, for example those used for $\IQP$ \cite{bremner2016average}.

\begin{theo}
\label{l1hardness}
Pick constants $0<\epsilon,a,c <1$ such that $(1-a)^2/2-c>0$ and $\frac{2\epsilon}{ac}<1$. Then assuming Conjecture \ref{conj:avghardness}, given a one-qubit unitary $U$ and an $n$-qubit Clifford circuit $V$, one cannot weakly simulate the distribution $D(U,V)$ with a randomized classical algorithm with total variation distance error $\epsilon$, unless the polynomial hierarchy collapses to the third level. 
\end{theo}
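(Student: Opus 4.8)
The plan is to run the standard ``anti-concentration $+$ Markov $+$ Stockmeyer'' argument (as in \cite{aaronsonboson,bremner2016average,feffermanfourier,morimae2017hardness}), instantiated for conjugated Clifford circuits. Fix a one-qubit $U$ that is not a $Z$-rotation times a Clifford, let $V$ range over the uniform distribution on the $n$-qubit Clifford group (written as a polynomial-size circuit), and set $N=2^n$. The three ingredients are: (1) a constant fraction of the $p_{y,U,V}$, over uniform $y\in\{0,1\}^n$ and random $V$, are at least $a/N$; (2) a hypothetical classical sampler with total variation error $\epsilon$ must, on all but a $c$-fraction of $(y,V)$, be additively accurate to within $\tfrac{2\epsilon}{c}\cdot\tfrac1N$; (3) combining (1) and (2) yields, on a positive-density set of $(y,V)$, a \emph{multiplicative} $\tfrac{2\epsilon}{ac}$-approximation to $p_{y,U,V}$ computable in the polynomial hierarchy, contradicting Conjecture~\ref{conj:avghardness}.

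For the anti-concentration step, fix $y$ and write $\ket\beta=U^{\otimes n}\ket{0^n}$ and $\bra\alpha=\bra y(U^\dagger)^{\otimes n}$, so $p_{y,U,V}=|\bra\alpha V\ket\beta|^2$. Since the Clifford group is an exact unitary $2$-design \cite{webbclifford,zhuclifford}, for \emph{every} fixed unit vectors $\ket\alpha,\ket\beta$ one has $\mathbb{E}_V[p_{y,U,V}]=\tfrac1N$ and $\mathbb{E}_V[p_{y,U,V}^2]$ equals its Haar value $\tfrac{2}{N(N+1)}$ (evaluate $\bra{\alpha\alpha}\tfrac{I+S}{N(N+1)}\ket{\alpha\alpha}$, $S$ the swap); in particular the conjugating $U$'s are irrelevant here, even though the full unitary $(U^\dagger)^{\otimes n}VU^{\otimes n}$ is not a $2$-design. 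The Paley--Zygmund inequality then gives $\Pr_V[p_{y,U,V}\ge a/N]\ge (1-a)^2\tfrac{N+1}{2N}\ge (1-a)^2/2$, and averaging over $y$ yields $\Pr_{y,V}[p_{y,U,V}\ge a/N]\ge (1-a)^2/2$.

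Now suppose for contradiction that a randomized polynomial-time algorithm $C$, on input $U$ and a circuit for $V$, samples from $\{q_{y,U,V}\}_y$ with $\tfrac12\sum_y|p_{y,U,V}-q_{y,U,V}|\le\epsilon$. Put $\Delta_{y,V}=|p_{y,U,V}-q_{y,U,V}|$; then $\mathbb{E}_{y,V}[\Delta_{y,V}]\le 2\epsilon/N$, so Markov's inequality gives $\Pr_{y,V}[\Delta_{y,V}>\tfrac{2\epsilon}{cN}]\le c$. Because $(1-a)^2/2-c>0$, intersecting with the anti-concentration event leaves a set of $(y,V)$ of density at least $(1-a)^2/2-c$ on which $p_{y,U,V}\ge a/N$ and $\Delta_{y,V}\le\tfrac{2\epsilon}{cN}$, hence $|p_{y,U,V}-q_{y,U,V}|\le\tfrac{2\epsilon}{ac}\,p_{y,U,V}$ with $\tfrac{2\epsilon}{ac}<1$. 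Since $q_{y,U,V}=\Pr_r[C(U,V,r)=y]$ is the acceptance fraction of a polynomial-time machine, Stockmeyer's approximate counting theorem \cite{stockmeyer} approximates it to multiplicative error $1+o(1)$ in $\BPP^{\NP}$; composing the two multiplicative approximations (using $q_{y,U,V}<2p_{y,U,V}$) produces a $\BPP^{\NP}$ procedure that computes $p_{y,U,V}$ to multiplicative error $\tfrac{2\epsilon}{ac}+o(1)$ for a $\ge(1-a)^2/2-c$ fraction of $(y,V)$. By Conjecture~\ref{conj:avghardness} this task is $\#\mathsf{P}$-hard, so $\#\mathsf{P}\subseteq\BPP^{\NP}$, whence $\PH\subseteq\mathsf{P}^{\#\mathsf{P}}\subseteq\BPP^{\NP}\subseteq\Delta_3$ by Toda's theorem \cite{Toda91}; that is, $\PH$ collapses to its third level.

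Most of this is bookkeeping: the hypotheses $(1-a)^2/2-c>0$ and $\tfrac{2\epsilon}{ac}<1$ are precisely what make the intersection in Step~2 have positive density and Step~3's multiplicative approximation nontrivial, and given Conjecture~\ref{conj:avghardness} there is no genuine obstacle. The one point specific to this model, rather than inherited verbatim from boson sampling or $\IQP$, is the observation that sandwiching the Clifford circuit by $U^{\otimes n}$ and $(U^\dagger)^{\otimes n}$ does not spoil anti-concentration: one must phrase $p_{y,U,V}$ as $|\bra\alpha V\ket\beta|^2$ with $\ket\alpha,\ket\beta$ fixed and invoke the $2$-design property of $V$ alone, since the conjugated circuit need not anti-concentrate on its own. (The separate claim that Conjecture~\ref{conj:avghardness} holds at least in the worst case follows from Theorem~\ref{cor:classificationWeak} together with standard facts about $\BQP$, as noted in the text.)
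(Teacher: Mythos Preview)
Your proposal is correct and follows essentially the same route as the paper: Paley--Zygmund anti-concentration via the exact $2$-design property of the Clifford group (your second-moment value $\tfrac{2}{N(N+1)}$ is algebraically identical to the paper's $\tfrac{2}{2^{2n}-1}(1-2^{-n})$), Markov's inequality to control the additive error on all but a $c$-fraction of $(y,V)$, intersection of the two large sets, and the Stockmeyer/Toda collapse. The only cosmetic difference is that the paper applies Markov per $V$ and then averages over $V$, whereas you apply it directly over the joint $(y,V)$ distribution; the conclusions coincide.
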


\begin{conj} For any $U$ which is not equal to a Z-rotation times a Clifford, it is $\#\mathsf{P}$-hard to multiplicatively approximate $(1-a)^2/2-c$ fraction of the $p_{y,U,V}$ over the choice of $(y,V)$, up to multiplicative error $\frac{2\epsilon}{ac}+o(1)$.
\label{conj:avghardness}
\end{conj}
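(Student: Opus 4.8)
The statement as worded is an average-case $\#\mathsf{P}$-hardness conjecture of exactly the flavour that underlies every known additive-error supremacy theorem -- Gaussian permanents for \textsc{BosonSampling} \cite{aaronsonboson}, the $\IQP$ conjecture \cite{bremner2016average}, the $\DQC1$ conjecture \cite{morimae2017hardness}, Fourier sampling \cite{feffermanfourier} -- and, as this paper's introduction already concedes, the full average-case form is not expected to follow from present techniques. What can actually be established, and what I would prove, is the corresponding \emph{worst-case} statement: for every $U$ that is not a Clifford times a $Z$-rotation (i.e. $U\not\sim\Gamma R_z(\lambda)$, the non-efficiently-simulable case of Theorem~\ref{cor:classificationWeak}), there is a Clifford circuit $V$ and a string $y$ such that approximating $p_{y,U,V}$ to multiplicative error $\eta$ is $\#\mathsf{P}$-hard for every constant $\eta<1$. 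The plan is to feed the postselection-gadget machinery of Section~\ref{sec:ProofOfMultHardness} into the standard fact that output probabilities of universal quantum circuits are $\#\mathsf{P}$-hard to approximate multiplicatively, and then to pin down exactly the gap to the average case.

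\emph{Worst-case hardness.} As in Theorem~\ref{cor:classificationWeak}, the $p_{y,U,V}$ depend on $U$ only through its $R_z(\phi)R_x(\theta)$ part (the global phase and the trailing $R_z(\lambda)$ are absorbed), so it suffices to take $U=R_z(\phi)R_x(\theta)$ in case (iii) or (iv) of Lemma~\ref{thm:multHardness}. By Theorems~\ref{thm:propI} and~\ref{thm:propJ} there is then a \emph{unitary, non-Clifford} $U$-CCC postselection gadget $G$ -- namely $I(\phi,\theta)$ or $J(\phi,\theta)$ -- whose internal postselection succeeds with a fixed constant probability $q\in(0,1]$ read off from $A_G^\dagger A_G=qI$. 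Exactly as in the proof of Lemma~\ref{lem:suffPHsup}, $\{G,H,S,\mathrm{CNOT}\}$ is postselected-universal and the inverse-free Solovay--Kitaev theorem \cite{sardharwalla2016universal} compiles any poly-size circuit $C$ on $m$ qubits to inverse-exponential accuracy using polynomially many of these gadgets, each use of $G$ contributing an independent postselection factor $q$ and the other gates contributing $1$.

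Next invoke the counting-hardness of $\BQP$ amplitudes in the form used by Bremner--Jozsa--Shepherd \cite{BJS2010}: there is a uniform family of poly-size circuits $C_x$ over $\{H,S,\mathrm{CNOT},T\}$ with $|\bra{0^m}C_x\ket{0^m}|^2=0$ when $x\in L$ and $|\bra{0^m}C_x\ket{0^m}|^2\ge 2^{-2p(|x|)}$ when $x\notin L$, where deciding whether this amplitude vanishes is $\mathsf{C}_{=}\mathsf{P}$-complete, so that (via $\mathsf{PP}\subseteq\mathsf{NP}^{\mathsf{C}_{=}\mathsf{P}}$ and Toda's theorem) solving it within $\BPP^{\NP}$ would collapse $\PH$. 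Compiling $C_x$ by the construction above yields a Clifford circuit $V_x$ on $n=\poly(|x|)$ qubits, and for the explicit string $y_x$ that sets every gadget-ancilla register to its postselection target and every logical register to $\ket{0}$, chaining the relations $A_{G_i}^\dagger A_{G_i}=q_iI$ along the compiled sequence gives
\begin{equation*}
p_{y_x,U,V_x}\;=\;\kappa_x\,\bigl|\bra{0^m}\widetilde{C}_x\ket{0^m}\bigr|^2,\qquad \kappa_x=\prod_i q_i\ \ge\ 2^{-\poly(|x|)},
\end{equation*}
where $\widetilde{C}_x$ is the inverse-exponentially accurate compiled unitary and $\kappa_x$ is a constant one can compute from $U$ and $x$. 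Since $|\bra{0^m}\widetilde{C}_x\ket{0^m}|^2$ differs from $|\bra{0^m}C_x\ket{0^m}|^2$ by at most $2^{-\omega(p(|x|))}$, dividing a multiplicative-$\eta$ approximation of $p_{y_x,U,V_x}$ by $\kappa_x$ tells whether $|\bra{0^m}C_x\ket{0^m}|^2$ is $0$ or $\ge 2^{-2p(|x|)}$, for any constant $\eta<1$; as a classical weak simulator to multiplicative error $\eta$ would supply that approximation in $\BPP^{\NP}$ via Stockmeyer counting, $\PH$ would collapse, and in this sense $p_{y_x,U,V_x}$ is $\#\mathsf{P}$-hard to approximate multiplicatively. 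This is the worst-case form of the conjecture.

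\emph{The obstacle.} The gap -- the hard part, and the reason the statement must stay a conjecture -- is the passage from ``there exists a hard pair $(V_x,y_x)$'' to ``a constant fraction of pairs $(y,V)$ are hard''. This would need a worst-case-to-average-case reduction for the family $|\bra{y}(U^\dagger)^{\otimes n}VU^{\otimes n}\ket{0^n}|^2$, i.e. the analogue of Lipton's self-correction of the permanent over large fields that makes the \textsc{BosonSampling} conjecture believable, and no such reduction is known here: these probabilities are not exhibited as low-degree polynomials in parameters that can be randomized while keeping the hard value recoverable. That is precisely where every existing additive-error supremacy argument stops, so the right conclusion is to leave the average-case statement as a conjecture, with the worst-case result above as the main unconditional evidence for it.
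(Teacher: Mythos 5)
You correctly treat the statement as what it is -- a conjecture the paper does not prove -- and your fallback, proving the worst-case version as evidence, is exactly the paper's own strategy (Theorem~\ref{thm:worstcasehardness}). Your sketch of that worst-case result also follows the paper's architecture: take the unitary non-Clifford postselection gadget guaranteed by the classification, compile a hard circuit family with the inverse-free Solovay--Kitaev theorem \cite{sardharwalla2016universal}, use unitarity of the gadgets so the postselection factor $\kappa_x$ (the paper's $\alpha$) is exactly computable and multiplies out, and use the inverse-exponential lower bound on nonzero amplitudes of algebraic-gate circuits \cite{kuperberg2009hard} to separate the zero and nonzero cases despite compilation error. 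The one genuine difference is your source of base hardness: you use the \cite{BJS2010}-style $\mathsf{C}_{=}\mathsf{P}$-complete problem of deciding whether $|\bra{0^m}C_x\ket{0^m}|^2$ vanishes, whereas the paper imports from \cite{bremner2016average} the literal $\#\mathsf{P}$-hardness of approximating $|\bra{0^m}C_x\ket{0^m}|^2$ to multiplicative error $1/2$. Your route therefore proves, strictly speaking, that a multiplicative approximation of $p_{y_x,U,V_x}$ would decide a $\mathsf{C}_{=}\mathsf{P}$-hard problem (hence collapse $\PH$ if achievable in $\BPP^{\NP}$), which is weaker than the theorem's literal claim that the approximation task itself is $\#\mathsf{P}$-hard; your parenthetical ``for every constant $\eta<1$'' overstates what that reduction gives, and to match Theorem~\ref{thm:worstcasehardness} (error $1/2-o(1)$, genuine $\#\mathsf{P}$-hardness) you should substitute the multiplicative-approximation hardness of \cite{bremner2016average} as the paper does -- a one-line change that costs you nothing. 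Your closing discussion of the obstacle (no worst-to-average self-reduction in the style of Lipton's permanent argument) is a reasonable complement to the paper's own framing, which instead cites the Aaronson--Chen relativization barrier \cite{aaronson2016complexity}; both correctly locate why the average-case statement must remain a conjecture.
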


\begin{proof}[Proof of Theorem \ref{l1hardness}]

Suppose by way of contradiction that there exists a classical poly-time randomized algorithm which given inputs $U,V$ outputs samples from a distribution $D'(U,V)$ such that $\frac{1}{2}|D(U,V)-D'(U,V)|_1 < \epsilon$.
In particular, let $q_{y,U,V}$ be the probability that $D'(U,V)$ outputs $y$ -- i.e. the probability that the simulation outputs $y$ under inputs $U,V$.

By our simulation assumption, for all $U,V$ we have that
$\sum_{y} |q_{y,U,V} - p_{y,U,V}| \leq 2\epsilon$.
Therefore by Markov's inequality, given our constant $0<c<1$, we have that for all $U$ and $V$ there exists a set $S'\subseteq \{0,1\}^n$ of output strings $y$ of size $|S'|/2^n > 1-c$, such that for all $y\in S'$,
\[|q_{y,U,V} - p_{y,U,V}|\leq \frac{2\epsilon}{c 2^n}.\]

In particular, by averaging over $V$'s, we see that for any $U$ as above, there exists a set $S \subset \{0,1\}^n \times \mathcal{C}$ of pairs $(y,V)$ such that for all $(y,V)\in S$, $|q_{y,U,V} - p_{y,U,V}|\leq \frac{2\epsilon}{c 2^n}.$ Furthermore $S$ has measure at least $(1-c)$ over a uniformly random choice of $(y,V)$.

We now show the following anticoncentration lemma (similar theorems were shown independently in \cite{hangleiter2017anti, bremner2017upcoming,saeed2018upcoming}):
\begin{lemma} 
\label{anticonc}
For any fixed $U$ and $y$ as above, and for any constant $0<a<1$, we have that at least $\frac{(1-a)^2}{2}$ fraction of the Clifford circuits $V$ have the property that

\[p_{y,U,V} \geq \frac{a}{2^n} . \]

\end{lemma}

We will prove Lemma \ref{anticonc} shortly. First, we will show why this implies Theorem \ref{l1hardness}. 
In particular, by averaging Lemma \ref{anticonc} over $y$'s, we see that for any $U$ as above, there exists a set $T \subset \{0,1\}^n \times \mathcal{C}$ of pairs $(y,V)$ such that  for all $(y,V)\in T$, $p_{y,U,V} \geq \frac{a}{2^n}$. Furthermore $T$ has measure at least $\frac{(1-a)^2}{2}$ over a uniformly random choice of $(y,V)$.
Since we assumed that $(1-a)^2/2 + (1-c) >1$, then $S\cap T$ must be nonempty, and in particular must contain $(1-a)^2/2-c$ fraction of the pairs $(y,V)$. On this set $S\cap T$, we have that

\[q_{y,U,V} \leq p_{y,U,V} + \frac{2\epsilon}{c2^n} =p_{y,U,V} + \frac{2\epsilon}{ac}\frac{a}{2^n}  \leq  \left(1 + \frac{2\epsilon}{ac}\right)p_{y,U,V}, \]
and likewise
\[q_{y,U,V} \geq p_{y,U,V} - \frac{2\epsilon}{c2^n} =p_{y,U,V} - \frac{2\epsilon}{ac}\frac{a}{2^n}  \geq  \left(1 - \frac{2\epsilon}{ac}\right)p_{y,U,V} . \]

Since $1-\frac{2\epsilon}{ac}>0$ (which we guaranteed by assumption),  $q_{y,U,V}$ is a multiplicative approximation to $p_{y,U,V}$ with multiplicative error $\frac{2\eps}{ac}$ for $(y,V)$ in the set $S\cap T$. The set $S\cap T$ contains at least $(1-a)^2/2-c$ fraction of the total pairs $(y,V)$.

On the other hand, by Conjecture \ref{conj:avghardness} we have that computing a $ (1-a)^2/2 -c$ fraction of the $p_{y,U,V}$ to this level of multiplicative error is a  $\#\mathsf{P}$-hard task. So approximating $p_{y,U,V}$ to this level of multiplicative error for this fraction of outputs is both $\#\mathsf{P}$-hard, and achievable by our simulation algorithm. 
This collapses $\PH$ to the third level by known arguments \cite{aaronsonboson,BJS2010}. 
In particular, by applying Stockmeyer's approximate counting algorithm \cite{stockmeyer} to $p_{y,U,V}$, one can multiplicatively approximate $q_{y,U,V}$ to multiplicative error $\frac{1}{\mathsf{poly}}$ in $\mathsf{F}\BPP^\NP$ for those elements in $S\cap T$. 
But since $q_{y,U,V}$ is a  $\frac{2\epsilon}{ac}$-approx to $p_{y,U,V}$, this is a $\frac{2\epsilon}{ac}+o(1)$ multiplicative approximation to $p_{y,U,V}$ in $S\cap T$.
Hence a $\#\mathsf{P}$-hard quantity is in $\mathsf{F}\BPP^\NP$. 
This collapses $\PH$ to the third level by Toda's theorem \cite{Toda91}.

To complete our proof of Theorem \ref{l1hardness}, we will prove Lemma \ref{anticonc}.

\begin{proof}[Proof of Lemma \ref{anticonc}]

To prove this, we will make use of the fact that the Clifford group is an exact 2-design\footnote{The Clifford group is also a 3-design, but we will only need the fact it is a 2-design for our proof.} \cite{webbclifford,zhuclifford}. The fact that the Clifford group is a 2-design means that for any polynomial $p$ over the variables $\{V_{ij}\}$ and their complex conjugates, which is of degree at most 2 in the $V_{ij}$'s and degree at most 2 in the $V_{ij}^*$'s, we have that
\[\frac{1}{|C|} \sum_{V \in C} p(V,V^*) = \int p(V,V^*) \d V, \]
where $C$ denotes the Clifford group and the integral $\d V$ is taken over the Haar measure. In other words, the expectation values of low-degree polynomials in the entries of the matrices are exactly identical to the expectation values over the Haar measure. 

In particular, note that $p_{y,U,V}$ is a degree-1 polynomial in the entries of $V$ and their complex conjugates, and $p_{y,U,V}^2$ is a degree-2 polynomial in these variables. Therefore, since the Clifford group is an exact 2-design, we have that for any $y$ and $U$,
\[\frac{1}{|C|} \sum_{V \in \mathcal{C}} p_{y,U,V} = \int p_{y,U,V} \d V  = \frac{1}{2^n}\] 
and
\[\frac{1}{|C|} \sum_{V \in \mathcal{C}} p_{y,U,V}^2 = \int p_{y,U,V}^2 \d V = \frac{2}{2^{2n} - 1} \left(1 - \frac{1}{2^n}\right),\] 
where the values of these integrals over the Haar measure are well known -- see for instance Appendix D of \cite{harlow2016jerusalem}.

Following \cite{bremner2016average}, we now invoke the Paley-Zygmund inequality, which states that:

\begin{fact} Given a parameter $0<a<1$, and a non-negative random variable $p$ of finite variance, we have
\[\mathrm{Pr}[p\geq a \mathbb{E}[p]] \geq (1-a)^2 \mathbb{E}[p]^2/\mathbb{E}[p^2] .
\]
\end{fact}

Applying this inequality to the random variable $p_{y,U,V}$ over the choice of the Clifford circuit $V$, we have that
\[
\mathrm{Pr}_V\left[p_{y,U,V}\geq \frac{a}{2^n}\right] \geq (1-a)^2 \frac{2^{-2n}}{\frac{2 - 2^{-n+1}}{2^{2n}-1} } = (1-a)^2 \frac{1 - 2^{-2n}}{2 - 2^{-n+1}} \geq \frac{(1-a)^2}{2}
\]
which implies the claim.

\end{proof}

This completes the proof of Theorem \ref{l1hardness}.

\end{proof}

\section{Evidence in favor of hardness conjecture}

In Section \ref{sec:additive}, we saw that by assuming an average case hardness conjecture (namely Conjecture \ref{conj:avghardness}), we could show that a weak simulation of CCCs to additive error would collapse the polynomial hierarchy. A natural question is: what evidence do we have that Conjecture \ref{conj:avghardness} is true? 

In this section, we show that the worst-case version of Conjecture \ref{conj:avghardness} is true. In fact, we show that for any $U\neq C R_Z(\theta)$ for a Clifford $C$, there exists a Clifford circuit V and an output y such that computing $p_{y,U,V}$ is $\#\mathsf{P}$-hard to constant multiplicative error. Therefore certainly \emph{some} output probabilities of CCCs are $\#\mathsf{P}$-hard to compute. Conjecture \ref{conj:avghardness} is merely conjecturing further that computing a large fraction of such output probabilities is just as hard.

\begin{theo}[Worst-case version of Conjecture \ref{conj:avghardness}]
For any $U$ which is not equal to a Z-rotation times a Clifford, there exists a Clifford circuit $V$ and string $y\in\{0,1\}^n$ such that it is $\#\mathsf{P}$-hard to multiplicatively approximate a $p_{y,U,V}$ to multiplicative error $1/2 -o(1)$.
\label{thm:worstcasehardness}
\end{theo}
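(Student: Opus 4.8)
The plan is to use the classification machinery already developed — in particular the gadget construction inside the proof of Lemma~\ref{lem:suffPHsup} — to plant a known $\PP$-hard output-probability computation inside a single $U$-CCC, and then to push the argument through the inverse-exponential compilation errors. Since $U$ is not a Clifford times a $Z$-rotation, Theorem~\ref{cor:classificationWeak} together with the proof of Lemma~\ref{lem:suffPHsup} supplies a \emph{unitary non-Clifford} $U$-CCC postselection gadget $G$, and moreover shows that $G$ together with the $H,S,\mathrm{CNOT}$ operations available inside a CCC can compile, to any prescribed inverse-exponential accuracy $\eta = 2^{-\mathrm{poly}(n)}$ (via the inverse-free Solovay--Kitaev theorem of \cite{sardharwalla2016universal}), every gate in $\{\mathrm{CCZ},\ \text{Controlled-}H,\ \text{all one-qubit gates}\}$, hence any polynomial-size quantum circuit run \emph{with postselection}. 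So the only work is to turn ``postselected universality'' into hardness of a single honest output probability $p_{y,U,V}$.

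First I would fix a $\PP$-complete problem — say, deciding whether a given polynomial-time predicate on $m$ bits accepts at least half of its inputs — and recall Aaronson's theorem $\PostBQP=\PP$ \cite{aaronson2005quantum}. I would amplify that $\PostBQP$ algorithm by running it a large constant number of times in parallel, postselecting \emph{all} runs, and taking a majority vote; this is still a polynomial-size postselected circuit whose post-measurement acceptance probability is $\ge 15/16$ on yes-instances and $\le 1/16$ on no-instances. This circuit uses only Hadamards, $\mathrm{CCZ}$/Toffoli, $Z$, $\mathrm{CNOT}$ and ancillas (the phase oracle $\ket{z}\mapsto(-1)^{g(z)}\ket{z}$ being built from the predicate's own circuit), all of which lie in or are compilable from the universal set above.

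Next I would instantiate this postselected circuit as a postselected $U$-CCC: there is a Clifford circuit $V$ on $n=\mathrm{poly}(m)$ qubits such that the $U$-CCC state $\ket{\psi}=(U^\dagger)^{\otimes n}V\,U^{\otimes n}\ket{0^n}$ has $(\bra{b}\otimes I)\ket{\psi}$ equal, up to error $\eta$, to $\sqrt{p_{\mathrm{post}}}$ times the target circuit's output state, where $b$ is the postselection string and $p_{\mathrm{post}}$ is the postselection probability — a fixed power of the per-gadget success probabilities, hence an \emph{efficiently known} quantity bounded below by $2^{-\mathrm{poly}(n)}$. Folding the postselection register and the accept/reject bit into one output string $y$, the quantity $p_{y,U,V}$ is an honest $U$-CCC output probability satisfying $p_{y,U,V}=p_{\mathrm{post}}\cdot p_{\mathrm{acc}\mid\mathrm{post}}\pm O(\eta)$. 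Choosing the compilation accuracy so that $\eta \ll 2^{-\mathrm{poly}(n)} \le p_{\mathrm{post}}$, we obtain $p_{y,U,V}\ge \tfrac{14}{16}p_{\mathrm{post}}$ on yes-instances and $p_{y,U,V}\le \tfrac{2}{16}p_{\mathrm{post}}$ on no-instances: a multiplicative gap of factor $7$ at a location we can compute in advance. Hence any algorithm approximating $p_{y,U,V}$ to multiplicative error $\tfrac12-o(1)$ (whose output lies in a window of ratio $<3$ around the truth) decides our $\PP$-complete problem; since $\mathrm{FP}^{\PP}=\mathrm{FP}^{\#\mathsf{P}}$, such an algorithm is $\#\mathsf{P}$-hard under polynomial-time Turing reductions, which is the claimed conclusion.

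The main obstacle is precisely the bookkeeping between the inverse-exponential Solovay--Kitaev error $\eta$ and the (possibly inverse-exponentially small) postselection probability $p_{\mathrm{post}}$: one must exploit that the inverse-free Solovay--Kitaev theorem lets $\eta$ beat any prescribed polynomial exponent, while \emph{independently} certifying a fixed $2^{-\mathrm{poly}(n)}$ lower bound on $p_{\mathrm{post}}$ from the gadget count, so the factor-$7$ multiplicative gap survives all approximation slack. A secondary subtlety is that $\PP$ is inherently a ``sign'' statement, whereas $p_{y,U,V}$ is a single non-negative number; this is why the proof amplifies the $\PostBQP$ acceptance gap and absorbs the postselection outcome into $y$, so that one scalar output probability already encodes the decision. (An alternative route encodes $\mathrm{gap}(g)=\sum_z(-1)^{g(z)}$ so that $p_{y,U,V}\propto \mathrm{gap}(g)^2$ up to $\eta$; a multiplicative approximation with error $<1$ then decides whether $\mathrm{gap}(g)=0$, which is $\mathsf{C}_=\mathsf{P}$-hard and hence $\#\mathsf{P}$-hard — but the $\PP$-complete-problem route gives cleaner constants and a cleaner $\#\mathsf{P}$-hardness statement.)
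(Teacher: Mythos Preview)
Your main line of argument has a genuine gap at the step where you claim $p_{y,U,V}\ge \tfrac{14}{16}\,p_{\mathrm{post}}$ on yes-instances. You define $p_{\mathrm{post}}$ as the product of the per-gadget success probabilities, which is indeed a known quantity. But the $\PostBQP$ circuit you are compiling carries its \emph{own} postselection register, whose success probability I will call $q$; in Aaronson's construction $q$ is input-dependent, generically inverse-exponential, and \emph{not} efficiently computable. Once you fold that $\PostBQP$ postselection bit into $y$, what you actually get is
\[
p_{y,U,V}\;\approx\; p_{\mathrm{post}}\cdot q\cdot \Pr[\text{accept}\mid\text{$\PostBQP$-post}],
\]
not $p_{\mathrm{post}}\cdot \Pr[\text{accept}\mid\text{$\PostBQP$-post}]$. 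The amplified promise only gives $p_{y,U,V}\in\{\ge (15/16)\,p_{\mathrm{post}}\,q,\ \le (1/16)\,p_{\mathrm{post}}\,q\}$, and since $q$ is unknown you have no threshold against which to test a multiplicative approximation of $p_{y,U,V}$. (One can rescue the argument by also querying the oracle on the string $y'$ with the accept bit flipped and comparing the two estimates, but you do not do this, and the error constants would need to be redone.)

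The paper sidesteps this by never going through $\PostBQP$. It imports from \cite{bremner2016average} a family of ordinary $\BQP$ circuits $C_x$ with algebraic gate entries for which $p=|\bra{0^n}C_x\ket{0^n}|^2$ is already $\#\mathsf P$-hard to approximate to multiplicative error $1/2$. Compiling $C_x$ with the gadget $G$ introduces \emph{only} gadget postselection, whose probability $\alpha$ is known, so $p_{y,U,V}/\alpha$ equals $p$ up to compilation error and a multiplicative estimate of $p_{y,U,V}$ yields one of $p$. The remaining subtlety is $p=0$, where ``multiplicative approximation'' is vacuous; the paper handles it with Kuperberg's observation \cite{kuperberg2009hard} that any nonzero output probability of a circuit with algebraic entries is at least $2^{-\mathrm{poly}(n)}$, so one chooses the compilation accuracy far below that scale and separates $p=0$ from $p>0$. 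Your parenthetical ``alternative route'' via $\mathrm{gap}(g)^2$ is essentially this approach and would work, but it too needs the algebraic-entries argument (or an equivalent) to handle $\mathrm{gap}(g)=0$ once compilation error is present.
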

\begin{proof}

This follows from combining the ideas from the proof of Lemma \ref{thm:multHardness} with previously known facts about $\BQP$. In particular, we will use the following facts:
\begin{enumerate}
\item There exists a uniform family of poly-size $\BQP$\footnote{Even $\IQP$ suffices here \cite{bremner2016average}.} circuits $C_x$ where $x\in\{0,1\}^n$ using a gate set with algebraic entries such that computing $|\bra{0^n}C_x\ket{0^n}|^2$ to multiplicative error $1/2$ is $\#\mathsf{P}$-hard \cite{bremner2016average}.
\item For any poly-sized quantum circuit $C$ over a gate set with algebraic entries, any non-zero output probability has magnitude at least inverse exponential \cite{kuperberg2009hard}.

\item As shown in the proof of Theorem \ref{cor:classificationWeak}, for any $U$ which is not a Clifford gate times a Z rotation, there is a postselection gadget $G$ which performs a unitary but non-Clifford one-qubit operation. Furthermore all ancilla qubits in $G$ begin in the state $\ket{0}$.
\end{enumerate}

From these facts, we can now prove the theorem. Let $p=|\braket{0^n|C_x }{0^n}|^2$. By Fact 2, the circuit $C_x$ from Fact 1 either has $p=0$ or $p \geq 2^{-O(n^c)}$ for some constant $c$.
Now suppose we compile the circuit $C_x$ from Fact 1 using Clifford gates plus the postselection gadget $G$ -- call this new circuit with postselection $C'_x$. By Sardharwalla \emph{et al.} \cite{sardharwalla2016universal} we can compile this circuit with accuracy $\epsilon=2^{-O(n^c)-100}$ with only polynomial overhead.

Let $\ell\in\{0,1\}^k$ be the string of postselection bits of the circuit $C'_x$ (which without loss of generality are the last bits of the circuit), and let $\alpha$ is the probability that all postselections succeed. Note $\alpha$ is a known and easily calculated quantity, since each postselection gadget is unitary so succeeds with a known constant probability.

Let $p' = |\braket{0^n \ell |C'_x}{0^{n+k}}|^2/\alpha$. Then we have that:
\begin{itemize}
\item If $p=0$ then $p' \leq 2^{-O(n^c)-100}$.
\item If $p \neq 0$ then  $p-2^{-O(n^c)-100} \leq p' \leq p + 2^{-O(n^c)-100} $. Since $p\geq2^{-O(n^c)}$, this is a multiplicative approximation to $p$ with error $ 2^{-100}$.
\end{itemize}

Now suppose that one can compute $|\braket{0^n \ell|C'_x}{0^{n+k}}|^2$ to multiplicative error $\gamma$ to be chosen shortly. Then immediately one can compute $p' =|\braket{0^n \ell |C'_x}{0^{n+k}}|^2/\alpha $ to the same amount of multiplicative error -- call this estimate $p''$. 
By the above argument, if $p=0$ then $p''<2^{-O(n^c)-100}(1+\gamma)$.
On the other hand if $p>0$ then $p'>2^{-O(n^c)}$, so $p''>2^{-O(n^c)}(1-\gamma)$. So long as $\gamma$ is chosen such that $2^{-100}(1+\gamma)<(1-\gamma)$ these two cases can be distinguished -- which holds in particular if $\gamma \approx 1/2$. 

Therefore, if $p''<2^{-O(n^c)}$ then we can infer that $p=0$.
If $p''>2^{-O(n^c)}(1-\gamma)$, then $p>0$ so $p''$ is a $\gamma$ approximation to $p'$ and hence a $\gamma+2^{-100}+\gamma2^{-100}$ approximation to $p$. In either case we have computed a $\gamma+2^{-100}+\gamma2^{-100}$ approximation to $p$. Therefore, if $\gamma=1/2 -2^{-99}$, then we have computed a 1/2-multiplicative approximation to $p$, which is $\#\mathsf{P}$-hard by Fact 1. Therefore, computing some the probability that the CCC correspoding to $C'_x$ outputs $\ket{0^{n}\ell}$ to multiplicative error $1/2 - 2^{-99}$ is $\#\mathsf{P}$-hard. One can similarly improve this hardness to $1/2 - o(1)$.

\end{proof}

Given that the worst-case version of Conjecture \ref{conj:avghardness} is true, a natural question to ask is how difficult it would be to prove the average-case conjecture. To do so would in particular prove quantum advantage over classical computation with realistic error, and merely assuming the polynomial hierarchy is infinite. In some ways this would be stronger evidence for quantum advantage over classical computation than Shor's factoring algorithm, as there are no known negative complexity-theoretic consequences if factoring is contained in $\mathsf{P}$. 

Unfortunately, recent work has shown that proving Conjecture \ref{conj:avghardness} would be a difficult task. Specifically, Aaronson and Chen \cite{aaronson2016complexity}  demonstrated an oracle relative to which $\PH$ is infinite, but classical computers can efficiently weakly simulate quantum devices to constant additive error. Therefore, any proof which establishes quantum advantage with additive error under the assumption that $\PH$ is infinite must be non-relativizing. In particular this implies any proof of Conjecture \ref{conj:avghardness} would require non-relativizing techniques -- in other words it could not remain true if one allows for classical oracle class in the circuit. This same barrier holds for proving the similar average-case hardness conjectures to show advantage for Boson Sampling, $\IQP$, $\DQC 1$, or Fourier sampling. Therefore any proof of Conjecture \ref{conj:avghardness} would require facts specific to the Clifford group. We leave this as an open problem. We also note that it remains open to prove the average-case \emph{exact} version of Conjecture \ref{conj:avghardness} - i.e. whether it is hard to exactly compute a large fraction of $p_{y,U,C}$. We believe this may be a more tractable problem to approach than Conjecture \ref{conj:avghardness}. However this remains open, as is the analogous average-case exact conjecture corresponding to $\IQP$. We note the corresponding average-case exact conjecture for Boson Sampling and Fourier sampling are known to be true \cite{aaronsonboson,feffermanfourier}, though these models are not known to anticoncentrate.

\section{Summary of simulability of CCCs}
\label{sec:simtypes}

For completeness, in this section we summarize the simulability of $U$-CCCs when $U$ is not a Clifford rotation times a $Z$ rotation. There are various notions of classical simulation at play here.
The results of this paper so far have focused of notions of approximate \emph{weak} simulation. A \textit{weak} simulation of a family of quantum circuits is a classical randomized algorithm that samples from the same distribution as the output distribution of the circuit. On the other hand, a \textit{strong} simulation of a family of quantum circuits is a classical algorithm that computes not only the joint probabilities, but also any marginal probabilities of the outcomes of the measurements in the circuit.
Following \cite{koh2015further}, we can further refine these definitions according to the number of qubits being measured: a strong(1) simulation computes the marginal output probabilities on individual qubits, and a strong($n$) simulation computes the probability of output strings $y\in\{0,1\}^n$. Similarly, a weak(1) simulation samples from the marginal output probabilities on individual qubits, and a weak($n$) simulation samples from $p(y_1,\ldots,y_n)$. A weak$^+$ simulation samples from the same distribution on all $n$ output qubits up to constant additive error. Our previous results have shown that efficient weak($n$) simulations (Theorem \ref{cor:classificationWeak}), weak$^+$ simulations (Theorem \ref{l1hardness}), and strong($n$) simulations (Theorem \ref{thm:worstcasehardness}) of CCCs are implausible.  
However it is natural to ask if it is possible to simulate single output probabilities of CCCs. 
It turns out the answer to this question is yes. This follows immediately from Theorem 5 of \cite{koh2015further}, which showed 
more generally that Clifford circuits with product inputs or measurements have an efficient strong(1) and weak(1) simulation. Therefore this completes the complexity classification of the simulability of such circuits. 
We note that $\IQP$ has identical properties in this regard. 
This emphasizes that the difficulty in simulating CCCs (or $\IQP$ circuits) comes from the difficulty of simulating all of the \emph{marginal} probability distributions contained in the output distribution, where the marginal is taken over a large number of output bits.
The probabilities of computing individual output bits of either model are easy for classical computation.
This is summarized in Figure \ref{fig:notionsOfxSimulation}.

\begin{figure}
\centering
\includegraphics[scale=0.5]{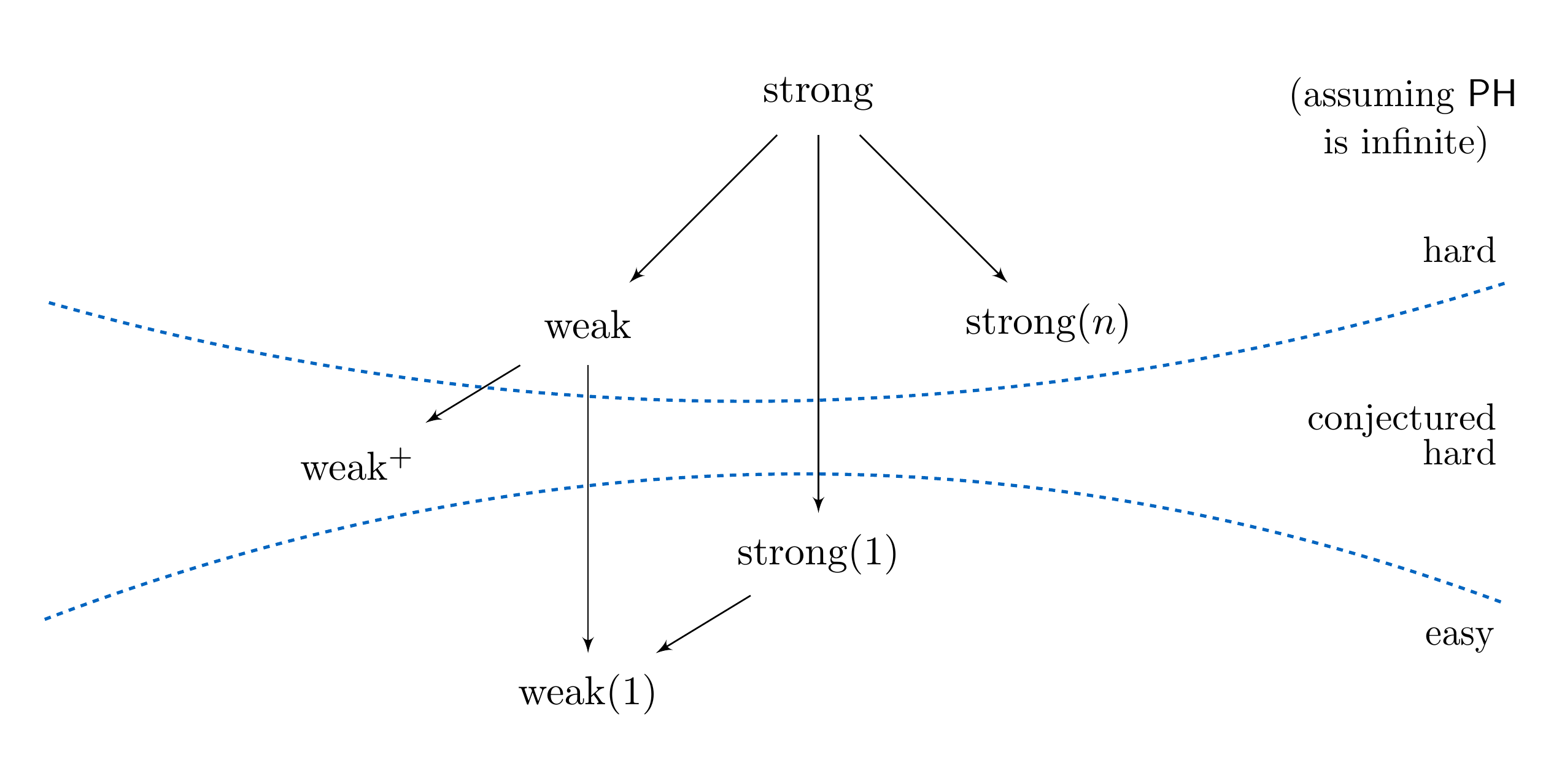}
\caption{\small Relationships between different notions of classical simulation and summary of the hardness of simulating CCCs. An arrow from $A$ to $B$ ($A \rightarrow B$) means that an efficient $A$-simulation of a computational task implies that there is an efficient $B$-simulation for the same task. Note also that an weak($n$) simulation exists if and only if a weak simulation exists. For a proof of these relationships, see \cite{koh2015further}. The two curves indicate the boundary between efficiencies of simulation of $U$-CCCs, where $U$ is not a Clifford operation times a $Z$ rotation. ``Hard'' means that an efficient simulation of $U$-CCCs is not possible, unless $\PH$ collapses. ``Conjectured hard'' means that an efficient simulation of $U$-CCCs is not possible, if we assume Conjecture \ref{conj:avghardness}. ``Easy'' means that an efficient simulation of $U$-CCCs exists. Note that when $U$ is a Clifford operation times a $Z$ rotation, all the above notions become easy. \newline}
\label{fig:notionsOfxSimulation}
\end{figure}

\section{Open Problems}
\label{sec:open}

Our work leaves open a number of problems.

\begin{itemize}

\item What is the computational complexity of commuting CCCs? In other words, can the gate set $CZ,S$ conjugated by a one-qubit gate $U$ ever give rise to quantum advantage? Note that this does not follow from Bremner, Jozsa and Shepherd's results \cite{BJS2010}, as their hardness proof uses the gate set $CZ,T$ or $CCZ,CZ,Z$ conjugated by one-qubit gates. If this is true, it would say that the ``intersection" of CCCs and IQP remains computationally hard. One can also consider the computational power of arbitrary fragments of the Clifford group, which were classified in \cite{cliffordclassification}. Perhaps by studying such fragments of the Clifford group one could achieve hardness with lower depth circuits (see additional question below).

\item We showed that Clifford circuits conjugated by tensor-product unitaries are difficult to simulate classically. A natural extension of this question is: suppose your gate set consists of all two-qubit Clifford gates, conjugated by a unitary $U$ which is \emph{not} a tensor product of the same one-qubit gate. Can one show that all such circuits are difficult to simulate classically (say exactly)? Such a theorem could be a useful step towards classifying the power of all two-qubit gate sets.

\item Generic Clifford circuits have a depth which is linear in the number of qubits \cite{aaronson2004clifford}. In particular the lowest-depth decomposition for a generic Clifford circuit over $n$ qubits to date has depth $14n-4$ \cite{maslov2017shorter}. Such depth will be difficult to achieve in near-term quantum devices without error-correction. As a result, others have considered quantum supremacy experiments with lower-depth circuits. For instance, Bremner, Shepherd and Montanaro showed advantage for a restricted version of $\IQP$ circuits with depth $O(\log n)$ \cite{bremner2016achieving} with long-range gates (which becomes depth $O(n^{1/2}\log n)$ if one uses SWAP gates to simulate long-range gates using local operations on a square lattice).  We leave open the problem of determining if quantum advantage can be achieved with CCCs of lower depth (say $O(n^{1/2})$ or $O(n^{1/3})$) with local gates only. 

\item In order to establish quantum supremacy for CCCs, we conjectured that it is $\#\mathsf{P}$-hard to approximate a large fraction of the output probabilities of randomly chosen CCCs (Conjecture \ref{conj:avghardness} ). Is it also $\#\mathsf{P}$-hard to exactly compute that large of a fraction of the output probabilities? This is a necessary but not sufficient condition for Conjecture \ref{conj:avghardness} to be true, and we believe it may be a more approachable problem.

\end{itemize}

\section*{Acknowledgments}
We thank Scott Aaronson, Mick Bremner, Alex Dalzell, Daniel Gottesman, Aram Harrow, Ashley Montanaro, Ted Yoder and Mithuna Yoganathan for helpful discussions. 
AB was partially supported by the NSF GRFP under Grant No. 1122374,  by a Vannevar Bush Fellowship from the US Department of Defense, and by an NSF Waterman award under grant number 1249349. 
 JFF acknowledges support from the Air Force Office of Scientific Research under AOARD grant no. FA2386-15-1-4082.
This material is based on research supported in part by the Singapore National Research Foundation under NRF Award No. NRF-NRFF2013-01.
DEK is supported by the National Science Scholarship from the Agency for Science, Technology and Research (A*STAR).

\bibliographystyle{alpha}
\bibliography{team} 

\newcommand{\etalchar}[1]{$^{#1}$}
\begin{thebibliography}{MLLL{\etalchar{+}}12}

\bibitem[AA11]{aaronsonboson}
Scott Aaronson and Alex Arkhipov.
\newblock The computational complexity of linear optics.
\newblock In {\em Proceedings of the forty-third annual ACM Symposium on Theory
  of Computing}, pages 333--342. ACM, 2011.

\bibitem[Aar05]{aaronson2005quantum}
Scott Aaronson.
\newblock Quantum computing, postselection, and probabilistic polynomial-time.
\newblock In {\em Proceedings of the Royal Society of London A: Mathematical,
  Physical and Engineering Sciences}, volume 461, pages 3473--3482. The Royal
  Society, 2005.

\bibitem[ABO97]{aharonov1997fault}
Dorit Aharonov and Michael Ben-Or.
\newblock Fault-tolerant quantum computation with constant error.
\newblock In {\em Proceedings of the twenty-ninth annual ACM Symposium on
  Theory of Computing}, pages 176--188. ACM, 1997.

\bibitem[AC17]{aaronson2016complexity}
Scott Aaronson and Lijie Chen.
\newblock Complexity-theoretic foundations of quantum supremacy experiments.
\newblock {\em Proc. CCC}, 2017.

\bibitem[AG04]{aaronson2004clifford}
Scott Aaronson and Daniel Gottesman.
\newblock Improved simulation of stabilizer circuits.
\newblock {\em Physical Review A}, 70(5):052328, 2004.

\bibitem[AGS17]{reversibleclassification}
Scott Aaronson, Daniel Grier, and Luke Schaeffer.
\newblock The classification of reversible bit operations.
\newblock In {\em Proceedings of Innovations in Theoretical Computer Science
  (ITCS)}, 2017.

\bibitem[BA14]{bouland2014generation}
Adam Bouland and Scott Aaronson.
\newblock Generation of universal linear optics by any beam splitter.
\newblock {\em Physical Review A}, 89(6):062316, 2014.

\bibitem[BB02]{brylinski2002universal}
Jean-Luc Brylinski and Ranee Brylinski.
\newblock Universal quantum gates.
\newblock {\em Mathematics of Quantum Computation}, 79, 2002.

\bibitem[BBD{\etalchar{+}}09]{briegel2009measurement}
Hans~J Briegel, David~E Browne, W~D{\"u}r, Robert Raussendorf, and Maarten
  Van~den Nest.
\newblock Measurement-based quantum computation.
\newblock {\em Nature Physics}, 5(1):19--26, 2009.

\bibitem[BDD{\etalchar{+}}02]{bremner2002practical}
Michael~J Bremner, Christopher~M Dawson, Jennifer~L Dodd, Alexei Gilchrist,
  Aram~W Harrow, Duncan Mortimer, Michael~A Nielsen, and Tobias~J Osborne.
\newblock Practical scheme for quantum computation with any two-qubit
  entangling gate.
\newblock {\em Physical Review Letters}, 89(24):247902, 2002.

\bibitem[BGK17]{bravyi2017quantum}
Sergey Bravyi, David Gosset, and Robert Koenig.
\newblock Quantum advantage with shallow circuits.
\newblock {\em arXiv:1704.00690}, 2017.

\bibitem[BH12]{bravyi2012magic}
Sergey Bravyi and Jeongwan Haah.
\newblock Magic-state distillation with low overhead.
\newblock {\em Physical Review A}, 86(5):052329, 2012.

\bibitem[BIS{\etalchar{+}}16]{boixo2016characterizing}
Sergio Boixo, Sergei~V Isakov, Vadim~N Smelyanskiy, Ryan Babbush, Nan Ding,
  Zhang Jiang, John~M Martinis, and Hartmut Neven.
\newblock Characterizing quantum supremacy in near-term devices.
\newblock {\em arXiv:1608.00263}, 2016.

\bibitem[BJS10]{BJS2010}
Michael~J Bremner, Richard Jozsa, and Dan~J Shepherd.
\newblock Classical simulation of commuting quantum computations implies
  collapse of the polynomial hierarchy.
\newblock In {\em Proceedings of the Royal Society of London A: Mathematical,
  Physical and Engineering Sciences}, page rspa20100301. The Royal Society,
  2010.

\bibitem[BK05]{bravyi2005universal}
Sergey Bravyi and Alexei Kitaev.
\newblock Universal quantum computation with ideal {C}lifford gates and noisy
  ancillas.
\newblock {\em Physical Review A}, 71(2):022316, 2005.

\bibitem[BMS16]{bremner2016average}
Michael~J Bremner, Ashley Montanaro, and Dan~J Shepherd.
\newblock Average-case complexity versus approximate simulation of commuting
  quantum computations.
\newblock {\em Physical Review Letters}, 117(8):080501, 2016.

\bibitem[BMS17]{bremner2016achieving}
Michael~J Bremner, Ashley Montanaro, and Dan~J Shepherd.
\newblock Achieving quantum supremacy with sparse and noisy commuting quantum
  computations.
\newblock {\em Quantum}, 1, 2017.

\bibitem[BMZ16]{boulandccc2016}
Adam Bouland, Laura Mancinska, and Xue Zhang.
\newblock {Complexity Classification of Two-Qubit Commuting Hamiltonians}.
\newblock In Ran Raz, editor, {\em 31st Conference on Computational Complexity
  (CCC 2016)}, volume~50 of {\em Leibniz International Proceedings in
  Informatics (LIPIcs)}, pages 28:1--28:33, Dagstuhl, Germany, 2016. Schloss
  Dagstuhl--Leibniz-Zentrum fuer Informatik.

\bibitem[cli12]{cliffordstackexchange}
Universal sets of gates for {SU(3)}?
\newblock
  https://cstheory.stackexchange.com/questions/11308/universal-sets-of-gates-for-su3,
  2012.
\newblock Accessed: 2017-08-01.

\bibitem[Cro46]{crosby1946}
W.J.R. Crosby.
\newblock Solution of the problem 4136.
\newblock {\em Amer. Math. Monthly}, 53:103--107, 1946.

\bibitem[DN06]{dawson2006solovay}
Christopher~M Dawson and Michael~A Nielsen.
\newblock The {S}olovay-{K}itaev algorithm.
\newblock {\em Quantum Information \& Computation}, 6(1):81--95, 2006.

\bibitem[DS96]{5qubitcode}
David~P. DiVincenzo and Peter~W. Shor.
\newblock Fault-tolerant error correction with efficient quantum codes.
\newblock {\em Phys. Rev. Lett.}, 77:3260--3263, Oct 1996.

\bibitem[EK09]{eastin2009restrictions}
Bryan Eastin and Emanuel Knill.
\newblock Restrictions on transversal encoded quantum gate sets.
\newblock {\em Physical Review Letters}, 102(11):110502, 2009.

\bibitem[FH16]{farhi2016quantum}
Edward Farhi and Aram~W Harrow.
\newblock Quantum supremacy through the quantum approximate optimization
  algorithm.
\newblock {\em arXiv:1602.07674}, 2016.

\bibitem[FKM{\etalchar{+}}14]{fujii2014impossibility}
Keisuke Fujii, Hirotada Kobayashi, Tomoyuki Morimae, Harumichi Nishimura,
  Shuhei Tamate, and Seiichiro Tani.
\newblock Impossibility of classically simulating one-clean-qubit computation.
\newblock {\em arXiv:1409.6777}, 2014.

\bibitem[FU16]{feffermanfourier}
Bill Fefferman and Christopher Umans.
\newblock On the power of quantum {F}ourier sampling.
\newblock In {\em 11th Conference on the Theory of Quantum Computation,
  Communication and Cryptography, {TQC} 2016, September 27-29, 2016, Berlin,
  Germany}, pages 1:1--1:19, 2016.

\bibitem[GC99]{gottesmanchuang1999}
Daniel Gottesman and Isaac~L Chuang.
\newblock Demonstrating the viability of universal quantum computation using
  teleportation and single-qubit operations.
\newblock {\em Nature}, 402(6760):390--393, 1999.

\bibitem[GC12]{cgranade}
Chris Granade and Ben Criger.
\newblock {Q}ua{EC}: Quantum error correction analysis in {P}ython.
\newblock \url{http://www.cgranade.com/python-quaec/groups.html#}, 2012.
\newblock Accessed: 2017-06-01.

\bibitem[Got97]{gottesman1997thesis}
Daniel Gottesman.
\newblock Stabilizer codes and quantum error correction.
\newblock {\em Ph.D. Thesis, California Institute of Technology,
  arXiv:quant-ph/9705052}, 1997.

\bibitem[Got99]{gottesmanknill}
Daniel Gottesman.
\newblock The {H}eisenberg representation of quantum computers.
\newblock {\em Group22: Proceedings of the XXII International Colloquium on
  Group Theoretical Methods in Physics}, pages 32--43, 1999.

\bibitem[GS16]{cliffordclassification}
Daniel Grier and Luke Schaeffer.
\newblock The classification of stabilizer operations over qubits.
\newblock {\em arXiv:1603.03999}, 2016.

\bibitem[Har16]{harlow2016jerusalem}
Daniel Harlow.
\newblock Jerusalem lectures on black holes and quantum information.
\newblock {\em Reviews of Modern Physics}, 88(1):015002, 2016.

\bibitem[HBVSE17]{hangleiter2017anti}
Dominik Hangleiter, Juan Bermejo-Vega, Martin Schwarz, and Jens Eisert.
\newblock Anti-concentration theorems for schemes showing a quantum
  computational supremacy.
\newblock {\em arXiv:1706.03786}, 2017.

\bibitem[HH01]{Hanany2001}
Amihay Hanany and Yang-Hui He.
\newblock A monograph on the classification of the discrete subgroups of
  {SU(4)}.
\newblock {\em Journal of High Energy Physics}, 2001(02):027, 2001.

\bibitem[HM18]{saeed2018upcoming}
Aram Harrow and Saeed Mehraban.
\newblock Personal communication.
\newblock 2018.

\bibitem[JM08]{jozsa2008matchgates}
Richard Jozsa and Akimasa Miyake.
\newblock Matchgates and classical simulation of quantum circuits.
\newblock In {\em Proceedings of the Royal Society of London A: Mathematical,
  Physical and Engineering Sciences}, volume 464, pages 3089--3106. The Royal
  Society, 2008.

\bibitem[JVdN14]{jozsa2014clifford}
Richard Jozsa and Maarten Van~den Nest.
\newblock Classical simulation complexity of extended {C}lifford circuits.
\newblock {\em Quantum Information and Computation}, 14(7/8):633--648, 2014.

\bibitem[Koh17]{koh2015further}
Dax~Enshan Koh.
\newblock Further extensions of {C}lifford circuits and their classical
  simulation complexities.
\newblock {\em Quantum Information \& Computation}, 17(3\&4):0262--0282, 2017.

\bibitem[Kup15]{kuperberg2009hard}
Greg Kuperberg.
\newblock How hard is it to approximate the {J}ones polynomial?
\newblock {\em Theory of Computing}, 11(6):183--219, 2015.

\bibitem[Lad75]{Ladner}
Richard~E. Ladner.
\newblock On the structure of polynomial time reducibility.
\newblock {\em J. ACM}, 22(1):155--171, January 1975.

\bibitem[LMPZ96]{laflamme1996perfect}
Raymond Laflamme, Cesar Miquel, Juan~Pablo Paz, and Wojciech~Hubert Zurek.
\newblock Perfect quantum error correcting code.
\newblock {\em Physical Review Letters}, 77(1):198, 1996.

\bibitem[MB17]{bremner2017upcoming}
Ryan~L. Mann and Michael~J. Bremner.
\newblock On the complexity of random quantum computations and the {J}ones
  polynomial.
\newblock {\em arXiv:1711.00686}, 2017.

\bibitem[MFF14]{morimae2014hardness}
Tomoyuki Morimae, Keisuke Fujii, and Joseph~F Fitzsimons.
\newblock Hardness of classically simulating the one-clean-qubit model.
\newblock {\em Physical Review Letters}, 112(13):130502, 2014.

\bibitem[MGE11]{magesan2011scalable}
Easwar Magesan, Jay~M Gambetta, and Joseph Emerson.
\newblock Scalable and robust randomized benchmarking of quantum processes.
\newblock {\em Physical Review Letters}, 106(18):180504, 2011.

\bibitem[MLLL{\etalchar{+}}12]{martin2012experimental}
Enrique Martin-Lopez, Anthony Laing, Thomas Lawson, Roberto Alvarez, Xiao-Qi
  Zhou, and Jeremy~L O'Brien.
\newblock Experimental realization of shor's quantum factoring algorithm using
  qubit recycling.
\newblock {\em Nature Photonics}, 6(11):773--776, 2012.

\bibitem[Mor17]{morimae2017hardness}
Tomoyuki Morimae.
\newblock Hardness of classically sampling one clean qubit model with constant
  total variation distance error.
\newblock {\em arXiv:1704.03640}, 2017.

\bibitem[MR17]{maslov2017shorter}
Dmitri Maslov and Martin Roetteler.
\newblock Shorter stabilizer circuits via {B}ruhat decomposition and quantum
  circuit transformations.
\newblock {\em arXiv:1705.09176}, 2017.

\bibitem[NC02]{nielsenchuang}
Michael~A Nielsen and Isaac Chuang.
\newblock Quantum computation and quantum information, 2002.

\bibitem[NRS01]{nebe2001invariants}
Gabriele Nebe, Eric~M Rains, and Neil~JA Sloane.
\newblock The invariants of the clifford groups.
\newblock {\em Designs, Codes and Cryptography}, 24(1):99--122, 2001.

\bibitem[NRS06]{nebe2006self}
Gabriele Nebe, Eric~M Rains, and Neil James~Alexander Sloane.
\newblock {\em Self-dual codes and invariant theory}, volume~17.
\newblock Springer, 2006.

\bibitem[OZ17]{oszmaniec2017universal}
Micha{\l} Oszmaniec and Zolt{\'a}n Zimbor{\'a}s.
\newblock Universal extensions of restricted classes of quantum operations.
\newblock {\em Physical Review Letters}, 119(22):220502, 2017.

\bibitem[Pre12]{preskill2012quantum}
John Preskill.
\newblock Quantum computing and the entanglement frontier.
\newblock {\em arXiv:1203.5813}, 2012.

\bibitem[RB01]{raussendorf2001one}
Robert Raussendorf and Hans~J Briegel.
\newblock A one-way quantum computer.
\newblock {\em Physical Review Letters}, 86(22):5188, 2001.

\bibitem[RBB03]{raussendorf2003measurement}
Robert Raussendorf, Daniel~E Browne, and Hans~J Briegel.
\newblock Measurement-based quantum computation on cluster states.
\newblock {\em Physical Review A}, 68(2):022312, 2003.

\bibitem[SCHL16]{sardharwalla2016universal}
Imdad~SB Sardharwalla, Toby~S Cubitt, Aram~W Harrow, and Noah Linden.
\newblock Universal refocusing of systematic quantum noise.
\newblock {\em arXiv:1602.07963}, 2016.

\bibitem[Sho99]{shor1999polynomial}
Peter~W Shor.
\newblock Polynomial-time algorithms for prime factorization and discrete
  logarithms on a quantum computer.
\newblock {\em SIAM review}, 41(2):303--332, 1999.

\bibitem[SK17]{sawicki2016criteria}
Adam Sawicki and Katarzyna Karnas.
\newblock Criteria for universality of quantum gates.
\newblock {\em Phys. Rev. A}, 95:062303, Jun 2017.

\bibitem[Ste96]{steane1996multiple}
Andrew Steane.
\newblock Multiple-particle interference and quantum error correction.
\newblock In {\em Proceedings of the Royal Society of London A: Mathematical,
  Physical and Engineering Sciences}, volume 452, pages 2551--2577. The Royal
  Society, 1996.

\bibitem[Sto83]{stockmeyer}
Larry Stockmeyer.
\newblock The complexity of approximate counting.
\newblock In {\em Proceedings of the fifteenth annual ACM Symposium on Theory
  of Computing}, pages 118--126. ACM, 1983.

\bibitem[TD04]{terhaldivincenzo}
Barbara~M Terhal and David~P DiVincenzo.
\newblock Adaptive quantum computation, constant depth quantum circuits and
  {A}rthur-{M}erlin games.
\newblock {\em Quantum Information \& Computation}, 4(2):134--145, 2004.

\bibitem[Tod91]{Toda91}
Seinosuke Toda.
\newblock {PP} is as hard as the polynomial-time hierarchy.
\newblock {\em {SIAM} J. Comput.}, 20(5):865--877, 1991.

\bibitem[Web16]{webbclifford}
Zak Webb.
\newblock The {C}lifford group forms a unitary 3-design.
\newblock {\em Quantum Information and Computation}, 16:1379--1400, 2016.

\bibitem[W\l69]{wlodarski}
L.~W\l{}odarski.
\newblock On the equation $\cos
  (\alpha_1)+\cos(\alpha_2)+\cos(\alpha_3)+\cos(\alpha_4) =0$.
\newblock {\em Ann. Univ. Sci. Budapest. E\"{o}tv\"{o}s Sect. Math}, 12:147,
  1969.

\bibitem[Zhu17]{zhuclifford}
Huangjun Zhu.
\newblock Multiqubit clifford groups are unitary 3-designs.
\newblock {\em Physical Review A}, 96(6):062336, 2017.

\end{thebibliography}

\appendix

\newpage
\section{Measurement-based Quantum Computing Proof of Multiplicative Hardness for CCCs for certain $U$'s}
\label{app:oldproof}

We will prove the following theorem using techniques from measurement-based quantum computing (MBQC). 

\begin{theo}
CCCs with $U=R_Z(\theta)H$ cannot be efficiently weakly classically simulated to multiplicative error $1/2$ unless the polynomial hierarchy collapses, for any $\theta$ which is not an integer multiple of $\pi/4$.
\end{theo}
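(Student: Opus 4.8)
The plan is to prove $\PH$-supremacy by the route of Lemma~\ref{lem:suffPHsup}: show that \emph{postselected} $U$-CCCs with $U=R_Z(\theta)H$ can carry out universal quantum computation, so that they decide every language in $\PP$ by Aaronson's $\PostBQP=\PP$~\cite{aaronson2005quantum}, whence a weak classical simulation to multiplicative error $1/2$ collapses $\PH$ via Stockmeyer counting~\cite{stockmeyer} and Toda's theorem~\cite{Toda91}. The point of the appendix is that for this $U$ the postselected universality can be read off directly in the language of measurement-based quantum computation~\cite{raussendorf2001one,raussendorf2003measurement,briegel2009measurement}. To set this up, note that since $U\ket0=R_Z(\theta)\ket{+}$, and since measuring $(U^\dagger)^{\otimes n}\ket\psi$ in the computational basis is the same as measuring $\ket\psi$ in the basis $\{U\ket0,U\ket1\}=\{R_Z(\theta)\ket{+},R_Z(\theta)\ket{-}\}$ on each qubit, a $U$-CCC is exactly: prepare $\bigl(R_Z(\theta)\ket{+}\bigr)^{\otimes n}$, apply an arbitrary Clifford circuit $V$, and measure every qubit in the ``angle-$\theta$'' basis $\{R_Z(\theta)\ket{\pm}\}$. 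Since $R_Z(\theta)$ commutes through $CZ$, using $CZ$ gates inside $V$ builds, out of the $R_Z(\theta)\ket{+}$ qubits, any desired brickwork/cluster state $\ket{G}$ dressed by a layer of $Z$-rotations, i.e.\ $\bigl(\bigotimes_j R_Z(\theta)_j\bigr)\ket{G}$.

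Next I would run a brickwork measurement pattern on $\ket{G}$ using two kinds of measured qubits. A qubit measured \emph{directly} in the angle-$\theta$ basis has its preparation $R_Z(\theta)$ cancelled against the $R_Z(-\theta)$ implicit in the measurement basis, so on $\ket{G}$ it acts as an ordinary Pauli-$X$ measurement; adding an $S^{\dagger}$ beforehand (a diagonal Clifford, which also commutes with $R_Z(\theta)$) instead gives a Pauli-$Y$ measurement. These supply the Clifford backbone of the pattern (wires, Hadamards, $Y$-measurements, the $CZ$-grid). A qubit to which $V$ first applies a single-qubit Hadamard, and which is \emph{then} measured in the angle-$\theta$ basis, acts on $\ket{G}$ as a measurement in the non-stabilizer basis $\bigl\{\tfrac1{\sqrt2}\bigl(\begin{smallmatrix}1&e^{-i\theta}\\ e^{i\theta}&-1\end{smallmatrix}\bigr)\ket{\pm}\bigr\}$ (namely $R_Z(-\theta)HR_Z(\theta)$ applied before an $X$-measurement), and this is the single non-Clifford resource. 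The feed-forward corrections of fault-tolerant MBQC are replaced by postselection: on each measured qubit one postselects the no-correction outcome. Both outcomes of every such measurement occur with probability exactly $1/2$, since every one-qubit reduced state of these cluster-derived states is maximally mixed, so the total postselection probability is $2^{-\poly(n)}$ -- all Aaronson's argument requires -- and the residual Clifford corrections commute out through $V$.

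Finally I would check universality and pin down the range of $\theta$. The pattern realizes, up to Clifford corrections, the full Clifford group together with one fixed non-Clifford single-qubit gate, and by the fact that the Clifford group plus any non-Clifford gate is universal~\cite{nebe2001invariants,nebe2006self,cliffordstackexchange} this is a universal gate set precisely when that extra gate is not Clifford. Working this through, the extra gate comes out to a $Z$-rotation by $2\theta$ up to Cliffords (morally the teleported rotation picks up the resource angle once from the state $R_Z(\theta)\ket+$ and once from the angle-$\theta$ measurement basis), so it is non-Clifford exactly when $2\theta\notin\tfrac\pi2\bbZ$, i.e.\ $\theta\notin\tfrac\pi4\bbZ$ -- which is the hypothesis. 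This is weaker than the optimal $\theta\notin\tfrac\pi2\bbZ$ obtained from the gadget $I(\phi,\theta)$ in Section~\ref{sec:ProofOfMultHardness}, as expected for a pedagogical re-derivation.

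The step I expect to be the real work is this last one, together with the byproduct bookkeeping preceding it: the angle-$\theta$-after-Hadamard measurement is \emph{not} an $XY$-plane measurement, so the textbook ``measurement $\leftrightarrow$ single-qubit gate'' dictionary of MBQC does not apply off the shelf; one has to track the induced logical map by hand, verify that every correction the pattern demands is a Clifford that may legitimately be pushed past $V$ and absorbed into the postselection, and confirm universality for exactly the stated $\theta$. A shortcut for this part would be to observe that the relevant circuit fragment is simply a unitary $U$-CCC postselection gadget in the sense of Definition~\ref{def:UCCCPostselectionGadget} with non-Clifford normalized action, and then invoke Lemma~\ref{lem:suffPHsup} verbatim -- but that essentially reproduces the argument of Section~\ref{sec:ProofOfMultHardness}, which is why one instead carries it out here in the MBQC picture.
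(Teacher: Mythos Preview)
Your MBQC framing is exactly the right picture, and it matches the appendix: a $U$-CCC with $U=R_Z(\theta)H$ is nothing but ``prepare $(R_Z(\theta)\ket{+})^{\otimes n}$, apply a Clifford, measure each qubit in the $R_Z(\theta)\ket{\pm}$ basis,'' and since $R_Z(\theta)$ commutes through the $CZ$ layer one is really manipulating a cluster state.  Where your proposal diverges from the paper---and where it develops a genuine gap---is in the choice of the single extra Clifford gate that produces the non-Clifford resource.

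You propose inserting a Hadamard on the to-be-measured qubit.  A short computation shows this does \emph{not} yield a unitary teleportation map: on a two-qubit chain with input $\ket\psi$ on qubit~1, ancilla $R_Z(\theta)\ket{+}$ on qubit~2, and Clifford $H_1\cdot CZ_{12}$, the postselected action on $\ket\psi$ satisfies
\[
A_G^\dagger A_G \;=\; \mathrm{diag}\!\bigl(\cos^2(\theta/2),\,\sin^2(\theta/2)\bigr),
\]
which is proportional to $I$ only for $\theta\in\tfrac\pi2\bbZ_{odd}$.  Consequently (i) your claim that the induced gate is ``$R_Z(2\theta)$ up to Cliffords'' is incorrect, (ii) the postselection probability is \emph{not} $1/2$ independent of the logical state (your maximally-mixed argument fails once the logical qubit has been teleported onto the measured site: its reduced state is $\tfrac12\ketbra\psi\psi+\tfrac12 Z\ketbra\psi\psi Z$, which gives $1/2$ outcomes only for $XY$-plane measurements), and (iii) your proposed shortcut via Lemma~\ref{lem:suffPHsup} fails because the gadget is not unitary.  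The paper instead inserts an $X$ into the Clifford layer.  Because $X$ anticommutes with $R_Z(\theta)$, the preparation and measurement angles combine rather than conflict: $HR_Z(-\theta)\,X\,R_Z(\theta)=ZHR_Z(2\theta)$, so the effective measurement on the cluster state is the standard $XY$-plane measurement at angle $2\theta$, and the teleported gate is the \emph{unitary} $HR_Z(2\theta)$ (resp.\ $XHR_Z(2\theta)$ on the other outcome).  This is exactly the ``picks up the angle once from the state and once from the basis'' heuristic you wrote---but it needs $X$, not $H$, to make the angles add.

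There is a second, more stylistic divergence.  You close by invoking the Nebe--Rains--Sloane fact that Clifford plus any non-Clifford gate is universal.  The appendix deliberately does \emph{not} use this (it says so in the preamble); instead it proves directly, via the W\l{}odarski/Crosby classification of rational-angle cosine identities, that for $\theta\notin\tfrac\pi4\bbZ$ at least one of $HR_Z(2\theta)$ and $XHR_Z(2\theta)$ is a Bloch-sphere rotation by an irrational multiple of~$\pi$, and hence these two gates already densely generate $SU(2)$.  Your route via Lemma~\ref{lem:suffPHsup} is perfectly valid once the unitary gadget is fixed, but it reproduces the Section~\ref{sec:ProofOfMultHardness} argument rather than the self-contained MBQC one the appendix is advertising.
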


This is a weaker version of Lemma \ref{thm:multHardness}. We include it for pedagogical reasons, as it provides a different way of understanding the main theorem using MBQC techniques, and it includes a more detailed walkthrough of the hardness construction. Furthermore, it does not rely on the theorem that the Clifford group plus any non-Clifford element is universal; instead one can directly prove postselected universality by finding a qubit rotation by an irrational multiple of $\pi$.

As in the proof of Lemma \ref{thm:multHardness}, we will first show that CCCs can perform universal quantum computation (i.e., the class $\BQP$) under postselection. This first step will make extensive use of ideas from Measurement-Based Quantum Computation \cite{briegel2009measurement}. Next, we will show that these circuits can furthermore perform $\PostBQP$ under postselection. This extension requires the inverse-free Solovay-Kitaev theorem of Sardharwalla \emph{et al.} \cite{sardharwalla2016universal}.

\begin{theo}
Postselected CCCs can be used to simulate universal quantum computation under the choice of $U=R_Z(\theta)H$, and for any choice of $\theta$ other than integer multiples of $\pi/4$.
\end{theo}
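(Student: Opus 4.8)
The plan is to build, inside a postselected $U$-CCC with $U=R_Z(\theta)H$, a full simulation of an arbitrary polynomial-size $\BQP$ circuit. I work from the equivalent description of a $U$-CCC: start in $\ket{0}^{\otimes n}$, apply $U^{\otimes n}$, apply an arbitrary Clifford circuit, apply $(U^\dagger)^{\otimes n}$, and measure in the computational basis. For $U=R_Z(\theta)H$ this hands us three ``free'' ingredients: (a) fresh ancillas in the state $R_Z(\theta)\ket{+}=U\ket{0}$; (b) arbitrary Clifford gates in the middle; and (c) for each qubit a native measurement which, because of the trailing $U^\dagger = HR_Z(-\theta)$, is a measurement in the basis $\{R_Z(\theta)\ket{+},R_Z(\theta)\ket{-}\}$, and on which postselection lets us force either outcome. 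The key point exploited throughout is that the $R_Z(\theta)$ sitting in the input state and the one sitting in the measurement basis do \emph{not} simply cancel once non-diagonal Cliffords are interleaved, and the non-Clifford power of the model is harvested from that mismatch. Conceptually this is exactly measurement-based quantum computation on a cluster state built by $CZ$'s from the $R_Z(\theta)\ket{+}$ qubits and measured in the rotated basis, with the usual Pauli byproduct operators removed by postselection; I will phrase it as a short list of postselection gadgets for concreteness.

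First I would establish a toolkit of gadgets. (i) An effective computational-basis measurement with chosen outcome on any qubit: bring in a fresh ancilla in $R_Z(\theta)\ket{+}$, apply $CZ$ between the target and the ancilla, and postselect the ancilla's native $\{R_Z(\theta)\ket{\pm}\}$ measurement; since $CZ$ commutes with $R_Z\otimes R_Z$, a one-line calculation shows the rotation factors cancel against the measurement and the target is projected onto $\ket{0}$ (resp.\ $\ket{1}$) when the ancilla is postselected onto $R_Z(\theta)\ket{+}$ (resp.\ $R_Z(\theta)\ket{-}$); used without postselection at the end of the circuit, the same gadget reads out the target in the computational basis for free. (ii) Composing (i) with Hadamards gives effective fresh preparation of $\ket{0},\ket{1},\ket{+}$. (iii) A magic-state-injection gadget for $R_Z(\theta)$: put the data qubit in $\ket{\psi}$ and a fresh ancilla in the magic state $R_Z(\theta)\ket{+}$, apply $\mathrm{CNOT}$ from data to ancilla, then apply gadget (i) to read the ancilla in the computational basis postselected on $0$; the standard magic-state calculation yields $R_Z(\theta)\ket{\psi}$ on the data qubit, with no Clifford correction because we postselect the favourable outcome.

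Next I would show that single-qubit Clifford gates together with the single fixed rotation $R_Z(\theta)$ form a universal gate set, using only an elementary density argument rather than the theorem that the Clifford group plus any non-Clifford gate is universal. The gate set is closed under inverses since $X R_Z(\theta) X = R_Z(-\theta)$, and $\mathrm{CNOT}$ is Clifford, so it suffices to prove $\langle\,\text{single-qubit Cliffords},\,R_Z(\theta)\,\rangle$ is dense in $SU(2)$. This is where the hypothesis $\theta\notin\tfrac\pi4\bbZ$ enters: after ruling out the finitely many degenerate rational angles (using Niven's theorem on the rational values taken by cosine at rational multiples of $\pi$), one can compose $R_Z(\theta)$ with Cliffords — for instance form $R_Z(\theta)$ and its Clifford conjugate $R_X(\theta)=HR_Z(\theta)H$, or the product $HR_Z(\theta)$ — to obtain a rotation by an \emph{irrational} multiple of $\pi$ about an axis not fixed by the Clifford rotations, and the standard fact that two such rotations about distinct axes generate a dense subgroup of $SO(3)$ then finishes the argument. (The tight hypothesis $\theta\notin\tfrac\pi2\bbZ$ would in addition force us to handle the $T$-gate case and other rational-angle cases, i.e.\ essentially reprove special instances of ``Clifford $+$ non-Clifford $=$ universal'', which is exactly what this pedagogical proof is designed to avoid.) Combined with the Solovay–Kitaev theorem, polynomially many Clifford gates plus $R_Z(\theta)$-injection gadgets compile any desired one- and two-qubit gate to inverse-exponential accuracy with polynomial overhead.

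Finally I would assemble the simulation: given a $\BQP$ circuit $W$ on $m$ qubits, compile $W$ into Clifford gates plus $R_Z(\theta)$'s by the previous step, and realize it as a postselected $U$-CCC on $m+\mathrm{poly}(m)$ qubits by (a) resetting the $m$ working registers from $R_Z(\theta)\ket{+}$ to $\ket{0}$ with gadget (i), (b) performing $W$'s Clifford gates natively, (c) performing each $R_Z(\theta)$ with gadget (iii), and (d) reading out the output registers with gadget (i) (without postselection). Only inverse-exponentially many postselections occur, each succeeding with constant probability, so the overall postselection probability is inverse-exponential, which is harmless for a $\PostBQP$-type argument; hence postselected $U$-CCCs simulate universal quantum computation whenever $\theta\notin\tfrac\pi4\bbZ$. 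The hard part will be step three — identifying precisely which composite gate is an irrational rotation and verifying this for \emph{all} $\theta\notin\tfrac\pi4\bbZ$, i.e.\ excluding every degenerate rational angle — together with the routine-but-delicate bookkeeping of Pauli byproducts and postselected outcomes as they propagate through the gadgets.
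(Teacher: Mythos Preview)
Your proposal is correct and takes a genuinely different route from the paper's. Both arguments are MBQC-flavoured, but the specific gadgets and the universality step diverge.

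The paper chains one-bit teleportation gadgets (a $CZ$ onto a fresh $\ket{+}$-type ancilla followed by postselection on the data qubit) and then inserts a single extra $X$ in the middle to obtain the one-qubit gates $G_0=HR_Z(2\theta)$ and $G_1=XHR_Z(2\theta)$. The density step is then done \emph{explicitly}: the rotation angles $\phi_0,\phi_1$ of $G_0,G_1$ satisfy $\cos\phi_0+\cos\phi_1+\cos 0=0$, and the Crosby--W\l{}odarski classification of rational-angle solutions to $\sum_i\cos\alpha_i=0$ shows that for $\theta\notin\tfrac{\pi}{4}\bbZ$ at least one of $\phi_0,\phi_1$ is an irrational multiple of $\pi$. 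Universality on $SU(2)$ then follows from the subgroup classification, and $CZ$ supplies the entangling gate.

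You instead (i) build a computational-basis projector from $CZ$ plus the native $R_Z(\theta)\ket{\pm}$ measurement, and (ii) run ordinary magic-state injection (CNOT plus that projector) on the free resource state $R_Z(\theta)\ket{+}$ to inject $R_Z(\theta)$ itself. Both gadgets check out, and the composability inside a single Clifford block is fine since the ancillas are never touched again after their $CZ$/CNOT. Two things your route buys: the resulting gate set $\{\text{Cliffords},R_Z(\theta)\}$ is closed under inverses (via $XR_Z(\theta)X=R_Z(-\theta)$), so the ordinary Solovay--Kitaev theorem applies and you never need the inverse-free version the paper later invokes; and your injected angle is $\theta$ rather than $2\theta$, so in principle your argument covers $\theta\in\tfrac{\pi}{4}\bbZ_{odd}$ (the $T$-gate case) as well. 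What the paper's route buys is a \emph{complete} density proof: your sketch ``use Niven's theorem to find an irrational rotation'' is the right idea but Niven alone is not enough --- you will end up needing exactly the Crosby--W\l{}odarski-type argument the paper gives (or, equivalently, the finite-subgroup classification of $SU(2)$ together with the fact that the single-qubit Clifford group is already octahedral) to rule out all rational-angle $\theta\notin\tfrac{\pi}{4}\bbZ$ simultaneously. That is the only place your proposal is genuinely incomplete, and you correctly flag it as the hard part.
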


\begin{proof}

We will first describe the proof without reference to Measurement Based Quantum Computing (MBQC) so as to be understood by the broadest possible audience. We will then summarize the proof in MBQC language for those familiar with the area. 

Our proof will make use of four gadgets to show that under postselection, we can perform arbitrary 1-qubit gates in this model. For the first gadget, consider the following quantum circuit:

\begin{equation}\label{gadget:orig}\Qcircuit @C=1em @R=1em {
\lstick{\ket{\psi}} & \qw  & \ctrl{1}  & \gate{H} & \meter & \bra{0}   \\
\lstick{\ket{0}} & \gate{H} &  \control \qw  & \qw & \qw & \ket{\psi'} 
}
\end{equation}
Here the notation $\bra{0}$ denotes that we postselect that measurement outcome on obtaining the state $\ket{0}$, and the two-qubit gate is controlled-Z. This gadget performs teleportation \cite{gottesmanchuang1999}. One can easily calculate that $\ket{\psi'}=H\ket{\psi}$ -- in other words, this gadget performs the $H$ gate \cite{gottesmanchuang1999, BJS2010}. Likewise, if one postselects the first outcome to be $\ket{1}$, then the gate performed is $XH$. By chaining these gadgets together, one can perform any product of these operations. For instance, the following circuit performs $HXH$:
$$\Qcircuit @C=1em @R=1em {
\lstick{\ket{\psi}} & \qw  & \ctrl{1}  & \qw & \gate{H} & \meter & \bra{1}   \\
 \lstick{\ket{0}} & \gate{H} & \control \qw  &  \ctrl{1} & \gate{H} & \meter & \bra{0} \\
\lstick{\ket{0}} & \gate{H} & \qw  & \control \qw & \qw & \qw & \ket{\psi'} 
}
$$
The correctness follows from the fact that the order in which quantum measurements are taken is irrelevant. By stringing together $n$ of these, we can perform $n$ gates from the set $\{H, XH\}$. These generate a finite set of one-qubit gates which contain the Paulis. 

Now clearly circuits composed of these gadgets do not have the form of conjugated Clifford circuits with $U = R_Z(\theta) H$. But we can easily correct this by inserting $R_Z(\theta)$'s at the beginning of each line, and $R_Z(-\theta)$'s at the end of each line. 
\begin{equation}\label{Hgadgetfixed}\Qcircuit @C=1em @R=1em {
\lstick{\ket{\psi}} & \qw &  \gate{R_Z(\theta)} & \ctrl{1}   &   \gate{R_Z(-\theta)}& \gate{H} & \meter & \bra{0}   \\
\lstick{\ket{0}} & \gate{H} &  \gate{R_Z(\theta)}& \control\qw  & \gate{R_Z(-\theta)}  & \qw  & \qw & \ket{\psi'} 
}
\end{equation}
Clearly this is equivalent to our original gadget at the Z rotations commute through and cancel. Now the gadget has the property that
\begin{itemize}
\item Every input line begins with $R_Z(\theta)$, and every output line ends with $R_Z(-\theta)$. 
\item Every ancillary input begins with $\ket{0}$ then applies $R_Z(\theta)H$.
\item Every ancillary output applies $HR_Z(-\theta)$ and measures in the computational basis.
\item All gates in between are Clifford.
\end{itemize}
When composing such gadgets, the $R_Z(-\theta)$ at the end of each output line cancels with the $R_Z(\theta)$ at the beginning of each input line. Hence composing gadgets with the above properties will always form a CCC. For instance our prior circuit performing $HXH$ becomes
$$\Qcircuit @C=1em @R=1em {
\lstick{\ket{\psi}} &  \qw & \gate{R_Z(\theta)}  & \ctrl{1}  & \qw & \gate{R_Z(-\theta)}&\gate{H} & \meter & \bra{1}   \\
 \lstick{\ket{0}} & \gate{H} & \gate{R_Z(\theta)}& \control\qw  &  \ctrl{1} & \gate{R_Z(-\theta)}& \gate{H} & \meter & \bra{0} \\
\lstick{\ket{0}} & \gate{H} & \gate{R_Z(\theta)} & \qw  &\control\qw &\gate{R_Z(-\theta)}& \qw  & \qw & \ket{\psi'} 
}
$$

Thus, by simply replacing our input state $\ket{\psi}$ with the state $H\ket{0}$, and our output state with a Hadamard followed by measurement, this postselected circuit would be simulating the circuit which starts in the state $H\ket{0}$, applies $HXH$, then applies $H$ and measures. Furthermore, this state will have the form of a CCC. More generally, by stringing $n$ such gadgets together to form a CCC, clearly one can simulate any one-qubit quantum circuit where the initial state is $H\ket{0}$, one performs $n$ gates from the set $\{H,XH\}$, and then applies $H$ and measures. 

This allows us to simulate one-qubit gates from the set $\{H,XH\}$ with postselected CCC circuits. However, such gates are not universal for a single qubit. In order to show postselected CCCs can perform universal quantum computation, we will need to find a way to simulate all single qubit gates. To do so, we will consider adding features to our gadget.  So far the Clifford part of our CCCs are all commuting; let's consider adding a non-commuting one-qubit gate $X$ to make a new gadget:

\begin{equation}\label{xgadget}\Qcircuit @C=1em @R=1em {
\lstick{\ket{\psi}} & \qw &  \gate{R_Z(\theta)} & \ctrl{1}  &\gate{X} &   \gate{R_Z(-\theta)}& \gate{H} & \meter & \bra{0}   \\
\lstick{\ket{0}} & \gate{H} &  \gate{R_Z(\theta)}& \control\qw  &\qw & \gate{R_Z(-\theta)}  & \qw  & \qw & \ket{\psi'} 
}
\end{equation}
By commuting the $R_Z(\theta)$ rightwards on both lines, and noting that 
$$\Qcircuit @C=1em @R=1em {
& \qw &  \gate{R_Z(\theta)} & \gate{X} &   \gate{R_Z(-\theta)}&\qw 
}$$ 
is equivalent to
$$\Qcircuit @C=1em @R=1em {
& \qw &  \gate{R_Z(2\theta)} & \gate{X} &\qw 
}$$ 
we can see this performs the same quantum operation as 
\begin{equation*}\Qcircuit @C=1em @R=1em {
\lstick{\ket{\psi}} & \qw &  \qw & \ctrl{1}  &   \gate{R_Z(2\theta)} &\gate{X} & \gate{H} & \meter & \bra{0}   \\
\lstick{\ket{0}} & \gate{H} &  \qw& \control\qw  &\qw & \qw  & \qw  & \qw & \ket{\psi'} 
}
\end{equation*}
which since $HX=ZH$, is equivalent to
\begin{equation}\label{xgadgetrewritten}\Qcircuit @C=1em @R=1em {
\lstick{\ket{\psi}} & \qw &  \qw & \ctrl{1}  &   \gate{R_Z(2\theta)} &\gate{H} & \gate{Z} & \meter & \bra{0}   \\
\lstick{\ket{0}} & \gate{H} &  \qw& \control\qw  &\qw & \qw  & \qw  & \qw & \ket{\psi'} 
}
\end{equation}
By direct computation, gadget (\ref{xgadget}) (which is equivalent to gadget (\ref{xgadgetrewritten})) performs the operation $HR_Z(2\theta)$. Let us call this gate $G_0(\theta)$. Likewise, if one postselects on $\ket{1}$, one obtains the gate $G_1(\theta)=XHR_Z(2\theta)$. (This gadget is well-known in MBQC; see below). 

Therefore, by applying our gadgets (\ref{Hgadgetfixed}) and (\ref{xgadget}), we can create postselected CCCs to simulate the evolution of a one-qubit circuit which evolves by gates in the set $\{H,XH,G_0(\theta),G_1(\theta)\}$. Intuitively, as long as the choice of $\theta$ is not pathological, these gates will generate all one-qubit gates. Therefore we have all one-qubit gates at our disposal via these gadgets. We will prove this statement rigorously in Lemma \ref{lem:nobadtheta}, which we defer to the end of this appendix. In fact, we show that as long as $\theta$ is not set to $k\pi/4$ for some integer $k$, then the set of one qubit gates generated by these gadgets is universal on a qubit. 
Thus postselected CCC's (where $\theta\neq k\pi/4$) can simulate arbitrary one-qubit operations. 

To prove that postselected CCC's can perform universal quantum computation, we need to show how to perform an entangling two qubit gate. We can then appeal to the result of Brylinski \& Brylinski \cite{brylinski2002universal} and Bremner \emph{et al.} \cite{bremner2002practical} that any entangling two-qubit gate, plus the set of all-one-qubit gates, is universal for quantum computation. But performing entangling two-qubit gates is trivial in our setup, since the Clifford group (and the conjugated Clifford group) contains entangling two-qubit gates. For example, we can easily perform the controlled-Z gate between qubits with the following gadget:
\begin{equation}\label{czgadget}\Qcircuit @C=1em @R=1em {
 & \qw &  \gate{R_Z(\theta)} &\ctrl{1}  & \gate{R_Z(-\theta)} & \qw\\
& \qw & \gate{R_Z(\theta)} & \control\qw & \gate{R_Z(-\theta)} & \qw
}
\end{equation}

This gadget clearly has the correct form, and hence composes with the gadgets (\ref{Hgadgetfixed}) and \ref{xgadget} to form universal quantum circuits. This shows how to simulate $\BQP$ with postselected CCCs.

We can now recast this proof in the language of Measurement-Based Quantum Computing. Our result essentially follows from that fact that measuring graph states in the bases $HR_Z(2\theta)$ and $H$, combined with postselection, is universal for quantum computing. 
More formally, let $E$ be series of Controlled-Z operations that create a graph state out of $H^{\otimes n} \ket{0}^{\otimes n}$ (we will specify the cluster state later). Let $U=R_Z(\theta)H$ for some $\theta$ to be specified later. Then consider creating the CCC for the Clifford circuit $C=X^S E$, where the notation $X^S$ denotes that we apply an $X$ gate to some subset $S\subseteq [n]$ of the qubits. We have that
\begin{align*}
H^{\otimes n} R_Z(-\theta)^{\otimes n} X^{S} E R_Z(\theta)^{\otimes n} H^{\otimes n} \ket{0}^{\otimes n}
&= H^{\otimes n} R_Z(-\theta)^{\otimes n} X^{S}  R_Z(\theta)^{\otimes n} E H^{\otimes n} \ket{0}^{\otimes n}\\
&= H^{\otimes n} \left(XR_Z(2\theta)\right)^{S}  E H^{\otimes n} \ket{0}^{\otimes n} \\
&= H^{\otimes n} \left(XR_Z(2\theta)\right)^{S}  \ket{\mathrm{Cluster}} \\
&= \left(\left(Z H R_Z(2\theta)\right)^{S} \otimes H^{\bar{S}}\right)  \ket{\mathrm{Cluster}}  ,
\end{align*}
where the first equality follows from the fact that $R_Z$ and $E$ commute as they are both diagonal in the $Z$ basis, the second follows from the fact that on the lines without an $X$ the $R_Z(\theta) $ and the $R_Z(-\theta) $ cancel, and on the lines with an $X$ we have $R_Z(-\theta) X R_Z(\theta) = X R_Z(2\theta)$, the third follows from the fact that $E$ is constructed such that $E H^{\otimes n} \ket{0}^{\otimes n} = \ket{\mathrm{Cluster}}$, and the fourth from the fact that $HX=ZH$. Now since we're measuring in the Z basis at the end of the circuit, the last row of $Z$'s can be ignored, so the circuit is equivalent to:
\[ \left(\left(H R_Z(2\theta)\right)^{S} \otimes H^{\bar{S}}\right)  \ket{\mathrm{Cluster}} .  \]

Now we simply need to show that measurement based quantum computation with postselection on such a state is universal for quantum computing. In other words, we need to show that if we can construct a Cluster state and measure some qubits in the $H$ basis and others in the $HR_Z(2\theta)$ basis, and postselect on the outcomes, then we can perform universal quantum computation. It was previously known to be universal for MBQC if different $\theta$'s occur on each qubit \cite{briegel2009measurement}. In our setup we do not have this flexibility, but we instead have the additional ability to postselect.

Universality of this model follows from the fact that by preparing an appropriate Cluster state (using the standard trick to perform 1-qubit gates with MBQC), this gives us the ability to apply the one-qubit gate $H R_Z(2\theta)$ using postselection. Likewise, postselecting on $\ket{1}$ performs the operation $X H R_Z(2\theta)$. As discussed previously, by Lemma \ref{lem:nobadtheta}, as long as $\theta$ is not set to $k\pi/4$ for some integer $k$, this is a universal gate set on a qubit. The addition of entangling two-qubit operations on the Cluster state (namely, controlled-Z) boosts this model to universality.

\end{proof}

We have now shown that postselected CCCs can perform $\BQP$ under postselection. We now extend this to show they can perform $\PostBQP=\PP$ under postselection. This requires using the  inverse-free Solovay-Kitaev algorithm of \cite{sardharwalla2016universal}. From this, the hardness result follows via known techniques \cite{BJS2010, aaronsonboson}. 

\begin{theo}
Postselected CCCs with $U=R_Z(\theta)H$ can decide any language in $\PostBQP=\PP$, for any choice of $\theta$ other than integer multiples of $\pi/4$.
\end{theo}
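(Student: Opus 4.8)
The plan is to upgrade the preceding theorem --- that postselected CCCs with $U = R_Z(\theta)H$ (for $\theta \notin \tfrac{\pi}{4}\bbZ$) simulate all of $\BQP$ --- to the statement that they simulate all of $\PostBQP = \PP$, following the template of \cite{aaronson2005quantum, boulandccc2016}. Recall that gadgets (\ref{Hgadgetfixed}), (\ref{xgadget}) and (\ref{czgadget}) let a postselected CCC apply, on its logical qubits, any element of the group generated by $\{H, XH, G_0(\theta), G_1(\theta)\}$ together with the entangling gate $CZ$, and by the gadget design these compose again into a CCC. By Lemma \ref{lem:nobadtheta} this one-qubit group is universal on a qubit, and it contains the Paulis (e.g.\ $HXH = Z$ and $(XH)H = X$). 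Thus postselected CCCs realize a universal gate set \emph{containing the Paulis}, in which moreover every gadget succeeds with a fixed constant postselection probability.

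First I would recall the structure of Aaronson's proof that $\PP = \PostBQP$: every language in $\PP$ is decided by a uniform family of postselected circuits over a fixed finite universal gate set (say $\{$CCZ, controlled-$H$, one-qubit gates$\}$) in which the postselection succeeds with only inverse-exponential probability. The key consequence is that compiling each gate of such a circuit to inverse-exponential accuracy suffices: the accumulated error stays negligible compared to the postselection success probability, so the $2/3$ versus $1/3$ acceptance gap survives.

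The second step is to compile these circuits using our gadgets. The technical obstacle, and the point that separates this argument from the earlier sampling-hardness proofs, is that the set of operations realized by the gadgets is \emph{not} closed under inversion, so the ordinary Solovay--Kitaev theorem cannot be invoked to change gate sets. Here I would use the inverse-free Solovay--Kitaev theorem of Sardharwalla \emph{et al.} \cite{sardharwalla2016universal}: since the gadget gate set is universal on a qubit and contains the Paulis, it compiles any target one-qubit unitary to accuracy $\epsilon$ using $\polylog(1/\epsilon)$ gadgets, so inverse-exponential accuracy costs only polynomial overhead. Together with the exact $CZ$ gadget this lets a postselected CCC compile any poly-size quantum circuit --- in particular Aaronson's $\PP$-deciding circuits --- to inverse-exponential accuracy. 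One bookkeeping point remains: a CCC begins and ends in the $U$-rotated basis rather than the computational basis, so to simulate a circuit $A$ that starts and measures in the computational basis the gadget sequence must realize $U^{\otimes n} A (U^\dagger)^{\otimes n}$; since $U$ is a single-qubit gate this is still poly-size, and approximating the extra $U^{\otimes n}$ and $(U^\dagger)^{\otimes n}$ layers contributes only inverse-exponential additional error. All the gadget postselections, together with the postselections internal to $A$, compose into a single postselection event of inverse-exponential (hence positive, and adequate) probability.

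Assembling these pieces shows that every language in $\PP = \PostBQP$ is decided by a uniform family of postselected CCCs with $U = R_Z(\theta)H$, $\theta \notin \tfrac{\pi}{4}\bbZ$. I expect the main obstacle to be precisely the non-invertibility of the gadget operations --- resolved by \cite{sardharwalla2016universal} --- together with the accompanying error analysis verifying that inverse-exponential compilation accuracy is compatible with Aaronson's inverse-exponential postselection probability.
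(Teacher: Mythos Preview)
Your proposal is correct and follows essentially the same approach as the paper: invoke Aaronson's $\PostBQP=\PP$ circuits, observe that the gadget gate set is universal on a qubit and contains the Paulis but is not inverse-closed, and then use the inverse-free Solovay--Kitaev theorem of Sardharwalla \emph{et al.} to compile to inverse-exponential accuracy. The paper spells out one step you leave implicit---that Toffoli and controlled-$H$ decompose exactly into a constant number of $CZ$ gates and one-qubit unitaries (via the standard controlled-$V$ construction and $\mathrm{CNOT}=(I\otimes H)\,CZ\,(I\otimes H)$)---so that the inverse-free Solovay--Kitaev on single-qubit gates together with the exact $CZ$ gadget suffices; your basis-change bookkeeping about simulating $U^{\otimes n} A (U^\dagger)^{\otimes n}$ is a point the paper handles only in a footnote elsewhere.
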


\begin{proof}

To prove this, we will apply Aaronson's result that Postselected $\BQP$ circuits, denoted $\PostBQP$, can decide any language in $\PP$. Aaronson's proof works by showing that a particular universal quantum gate set -- namely the gate set $G$ consisting of Toffoli, controlled-Hadamard, and one qubit gates -- can decide $\PP$ under postselection. 

We previously showed that our postselected CCCs can perform a different universal quantum gate $G'$ consisting of controlled-Z, $HR_Z(2\theta)$, $XHR_Z(2\theta)$, $H$ and $XH$. Therefore, in order to show that postselected CCCs can compute $\PP$, we need to show how to simulate Aaronson's gate set $G$ using our gate set $G'$.

One difficulty is that we must be extremely accurate in our simulation of these gates. This is because postselected quantum circuits may postselect on exponentially tiny events. Therefore, in order to simulate Aaronson's postselected circuits for $\PP$, we will need to simulate each gate to inverse exponential accuracy. 

Normally in quantum computing this simulation is handled by the Solovay-Kitaev Theorem, which roughly states that any universal gate set can simulate any other universal gate set to error $\epsilon$ with only $\mathrm{polylog}(1/\epsilon)$ overhead. Therefore with polynomial overhead, one can obtain inverse exponential accuracy in the simulation. This is why the choice of gate set is irrelevant in the definition of $\PostBQP$. One catch, however, is that the Solovay-Kitaev theorem requires that the gate set is closed under inversion, i.e. for any gate $g\in G$, we have $g^{-1}\in G$ as well. This is an essential part of the construction of this theorem (which makes use of group commutators). It is an open problem to remove this requirement \cite{dawson2006solovay,kuperberg2009hard}. As a corollary, it is open whether or not the class $\PostBQP$ can still compute all languages in $\PP$ if the gate set used is not closed under inversion. It is possible the class could be weaker with non-inversion-closed gate sets.

Unfortunately, the gate set $G'$ we have at our disposal is not closed under inversion. Furthermore, since we obtained the gates using postselection gadgets, it is not clear how to generate the inverses of the gadgets, as postselection is a non-reversible operation. Therefore we cannot appeal to the Solovay-Kitaev theorem to show we can compute languages in $\PP$. 

Fortunately, however, even though our gate set does not have inverses, it does have a special property -- namely, our set of one qubit gates contains the Pauli group. It turns out that recently,  \cite{sardharwalla2016universal} proved a Solovay-Kitaev theorem for any set of one qubit gates containing the Paulis, but which is not necessarily closed under inversion. Therefore, by this result, even though our gate set is not closed under inversion, we can still apply any one-qubit gate to inverse exponential accuracy with merely polynomial overhead. So we can apply arbitrary one-qubit gates.

It turns out this is sufficient to apply gates from Aaronson's gate set $G$ consisting of Toffoli, controlled-H and one qubit gates with inverse exponential accuracy. To see this, first not that it is well-known one can construct controlled-V operations for arbitrary one-qubit gates V using a finite circuit of controlled-NOT and one-qubit gates -- see \cite{nielsenchuang} for details.  Furthermore, it is possible to construct Toffoli using a finite circuit of one qubit gates and controlled-V operations  \cite{nielsenchuang}. This, together with the fact that controlled-NOT is equal to controlled-Z conjugated by Hadamard on one qubit, shows that each gate in $G$ has an exact decomposition as a finite number of controlled-Z gates and one-qubit gates. Hence, using controlled-Z gates and one-qubit gates compiled to exponential accuracy, one can obtain circuits from $G$ with inverse exponential accuracy. Thus, our gate set $G'$ can efficiently simulate gates from $G$, and hence our postselected CCCs can compute all languages in $\PostBQP=\PP$ as well.

\end{proof}

From this, the hardness result follows via known techniques \cite{BJS2010, aaronsonboson}. 

\begin{cor}
Conjugated Clifford circuits cannot be weakly simulated classically to multiplicative error unless the polynomial hierarchy collapses to the third level, for the choice of $U=R_Z(\theta)H$ for any $\theta$ which is not an integer multiple of $\pi/4$.
\end{cor}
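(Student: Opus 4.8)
The plan is to run the standard postselection-collapse argument of Bremner--Jozsa--Shepherd~\cite{BJS2010} and Aaronson--Arkhipov~\cite{aaronsonboson} --- essentially verbatim the argument in the proof of Lemma~\ref{lem:suffPHsup} --- but now fed by the theorem just established (postselected $U$-CCCs with $U=R_Z(\theta)H$, $\theta\notin\tfrac{\pi}{4}\bbZ$, decide $\PostBQP=\PP$) in place of a single postselection gadget. So I would begin by supposing, for contradiction, that some classical randomized polynomial-time algorithm, on input a description of a $U$-CCC, samples from a distribution $\mathcal Q=\{q_z\}$ that is a multiplicative $\epsilon$-approximation (for a constant $\epsilon<1$) of the true output distribution $\mathcal P=\{p_z\}$, i.e.\ $|q_z-p_z|\le\epsilon\,p_z$ for every $z$.

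The first step is the observation that multiplicative closeness survives postselection: if $\mathcal Q$ is a multiplicative $\epsilon$-approximation of $\mathcal P$, then for any event $E$ of positive $\mathcal P$-probability, $\mathcal Q(\cdot\mid E)$ is a multiplicative $O(\epsilon)$-approximation of $\mathcal P(\cdot\mid E)$, since the numerator of each conditional probability and the normalizing denominator each rescale by a factor in $[1-\epsilon,1+\epsilon]$. Hence postselecting the classical sampler on the postselection register of the circuit from the preceding theorem reproduces, up to multiplicative error, the output distribution of the postselected $U$-CCC. Concretely, fix a $\PP$-complete language $L$ together with the $U$-CCC family, designated postselection string, and answer bit guaranteed by that theorem; after standard gap amplification the conditional acceptance probability is $>1-2^{-n}$ when $x\in L$ and $<2^{-n}$ when $x\notin L$. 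Both $\Pr[\text{postselection succeeds}]$ and $\Pr[\text{postselection succeeds and accept}]$ are marginal output probabilities of a $U$-CCC, so the hypothetical simulator's own output probabilities approximate them within a factor $1\pm\epsilon$.

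The second step is Stockmeyer approximate counting~\cite{stockmeyer}: the output probabilities of a fixed polynomial-time randomized algorithm can be estimated to multiplicative error $1/\poly(n)$ in $\mathsf{F}\BPP^{\NP}$. Applying this to the two probabilities above and taking their ratio yields, in $\BPP^{\NP}$, an estimate of the conditional acceptance probability good to a multiplicative factor $1+o(1)$; since the gap has been amplified to $1-2^{-n}$ versus $2^{-n}$, this decides $L$. Thus $\PP\subseteq\BPP^{\NP}$, which by Toda's theorem~\cite{Toda91} ($\PH\subseteq\PTIME^{\PP}$) collapses $\PH$ to its third level. (As in Lemma~\ref{lem:suffPHsup}, one could sharpen this to the second level via Fujii \emph{et al.}~\cite{fujii2014impossibility}; the third-level collapse is all the corollary claims.)

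I do not expect a genuine obstacle here: the substantive content --- postselected universality and $\PostBQP=\PP$ for this gate set --- has already been proved, and this corollary is the routine conversion of that into a non-simulability statement. The two points that require care are bookkeeping rather than mathematics. First, one must verify that the entire postselected construction of the preceding theorem, \emph{including} the one-qubit gates introduced by the inverse-free Solovay--Kitaev compilation of~\cite{sardharwalla2016universal}, is genuinely presented as a single $U$-CCC for the fixed $U=R_Z(\theta)H$ --- which is exactly what the $R_Z(\pm\theta)$-padding and gadget-composition discipline in that proof guarantees --- so that the simulation hypothesis actually applies to it. Second, since $\PostBQP$ postselections can succeed with only exponentially small probability, the two estimated probabilities may themselves be exponentially small, so it is essential that the simulation error be \emph{multiplicative} (a constant) rather than additive and that Stockmeyer's bound be multiplicative; with additive error the argument fails, which is precisely why the stronger additive-error hardness of Section~\ref{sec:additive} needs an extra average-case conjecture.
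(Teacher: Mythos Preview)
Your proposal is correct and is essentially the same approach the paper takes: the paper states only that ``the hardness result follows via known techniques \cite{BJS2010, aaronsonboson}'' (i.e., exactly the Stockmeyer-plus-Toda argument you have spelled out), relying on the preceding theorem that postselected $U$-CCCs decide $\PP$. Your expanded write-up, including the two bookkeeping caveats (that the compiled circuit is genuinely a $U$-CCC and that multiplicative error is preserved under postselection despite exponentially small success probability), is accurate and more detailed than what the paper itself provides.
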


To complete our proof, we merely need to show the following lemma:

\begin{lemma}\label{lem:nobadtheta} So long as $\theta$ is not an integer multiple of $\pi/4$, then the gates $H R_Z(2\theta)$ and $X H R_Z(2\theta)$ are universal on a qubit. Furthermore, so long as $\theta$ is not an integer multiple of $\pi/4$, then at least one of these gates is a rotation of the Bloch sphere by an irrational multiple of $\pi$.
\end{lemma}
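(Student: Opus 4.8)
The plan is to follow the usual two-step route to one-qubit universality: first exhibit an element of the generated group that acts on the Bloch sphere as a rotation by an irrational multiple of $\pi$, then combine it with a rotation about a different axis and invoke the classification of closed subgroups of $SO(3)$. Write $A = H R_Z(2\theta)$ and $B = X H R_Z(2\theta)$. Since $B = XA$, we have $X = BA^{-1}$, so $\langle A,B\rangle = \langle A,X\rangle = \langle B,X\rangle$, where $X$ acts on the Bloch sphere as the rotation by $\pi$ about $\hat x$ (i.e.\ $(x,y,z)\mapsto(x,-y,-z)$). Because $\det A = -1$ and $\det B = 1$, the matrices $iA$ and $B$ lie in $SU(2)$, and a direct calculation gives $\operatorname{tr}(iA) = \sqrt2\sin\theta$ and $\operatorname{tr}(B) = \sqrt2\cos\theta$. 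Hence, if $\alpha,\beta$ are the Bloch rotation angles of $A,B$, then $2\cos(\alpha/2)=\sqrt2\sin\theta$ and $2\cos(\beta/2)=\sqrt2\cos\theta$, equivalently (using $\cos\psi = 2\cos^2(\psi/2)-1$) $\cos\alpha = -\cos^2\theta$ and $\cos\beta = -\sin^2\theta$. The same computation gives the rotation axes $\hat n_A \propto (\cos\theta,-\sin\theta,\cos\theta)$ and $\hat n_B \propto (\sin\theta,\cos\theta,\sin\theta)$.

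For the ``furthermore'' statement, suppose toward a contradiction that neither $A$ nor $B$ is a rotation by an irrational multiple of $\pi$, i.e.\ $\alpha,\beta \in \pi\bbQ$. Then $\cos\alpha$ and $\cos\beta$ are both cosines of rational multiples of $\pi$, and adding the two identities above gives $\cos\alpha + \cos\beta = -(\cos^2\theta+\sin^2\theta) = -1$. By Niven's theorem together with the Conway--Jones classification of rational linear combinations of cosines of rational angles, the only possibilities (up to relabelling) are $\{\cos\alpha,\cos\beta\}=\{0,-1\}$ or $\{-\tfrac12,-\tfrac12\}$. Substituting back through $\cos\alpha=-\cos^2\theta$, $\cos\beta=-\sin^2\theta$ forces $\cos^2\theta \in \{0,\tfrac12,1\}$, i.e.\ $\cos 2\theta\in\{0,\pm1\}$, i.e.\ $\theta\in\tfrac\pi4\bbZ$ --- contradicting the hypothesis. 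So at least one of $A,B$ acts as a rotation by an irrational multiple of $\pi$.

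For universality, assume $\theta\notin\tfrac\pi4\bbZ$ and pick $R\in\{A,B\}$ that is an irrational rotation, with axis $\hat n$. An irrational rotation generates a dense subgroup of its own circle, so the closure of $\langle A,B\rangle = \langle A,X\rangle = \langle B,X\rangle$ contains the full circle subgroup $SO(2)_{\hat n}$ and its conjugate $X\,SO(2)_{\hat n}\,X^{-1} = SO(2)_{X\hat n}$. Using the axis formulas above and $X\colon(x,y,z)\mapsto(x,-y,-z)$, one checks that $X\hat n_A = \pm\hat n_A$ forces $\cos\theta=0$ and $X\hat n_B = \pm\hat n_B$ forces $\sin\theta=0$; in either case $\theta\in\tfrac\pi4\bbZ$, which is excluded. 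Hence $SO(2)_{\hat n}$ and $SO(2)_{X\hat n}$ are distinct circle subgroups with non-antipodal axes, and by the classification of closed subgroups of $SO(3)$ any closed subgroup containing two such circles equals $SO(3)$. Therefore $\langle A,B\rangle$ is dense in the single-qubit unitaries modulo global phase, i.e.\ universal on a qubit.

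The main obstacle is the number-theoretic input in the second paragraph: ruling out the possibility that both $A$ and $B$ are rational rotations, which reduces to showing that no two cosines of rational multiples of $\pi$ lying in $[-1,0]$ can sum to $-1$ other than in the two trivial ways. The sub-case where both cosines are rational is immediate from Niven's theorem, but excluding the irrational-algebraic solutions (for instance those in $\bbQ(\sqrt5)$, where $-\cos(4\pi/5)$ arises as $\cos^2\theta$) genuinely needs the Conway--Jones results on vanishing sums of roots of unity. Everything else --- the trace and axis computations, the reduction $\langle A,B\rangle = \langle A,X\rangle$, and the $SO(3)$ density argument --- is routine.
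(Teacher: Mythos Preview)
Your proof is correct and follows essentially the same route as the paper's: both reduce the irrationality claim to the identity $\cos\alpha+\cos\beta+1=0$ with $\alpha,\beta\in\pi\bbQ$ and then invoke the classification of such sums (you cite Niven and Conway--Jones, the paper cites Crosby and W\l{}odarski for the equivalent three-term result) to force $\theta\in\tfrac\pi4\bbZ$. The one noteworthy variation is in the universality step: the paper rules out the $U(1)\times\bbZ_2$ case by checking that neither gate is a $\pi$-rotation, whereas you first observe $\langle A,B\rangle=\langle A,X\rangle$ and then conjugate the irrational circle by $X$ to exhibit a second distinct circle subgroup --- this is arguably tidier, since it makes the axis-distinctness check (which the paper leaves implicit) explicit.
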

\begin{proof}
For convenience of notation, define $G_0 = -iH R_Z(2\theta)$ and $G_1= -X H R_Z(2\theta)$. We will actually begin by proving something stronger: namely, that as long as $\theta$ is not an integer multiple of $\pi/4$, then one of the rotations $G_0$ and $G_1$ is by an irrational multiple of $\pi$.

We will prove this by contradiction. Suppose that both $G_0$ and $G_1$ are rotations by rational multiples of $\pi$, call their rotation angles $\phi_0$ and $\phi_1$, respectively. By direct computation, first eigenvalue of $G_0$ is given by 
\[\frac{1}{2\sqrt{2}} \left(  -2\sin (\theta) - i\sqrt{6+2\cos(2\theta)} \right). \]
Since this must be equal to $e^{\pm i\phi_0/2}$, and by considering the real part of this equation, we have that
\begin{equation}\label{eq:g0angle}
\cos(\phi_0/2) = -\frac{\sin (\theta)}{\sqrt{2}}.
\end{equation}
By an identical argument, for gate $G_1$ we have that
\begin{equation}\label{eq:g1angle}
\cos(\phi_1/2) = - \frac{\cos (\theta)}{\sqrt{2}}.
\end{equation}
Squaring these terms and summing them, we obtain that
\[\cos^2(\phi_0/2) +\cos^2(\phi_1/2) = \frac{1}{2}. \]
Or, applying the fact $\cos^2 t = \frac{1+\cos 2t }{2}$ and simplifying, one can see this is equivalent to
\[\cos (\phi_0) + \cos(\phi_1) +\cos(0)= 0. \]
Since we are assuming by way of contradiction that $G_0,G_1$ are of finite order, we are assuming that $\phi_0,\phi_1$ are rational multiples of $\pi$. Previously, Crosby \cite{crosby1946} and W\l{}odarski \cite{wlodarski} classified all possible solutions to the equation $\cos(\alpha_1)+\cos(\alpha_2)+\cos(\alpha_3)$ where each $\alpha_i$ are rational multiples of $\pi$. The four possible solution families to this equation (assuming without loss of generality that $0\leq\alpha_i\leq\pi$) are \cite{wlodarski}
\begin{itemize}
\item $\left\{\beta,\pi-\beta,\pi/2\right\}$ where $0\leq\beta\leq\pi$
\item $\left\{\delta, \frac{2\pi}{3}-\delta,\frac{2\pi}{3}+\delta \right\}$ where $0\leq\delta\leq \frac{\pi}{3} $
\item $\left\{\frac{2\pi}{5},\frac{4\pi}{5},\frac{\pi}{3}\right\}$
\item $\left\{\frac{\pi}{5},\frac{3\pi}{5},\frac{2\pi}{3}\right\} .$
\end{itemize}

Since we have that one of our three angles is $0$, the latter two cases are immediately ruled out, and we must have that the angles $\{\phi_0,\phi_1\}$ are either $\{\pi/2,\pi\}$ or $\{2\pi/3,2\pi/3\}$. One can easily see that the first solution corresponds to $\theta=k\pi/2$ for an integer $k$, and the second solution corresponds to $\theta = k\pi/4$ for an odd integer $k$.  

Therefore, so long as $\theta$ is not an integer multiple of $\pi/4$, we have a contradiction, as there are no further solutions to these equations where the $\phi_i$ are rational multiples of $\pi$. So if $\theta$ is set to any value other than $k\pi/4$ for an integer $k$, we have that at least one of the gates $G_0$ and $G_1$ is a rotation by an irrational multiple of $\pi$.

Now what remains to be shown is that the gates $G_0$ and $G_1$ are universal in the general case. This can be shown easily by the classification of continuous subgroups of $SU(2)$. The continuous subgroups of $SU(2)$ are $U(1)$ (corresponding to all rotations about one axis), $U(1)\times \mathbb{Z}_2$ (corresponding to all rotations about an axis $a$, plus a rotation by $\pi$ through another axis perpendicular to $a$), and $SU(2)$.
By our prior result we know that either $G_0$ or $G_1$ generates all rotations about its axis of rotation on the Bloch sphere. Therefore, if we can show that neither $G_0$ nor $G_1$ are rotations by angle $\pi$ we are done, as these then must generate all of $SU(2)$. However this follows immediately from equations \ref{eq:g0angle} and \ref{eq:g1angle}, since these equations imply that we can have either $\phi_0=\pi$ or $\phi_1=\pi$ only when $\theta$ is a rational multiple of $\pi/2$. Hence, as long as $\theta$ is not a rational multiple of $\pi/4$, neither $G_0$ nor $G_1$ is a rotation by $\pi$, and furthermore one is a rotation by an irrational multiple of $\pi$. These gates generate a continuous group which is neither $U(1)$ nor $U(1)\times \mathbb{Z}_2$, and therefore by the above observation these generate all of $SU(2)$.

\end{proof}

\end{document}